\numberwithin{equation}{section} 
\newtheorem{definition}{Definition}[section]
\newtheorem{proposition}{Proposition}[section]
\newtheorem{lemma}{Lemma}[section]
\newtheorem{theorem}{Theorem}[section]
\newtheorem{corollary}{Corollary}[section]
\newtheorem*{remark}{Remark}
\newcommand{\trace}{{\rm{Tr}}_{\mathbb{F}_{q^2}/\mathbb{F}_3}}
\newcommand{\traced}{{\rm{Tr}}_{\mathbb{F}_{q^2}/\mathbb{F}_q}}
\newcommand{\norm}{{\rm{N}}_{\mathbb{F}_{q^2}/\mathbb{F}_q}}
\journal{Arxiv}
\begin{document}

\begin{frontmatter}



  \title{An in-depth study of the power function \(x^{q+2}\) over the finite field \(\mathbb{F}_{q^2}\): the differential, boomerang, and Walsh spectra, with an application to coding theory} 


  \author[1,2,3]{Sihem Mesnager}
  \ead{smesnager@univ-paris8.fr}

  \author[1,2]{Huawei Wu\corref{cor1}}
  \ead{wuhuawei1996@gmail.com}
  \cortext[cor1]{Corresponding author}

  \affiliation[1]{organization={Department of Mathematics},
    addressline={University of Paris VIII},
    city={F-93526 Saint-Denis},
    country={France}}

  \affiliation[2]{organization={Laboratory Analysis, Geometry and Applications, LAGA, University Sorbonne Paris Nord, CNRS, UMR 7539},
    addressline={F-93430 Villetaneuse},
    country={France}}

  \affiliation[3]{organization={Telecom Paris, Polytechnic Institute of Paris},
    addressline={91120 Palaiseau},
    country={France}}


  \begin{abstract}
    
Let \( q = p^m \), where \( p \) is an odd prime number and \( m \) is a positive integer. In this paper, we examine the finite field \( \mathbb{F}_{q^2} \), which consists of \( q^2 \) elements. We first present an alternative method to determine the differential spectrum of the power function \( f(x) = x^{q+2} \) on \( \mathbb{F}_{q^2} \), incorporating several key simplifications. This methodology provides a new proof of the results established by Man, Xia, Li, and Helleseth in Finite Fields and Their Applications 84 (2022), 102100, which not only completely determine the differential spectrum of $f$ but also facilitate the analysis of its boomerang uniformity.

Specifically, we determine the boomerang uniformity of \( f \) for the cases where \( q \equiv 1 \) or \( 3 \) (mod \( 6 \)), with the exception of the scenario where \( p = 5 \) and \( m \) is even. Furthermore, for \( p = 3 \), we investigate the value distribution of the Walsh spectrum of \( f \), demonstrating that it takes on only four distinct values. Using this result, we derive the weight distribution of a ternary cyclic code with four Hamming weights.  The article integrates refined mathematical techniques from algebraic number theory and the theory of finite fields, employing several ingredients, such as exponential sums, to explore the cryptographic analysis of functions over finite fields. They can be used to explore the differential/boomerang uniformity across a wider range of functions.
  \end{abstract}

  \begin{keyword}
    Power function \sep Differential uniformity \sep Differential spectrum \sep Boomerang uniformity \sep Walsh spectrum



  \end{keyword}

\end{frontmatter}



\section{Introduction}

Let \(\mathbb{F}_{q}\) be the finite field with \(q\) elements, where \(q = p^m\), with \(p\) as a prime number and \(m\) as a positive integer. For any function \(f: \mathbb{F}_q \rightarrow \mathbb{F}_q\) and any element \(a \in \mathbb{F}_q\), we define the derivative of \(f\) at \(a\) as 
\[
D_a f(x) = f(x + a) - f(x), \quad x \in \mathbb{F}_q.
\]
For any \(a, b \in \mathbb{F}_q\), we use \(\delta_f(a, b)\) to denote the number of preimages of \(b\) under \(D_a f\), i.e., 
\[
\delta_f(a, b) = \#\{ x \in \mathbb{F}_q:\ D_a f(x) = b \}.
\]
The differential uniformity of \(f\) is defined as 
\[
\delta_f = \max\limits_{\substack{a \in \mathbb{F}_q^* \\ b \in \mathbb{F}_q}} \delta_f(a, b),
\]
which measures \(f\)'s ability to resist differential attacks when used as a substitution box (S-box) in a cipher. A smaller differential uniformity indicates stronger resistance of the corresponding S-box. Functions that achieve the minimum differential uniformity of \(1\) are known as perfect nonlinear (PN) functions, which only exist in fields with odd characteristics. Functions with a differential uniformity of \(2\) are called almost perfect nonlinear (APN) functions, which represent the minimum value for fields with even characteristics. For further properties and applications of PN and APN functions, readers can refer to \cite{carlet2021boolean}, \cite{coulter1999class} and \cite{wu2016boolean}.

When studying the differential properties of a function $f$, knowing its differential uniformity alone often does not suffice; we also want to know the specific distribution of the values $\delta_f(a,b)$ $(a\in\mathbb{F}_q^*$, $b\in\mathbb{F}_q$). For any $0\le i\le\delta_f$, let
$$\omega_i=\#\{(a,b)\in\mathbb{F}_q^*\times\mathbb{F}_q:\ \delta_f(a,b)=i\}.$$
The differential spectrum of $f$ is defined as the following multiset
$${\rm{DS}}_f=\{\omega_i:\ 0\le i\le\delta_f\}.$$
The differential spectrum of a nonlinear function is not only crucial in symmetric cryptography but also finds broad applications in sequences \cite{dobbertin2001ternary}, coding theory \cite{blondeau2010differential, charpin2019differential}, and combinatorial design theory \cite{tang2019codes}. 

Power functions with low differential uniformity are excellent candidates for designing S-boxes due to their strong resistance to differential attacks and typically low hardware implementation costs. If we consider a function \( f(x) = x^d \) for some integer \( d \), it becomes evident that \( \delta_f(a,b) = \delta_f(1,\frac{b}{a^d}) \) for any \( a \in \mathbb{F}_q^* \) and \( b \in \mathbb{F}_q \). This means that to study the differential properties of \( f \), we only need to focus on the values \( \delta_f(1,b) \) where \( b \in \mathbb{F}_q \). Thus, for a power function \( f \) defined over \( \mathbb{F}_q \), we can define its differential spectrum as:
\[
{\rm{DS}}_f = \{\omega_i:\ 0 \leq i \leq \delta_f\},
\]
where \( \omega_i = \#\{b \in \mathbb{F}_q:\ \delta_f(1,b) = i\} \). In the subsequent section, we will adhere to this definition. We have the following fundamental property of the differential spectrum (see \cite{blondeau2010differential}):
\begin{equation}\label{20240622equation1}
  \sum\limits_{i=0}^{\delta_f}\omega_i=\sum\limits_{i=0}^{\delta_f}i\omega_i=q.
\end{equation}

A power function \( f \) over \( \mathbb{F}_q \) is said to be locally-APN (locally almost perfect nonlinear) if
\[
\max\{\delta_f(1,b):\ b \in \mathbb{F}_{q} \setminus \mathbb{F}_p\} = 2.
\]
Blondeau and Nyberg introduced the concept of locally-APNness for the case where \( p = 2 \). They demonstrated that a locally-APN S-box could produce smaller differential probabilities than others with a differential uniformity of $4$, using a cryptographic toy example \cite{blondeau2015perfect}. 

Determining the differential spectrum of a power function can be challenging. Significant research has been conducted on this topic, which we summarize in Table \ref{table1}.

\begin{table}[h!]
  \centering
  \caption{Power functions $f(x)=x^d$ over $\mathbb{F}_{p^n}$ with known differential spectrum where $p$ is odd\ (quotes in the table indicate omitted content due to length)\\}
  \label{table1}
  \begin{tabular}{llllc}
    \toprule
    $p$     & $d$                                   & Condition                         & $\delta_f$                    & References                                                               \\
    \midrule
    $3$     & $2\cdot 3^{\frac{n-1}{2}}+1$          & $n$ odd $>1$                      & $4$                            & \cite{dobbertin2001ternary}

    \\
    \hline
    $3$     & $\frac{3^n+3}{2}$                     & $n$ odd $>1$                      & $4$                            & \cite{jiang2022differential}                                             \\
    \hline
    $5$     & $\frac{5^n+3}{2}$                     & any $n$                           & $3$                            & \cite{pang2023differential}                                              \\
    \hline
    $5$     & $\frac{5^n-3}{2}$                     & any $n$                           & $4$ or $5$                     & \cite{yan2021differential}                                               \\
    \hline
    $p$ odd & $\frac{p^n+3}{2}$                     & \makecell[l]{$p\ge 5$,                                                                                                                        \\$p^n\equiv 1\pmod 4$}&$3$ &\cite{sha2022differential}\\
    \hline
    $p$ odd & $\frac{p^n+3}{2}$                     & \makecell[l]{
    $p^n\equiv 3\pmod 4$,                                                                                                                                                                           \\$p\ne 3$}&$2$ or $4$ &\cite{yan2023complete}\\
    \hline
    $p$ odd & $\frac{p^n-3}{2}$                     & \makecell[l]{$p^n\equiv 3\pmod 4$                                                                                                             \\$p^n>7$\\$p^n\ne 27$}             &  $2$ or $3$                                                 & \cite{yan2022class,yan2024class}                    \\
    \hline
    $p$ odd & $p^n-3$                               & any $n$                           & $1\le\delta(f)\le 5$           & \cite{xia2020differential,yan2022differential}
    \\
    \hline
    $p$ odd & $2p^{\frac{n}{2}}-1$                  & $n$ even                          & $p^{\frac{n}{2}}$              & \cite{yan2022note}                                                       \\

    \hline

    $p$ odd & $p^{\frac{n}{2}}+2$                   & \makecell[l]{$n$ even                                                                                                                      }&$2$, $4$ or $p^{\frac{n}{2}}$ &\cite{man2022differential}\\
    \hline
    $p$ odd & $\frac{(p^m+3)}{2}(p^m-1)$            & $n=2m$,\ $\cdots$                 & $p^m-2$                        & \cite{yan2022two}                                                        \\
    \hline
    $p$ odd & $\frac{p^k+1}{2}$                     & $e=\gcd(n,k)$                     & $\frac{p^e-1}{2}$ or $p^e+1$   & \cite{choi2013differential}                                              \\
    \hline
    $p$ odd & $p^{2k}-p^k+1$                        & \makecell[l]{$\gcd(n,k)=e$,                                                                                                                   \\$\frac{n}{e}$\ odd}&$p^e+1$&\cite{yan2019differential,leilei2021}\\
    \hline
    $p$ odd & $\frac{p^n+1}{p^m+1}+\frac{p^n-1}{2}$ & \makecell[l]{$p\equiv 3\pmod 4$,                                                                                                              \\$n$ odd,\\$m\mid n$}&$\frac{p^m+1}{2}$&\cite{choi2013differential}\\
    \hline
    any     & $p^n-2(=-1)$                          & any $n$                           & $\ \ \cdots$                   & \cite{blondeau2010differential,boura2018boomerang,jiang2022differential} \\
    \hline
    any     & $k(p^m-1)$                            & \makecell[l]{$n=2m$,                                                                                                                          \\ $\gcd(k,p^m+1)=1$}                      & $p^m-2$                                      & \cite{hu2023differential}            \\
    \bottomrule
  \end{tabular}
\end{table}

The boomerang attack, introduced by Wagner in \cite{wagner1999boomerang}, is a variation of the differential attack that leverages the differential properties of the upper and lower layers of block ciphers. The resistance of an S-box to boomerang attacks is measured using a concept known as boomerang uniformity. This concept was first introduced by Boura and Canteaut  in \cite{boura2018boomerang} for permutations over binary fields, and was later expanded to general functions over arbitrary finite fields by Li et al. in \cite{li2019new}.

The boomerang uniformity of a function \( f \) over \( \mathbb{F}_q \) is defined as:
\[
\beta_f = \max\limits_{a,b \in \mathbb{F}_q^*}\beta_f(a,b),
\]
where \( \beta_f(a,b) \) represents the number of solutions \( (x,y) \in \mathbb{F}_q^2 \) to the following system of equations:  
\begin{equation}\label{20241219equation1}
  \begin{cases}
  f(x)-f(y)=b,\\
  f(x+a)-f(y+a)=b.
\end{cases}
\end{equation}
Note that the system (\ref{20241219equation1}) is equivalent to the following one:
\begin{equation}
  \begin{cases}
    f(x)-f(y)=b,\\
    D_af(x)=D_a(y).
  \end{cases}
\end{equation}

We can define the boomerang spectrum of a function \( f \) over $\mathbb{F}_q$ as the following multiset:
\[
{\rm{BS}}_f = \{\nu_i:\ 0 \leq i \leq \beta_f\},
\]
where 
\[
\nu_i = \#\left\{(a,b) \in \mathbb{F}_q^* \times \mathbb{F}_q^*:\ \beta_f(a,b) = i\right\}.
\]
Similarly, when examining the boomerang properties of a power function \( f \), it is sufficient to focus on the values \( \beta_f(1,b) \) for \( b \in \mathbb{F}_q^* \). We summarize known results on the boomerang uniformity of power functions in Table \ref{table5}.

\begin{table}[h!]
  \centering
  \caption{Power functions $f(x)=x^d$ over $\mathbb{F}_{p^n}$ with known boomerang uniformity where $p$ is odd\ (quotes in the table indicate omitted content due to length)\\}
  \label{table5}
  \begin{tabular}{llllc}
    \toprule
    $p$     & $d$                                   & Condition                         & $\beta_f$                    & References                                                               \\
    \midrule
    $3$     & $\frac{3^n+3}{2}$                     & $n$ odd                       & $3$                            & \cite{jiang2022differential}                                             \\
    \hline
    $p$ odd & $\frac{p^n-3}{2}$                     & \makecell[l]{$p^n\equiv 3\pmod 4$                                                                                                             }             &  $\le 6$                                              & \cite{yan2022class}                    \\
    \hline
    $p$ odd & $p^{\frac{n}{2}}-1$                     & \makecell[l]{$n$ even,\\$p^{\frac{n}{2}}\not\equiv 2\pmod 3$                                                                                                             }             &  $2$                                              & \cite{yan2022class}                    \\
    \hline
    $p$ odd & $\frac{(p^m+3)}{2}(p^m-1)$            & $n=2m$,\ $\cdots$                 & $2$                        & \cite{yan2022two}                                                        \\
    \hline
    any     & $p^n-2(=-1)$                          & any $n$                           & $2\le\beta_f\le 6$                   & \cite{blondeau2010differential,boura2018boomerang,jiang2022differential} \\
    \hline
    any     & $k(p^m-1)$                            & \makecell[l]{$n=2m$,                                                                                                                          \\ $\gcd(k,p^m+1)=1$}                      & $2$ or $4$                                      & \cite{hu2023differential}            \\
    \hline $p$ odd &$p^{\frac{n}{2}}+2$&\makecell[l]{$n$ even \\
    $p^{\frac{n}{2}}\equiv 1\ \text{or}\ 3\ ({\rm{mod}}\ 6)$        \\ $p\ne 5$                                                                                                             }&$p^{\frac{n}{2}}+2$ or $5$&This paper\\
    \bottomrule
  \end{tabular}
\end{table}
A well-known method of attack in symmetric cryptography is the linear attack. The resistance of an S-box against linear attacks is measured in terms of its nonlinearity, which is closely related to the Walsh spectrum.

For any function \( f \) defined over \( \mathbb{F}_q \), the Walsh transform of \( f \)  at $(a,b)\in\mathbb{F}_q^2$ is given by
\[
W_f(a,b)=\sum_{x\in\mathbb{F}_{q}}\xi_p^{{\rm{Tr}}_{\mathbb{F}_q/\mathbb{F}_p}(bf(x)-ax)}, 
\]
where \( \xi_p=e^{2\pi i/p} \)   is a  primitive complex $p$-th root of unity, $i$ denotes the primitive complex fourth root of unity as usual, and \( {\rm{Tr}}_{\mathbb{F}_q/\mathbb{F}_p} \) is the (absolute) trace function from \( \mathbb{F}_q \) to (its prime field) \( \mathbb{F}_p \). The Walsh spectrum of \( f \) is defined as the multiset
\[
\{W_f(a,b):\ a\in\mathbb{F}_q,\ b\in\mathbb{F}_q^*\},
\]

We have the following well-known properties of the Walsh spectrum:

\begin{lemma}[\cite{carlet2004highly}]\label{20240623lemma1}
For any function \( f:\mathbb{F}_q\rightarrow\mathbb{F}_q \) with \( f(0)=0 \), the following statements hold:
\begin{enumerate}[(1)]
    \item \(\sum_{a\in\mathbb{F}_q,b\in\mathbb{F}_q^*}W_f(a,b)=q^2-q\);
    \item (Parseval's relation) \(\sum_{a\in\mathbb{F}_q}|W_f(a,b)|^2=q^2\) for any \( b\in\mathbb{F}_q \).
  \end{enumerate}
\end{lemma}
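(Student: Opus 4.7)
The plan is to derive both identities directly from the definition of the Walsh transform, invoking only the orthogonality relation for additive characters on $\mathbb{F}_q$: $\sum_{a\in\mathbb{F}_q}\xi_p^{{\rm{Tr}}_{\mathbb{F}_q/\mathbb{F}_p}(ac)}$ equals $q$ if $c=0$ and $0$ otherwise. I would state this identity up front, since it is the only nontrivial input used in either part.

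For part (1), I would expand $\sum_{a\in\mathbb{F}_q,\,b\in\mathbb{F}_q^*} W_f(a,b)$ as a triple sum over $a$, $b$, and the dummy variable $x$ in the definition of $W_f$, then interchange the orders of summation so that the $a$-sum becomes innermost. Orthogonality forces $x=0$ and contributes a factor of $q$ there; the hypothesis $f(0)=0$ then reduces the outer $b$-sum to $\sum_{b\in\mathbb{F}_q^*} 1 = q-1$, giving the claimed total $q(q-1)=q^2-q$.

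For part (2), I would write $|W_f(a,b)|^2 = W_f(a,b)\overline{W_f(a,b)}$ and expand the two factors as character sums in independent dummy variables $x$ and $y$, taking care that the complex conjugation flips the sign of the exponent in the second factor. Exchanging the order of summation so that $a$ is innermost gives an inner sum of the form $\sum_{a\in\mathbb{F}_q}\xi_p^{{\rm{Tr}}_{\mathbb{F}_q/\mathbb{F}_p}(a(y-x))}$, which by orthogonality is $q$ on the diagonal $y=x$ and $0$ otherwise. The diagonal makes the $b$-dependent phase $\xi_p^{{\rm{Tr}}_{\mathbb{F}_q/\mathbb{F}_p}(b(f(x)-f(y)))}$ collapse to $1$, and the remaining sum over $x$ contributes $q$ trivial terms, for a total of $q^2$.

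There is no genuine obstacle here: each part is one swap of summation order plus one application of orthogonality. The only small points that require attention are verifying the sign change under conjugation in (2) and observing that the hypothesis $f(0)=0$ is used only in (1), while (2) is in fact valid for every $f$.
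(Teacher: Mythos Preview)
Your proof is correct and is the standard argument via additive-character orthogonality. The paper does not actually supply a proof of this lemma: it is stated with a citation to \cite{carlet2004highly} and used as a known tool, so there is no in-paper argument to compare against. Your approach is exactly what one would expect in the cited reference.
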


A key challenge regarding the Walsh transform is identifying cryptographic functions with only a few distinct values and analyzing their value distributions. Numerous studies have been conducted on this topic, some of which are listed in Table \ref{table2}.

\begin{table}[h!]
  \centering
  \caption{Some power functions $f(x)=x^d$ over $\mathbb{F}_{p^n}$ whose Walsh spectrum takes only a few distinct values\\}
  \label{table2}
  \begin{tabular}{llcc}
    \toprule
    $d$                                      & Conditions                                                                       & Valued & References           \\
    \hline
    $(\frac{p^{\frac{n}{4}}+1}{2})^2$        & $4\mid n$                                                                        & $4$    & \cite{seo2008cross}  \\
    \hline
    $(\frac{p^{\frac{n}{2}}+1}{2})^2$        & $2\mid n$                                                                        & $4$    & \cite{luo2010cross}  \\
    \hline
    $\frac{p^n+1}{p+1}+\frac{p^n-1}{2}$      & \makecell[l]{$p\equiv 3\pmod{4}$,                                                                                \\$n$ odd}&$9$&\cite{xia2010further}\\
    \hline
    $\frac{p^n+1}{p^k+1}+\frac{p^n-1}{2}$    & \makecell[l]{$p\equiv 3\pmod{4}$,                                                                                \\$n$ odd,\\$k\mid n$}&$9$&\cite{choi2013cross}\\
    \hline
    $\frac{p^k+1}{2}$                        & $\frac{k}{\gcd(n,k)}$ odd                                                        & $9$    & \cite{luo2008cyclic} \\
    \hline
    $\frac{(p^{\frac{n}{2}+1})^2}{2(p^k+1)}$ & \makecell[l]{$p\equiv 3\pmod{4}$,                                                                                \\$n\equiv 2\pmod{4}$,\\$k\mid\frac{n}{2}$}&$6$&\cite{luo2011two}\\
    \hline
    $\frac{2^{\frac{n}{2}}+1}{3}$            & $n\equiv2\pmod{4}$                                                               & $3$    & \cite{ness2006cross} \\
    \hline
    $\frac{2^{\frac{nl}{2}}+1}{2^l+1}$       & \makecell[l]{$n\equiv 2\pmod{4}$,                                                                                \\$l$ odd,\\$\gcd(n,l)=1$}&$3$&\cite{luo2016binary}\\
    \hline
    \makecell[l]{$d(p^k+1)\equiv 2\pmod{p^n-1}$                                                                                                                 \\$d\equiv 1+\frac{p^e-1}{2}\pmod{p^e-1}$}&\makecell[l]{$\frac{n}{e}>3$ odd\\$p^e\equiv 3\pmod{4}$,\\where $e=\gcd(n,k)$}&$9$&\cite{xia2014some}\\
    \hline
    $2p^{\frac{n}{2}}-1$,                    & \makecell[l]{                                  $p^{\frac{n}{2}}\equiv 2\pmod{3}$                                 \\$n$ even} & $4$    &  \cite{li2022class}          \\
    \hline
    $3^{\frac{n}{2}}+2$,               & $p=3$, $n$ even                                                                         & $4$    & This paper           \\
    \bottomrule
  \end{tabular}
\end{table}

In this paper, we focus on the power function \( f(x) = x^{q+2} \) defined over the finite field \( \mathbb{F}_{q^2} \), where \( q \) is an odd prime power. The differential spectrum of this function has been determined in \cite{man2022differential}.  Section \ref{section2} presents an alternative method to determine the differential spectrum of \( f \), which involves simpler computations. Section \ref{section3} employs a similar strategy to analyze the boomerang uniformity of \( f \) for values of \( q \) that are congruent to \( 1 \) or \( 3 \mod 6 \), specifically excluding the case where \( p = 5 \) and \( m \) is even.
In Section \ref{section4}, we examine the value distribution of the Walsh spectrum of \( f \) for \( p = 3 \) and utilize the obtained results  to determine the weight distribution of a \( 4 \)-weight cyclic code. Before these analyses, Section \ref{section1} introduces some preliminary definitions and results. Finally, Section \ref{section5} conclusions the paper.

\section{Preliminaries}\label{section1}

Let \( q = p^m \), where \( p \) is an odd prime and \( m \) is a positive integer. We will denote by  \( \mathbb{U} \) the subset  \( \{ z \in \mathbb{F}_{q^2}:\ z^{q + 1} = 1 \} \) within \( \mathbb{F}_{q^2} \) for the following discussion. The lemma presented below will play a crucial role in Section \ref{section4} and is often very useful for studying problems over \( \mathbb{F}_{q^2} \).

\begin{lemma}[{\cite[Lemma 2]{yan2022note}}]\label{20240617lemma1}
  For any square element $x\in\mathbb{F}_{q^2}^*$, there exist exactly two pairs, namely $(y,z)$ and $(-y,-z)$ such that $x=yz=(-y)(-z)$, $\pm y\in\mathbb{F}_{q}^*$ and $\pm z\in\mathbb{U}$.
\end{lemma}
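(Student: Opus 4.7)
The plan is to realize the statement as a counting argument about the multiplication map $\varphi\colon \mathbb{F}_q^*\times\mathbb{U}\to\mathbb{F}_{q^2}^*$, $(y,z)\mapsto yz$, and show it is a 2-to-1 group homomorphism whose image is precisely the subgroup of nonzero squares.

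First I would compute $\mathbb{F}_q^*\cap\mathbb{U}$. Since $q$ is odd, $\gcd(q-1,q+1)=2$, so an element lying in both the cyclic group $\mathbb{F}_q^*$ of order $q-1$ and the cyclic group $\mathbb{U}$ of order $q+1$ has order dividing $2$. Hence $\mathbb{F}_q^*\cap\mathbb{U}=\{\pm 1\}$. From this I determine the kernel of $\varphi$: $(y,z)\in\ker\varphi$ iff $z=y^{-1}$, which forces $y\in\mathbb{F}_q^*\cap\mathbb{U}=\{\pm 1\}$. Thus $\ker\varphi=\{(1,1),(-1,-1)\}$, and consequently each element of $\mathrm{im}\,\varphi$ has exactly two preimages, necessarily of the form $(y,z)$ and $(-y,-z)$.

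Next I would identify the image. Fix a generator $g$ of the cyclic group $\mathbb{F}_{q^2}^*$, so $\mathbb{F}_q^*=\langle g^{q+1}\rangle$ and $\mathbb{U}=\langle g^{q-1}\rangle$. Then a typical element of $\mathrm{im}\,\varphi$ is $g^{a(q+1)+b(q-1)}$, and since $q$ is odd the exponent satisfies
\[
a(q+1)+b(q-1)\equiv a+b+a-b=2a\equiv 0\pmod 2,
\]
so every element of $\mathrm{im}\,\varphi$ is a square in $\mathbb{F}_{q^2}^*$. By the first isomorphism theorem,
\[
|\mathrm{im}\,\varphi|=\frac{|\mathbb{F}_q^*|\cdot|\mathbb{U}|}{|\ker\varphi|}=\frac{(q-1)(q+1)}{2}=\frac{q^2-1}{2},
\]
which coincides with the number of nonzero squares in $\mathbb{F}_{q^2}^*$. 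Hence $\mathrm{im}\,\varphi$ is exactly the subgroup of squares, and the desired statement follows from the 2-to-1 description of $\varphi$ above.

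I do not expect a serious obstacle here; the only point requiring care is the verification that $yz$ is always a square, and the cyclic-generator computation handles this cleanly. The rest is a routine application of the isomorphism theorems, facilitated by the coprimality structure of $q-1$ and $q+1$ when $q$ is odd.
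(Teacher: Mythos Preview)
Your argument is correct: the kernel computation via $\mathbb{F}_q^*\cap\mathbb{U}=\{\pm 1\}$, the parity check on exponents showing $\mathrm{im}\,\varphi$ consists of squares, and the cardinality count forcing $\mathrm{im}\,\varphi$ to be the full subgroup of squares are all sound.

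There is nothing to compare against, however: the paper does not supply its own proof of this lemma but simply cites it from \cite{yan2022note}. Your self-contained argument via the 2-to-1 group homomorphism $\varphi\colon\mathbb{F}_q^*\times\mathbb{U}\to\mathbb{F}_{q^2}^*$ is a clean and standard way to establish the result, and could serve as a replacement for the bare citation.
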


The following lemma is a simple consequence of the law of quadratic reciprocity.

\begin{lemma}[{\cite[Lemma 3]{man2022differential}}]\label{20240620lemma6}
  Assume that $p>3$. Then $-3$ is a non-square element in $\mathbb{F}_q$ if and only if $q\equiv 5\pmod{6}$.
\end{lemma}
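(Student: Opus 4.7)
The plan is to reduce the statement to a condition on the existence of primitive cube roots of unity in $\mathbb{F}_q$. The key observation is that the polynomial $X^2+X+1 \in \mathbb{F}_q[X]$ has discriminant $1-4=-3$; since $p$ is odd, the quadratic formula is valid in $\mathbb{F}_q$ and shows that $X^2+X+1$ splits over $\mathbb{F}_q$ if and only if $-3$ is a square in $\mathbb{F}_q$.

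Next, I would observe that any root of $X^2+X+1$ in $\overline{\mathbb{F}_q}$ satisfies $X^3=1$ but $X\neq 1$, so it is a primitive cube root of unity; conversely, every primitive cube root of unity annihilates $X^2+X+1$. Hence $X^2+X+1$ has a root in $\mathbb{F}_q$ if and only if $\mathbb{F}_q^*$ contains an element of order $3$, and by cyclicity of $\mathbb{F}_q^*$ this is equivalent to $3 \mid q-1$, i.e., $q \equiv 1 \pmod 3$. The hypothesis $p>3$ implies $\gcd(q,6)=1$, so $q \equiv 1$ or $5 \pmod 6$; combining, $-3$ is a non-square in $\mathbb{F}_q$ iff $q \not\equiv 1 \pmod 3$ iff $q \equiv 5 \pmod 6$.

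There is no significant obstacle: the argument is a short chain of equivalences once the discriminant trick is invoked. An alternative route, more faithful to the authors' reference to quadratic reciprocity, would compute the Legendre symbol $\left(\frac{-3}{p}\right)=\left(\frac{-1}{p}\right)\left(\frac{3}{p}\right)$ at the base level and then promote to $\mathbb{F}_q$ by expressing $(-3)^{(q-1)/2}$ as a power of that symbol and tracking the parity of $1+p+\cdots+p^{m-1}$; the only subtle point in that variant is separating the cases of $m$ odd and $m$ even, but both paths converge on the same mod-$6$ criterion.
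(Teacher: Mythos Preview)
Your proof is correct. Note that the paper does not actually supply a proof of this lemma: it merely cites \cite{man2022differential} and remarks that the statement is ``a simple consequence of the law of quadratic reciprocity.'' Your primary route---identifying $-3$ as the discriminant of $X^2+X+1$ and reducing the question to whether $\mathbb{F}_q^*$ contains an element of order $3$---is more elementary than the quadratic-reciprocity route the paper alludes to, since it avoids any appeal to reciprocity or case analysis on $p\bmod 12$ and works uniformly for all prime powers $q$ with $p>3$. The alternative you sketch via $\left(\frac{-3}{p}\right)$ and the parity of $m$ is closer in spirit to what the paper hints at; both approaches are standard, and either would be acceptable here.
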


Let $C_0$ ($C_1$, resp.) be the subset of $\mathbb{F}_q^*$ consisting of square (non-square, resp.) elements. Let $\eta:\mathbb{F}_q\rightarrow\mathbb{C}$ be the quadratic character of $\mathbb{F}_q$ given by 
$$\eta(x)=\left\{\begin{array}{ll}
  0, & \text{if $x=0$}, \\
    1, & \text{if $x\in C_0$}, \\
    -1, & \text{if $x\in C_1$}.
  \end{array}\right.$$
For any $i,j\in\{0,1\}$, we put $(i,j)=\#\Big((C_i+1)\cap C_j\Big)$. We have the following result on the values of $(i,j)$.

\begin{theorem}[{\cite[Lemma 1.5]{ding2014codes}}]\label{20241218them11}
  If $q\equiv 1\ ({\rm{mod}}\ 4)$, then 
  $$(0,0)=\frac{q-5}{4},\ (0,1)=(1,0)=(1,1)=\frac{q-1}{4}.$$
  If $q\equiv 3\ ({\rm{mod}}\ 4)$, then
  $$(0,1)=\frac{q+1}{r},\ (0,0)=(1,0)=(1,1)=\frac{q-3}{4}.$$
\end{theorem}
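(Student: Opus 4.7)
The plan is to evaluate each cyclotomic number $(i,j)=\#((C_i+1)\cap C_j)$ directly via the quadratic character $\eta$, treating both cases $q\equiv 1,3\pmod 4$ in parallel until the very end. The key identity is that for $x\in\mathbb{F}_q^*$ the indicator of the event ``$x\in C_i$'' equals $\frac{1+(-1)^i\eta(x)}{2}$. Excluding the two boundary points $x=0$ and $x=-1$ (which would push either $x$ or $x+1$ outside $\mathbb{F}_q^*$), this lets me write
\[
(i,j)=\frac{1}{4}\sum_{x\in\mathbb{F}_q\setminus\{0,-1\}}\bigl(1+(-1)^i\eta(x)\bigr)\bigl(1+(-1)^j\eta(x+1)\bigr).
\]

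Expanding the product produces four subsums. The constant one contributes $q-2$. The two linear sums reduce, via the orthogonality relation $\sum_{x\in\mathbb{F}_q^*}\eta(x)=0$, to $-(-1)^i\eta(-1)$ and $-(-1)^j$ respectively. The crucial ingredient is the quadratic sum $S=\sum_{x\ne 0,-1}\eta(x)\eta(x+1)=\sum_{x\ne 0,-1}\eta(x(x+1))$. Here I would perform the substitution $y=1+x^{-1}$: since $\eta(x^2)=1$, one has $\eta(x(x+1))=\eta(1+x^{-1})=\eta(y)$, and as $x$ runs bijectively over $\mathbb{F}_q^*\setminus\{-1\}$, $y$ runs bijectively over $\mathbb{F}_q^*\setminus\{1\}$; hence $S=\sum_{y\ne 0,1}\eta(y)=-\eta(1)=-1$. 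This change of variables is the only nontrivial step, and I expect it to be the main obstacle if one is unfamiliar with the trick; everything else is bookkeeping.

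Combining the four contributions yields the unified formula
\[
(i,j)=\frac{1}{4}\Bigl[(q-2)-(-1)^i\eta(-1)-(-1)^j-(-1)^{i+j}\Bigr],
\]
and the theorem then follows by plugging in $\eta(-1)=1$ when $q\equiv 1\pmod 4$ or $\eta(-1)=-1$ when $q\equiv 3\pmod 4$, and evaluating at the four pairs $(i,j)\in\{(0,0),(0,1),(1,0),(1,1)\}$. I note in passing that the symbol ``$r$'' in $\tfrac{q+1}{r}$ appearing in the second case of the statement looks like a typographical slip for $4$, which is precisely what the unified formula produces.
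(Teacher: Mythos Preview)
Your argument is correct and complete: the character-sum expansion together with the substitution $y=1+x^{-1}$ for the cross term $\sum_{x\ne 0,-1}\eta(x(x+1))$ gives the unified formula, and the case split on $\eta(-1)$ recovers all eight values. Your observation that ``$r$'' in $\tfrac{q+1}{r}$ is a typo for $4$ is also right.

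There is nothing to compare against, however: the paper does not supply its own proof of this statement. It is quoted verbatim as \cite[Lemma~1.5]{ding2014codes} and used as a black box (to compute $\nu_{q+2}$ and $\nu_q$ in the boomerang analysis). So your self-contained derivation simply fills in a proof the paper chose to omit.
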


Next, we present results on polynomial factorization over finite fields. The following is a helpful result regarding the number of monic irreducible factors of a polynomial over finite fields.

\begin{theorem}[Stickelberger's parity theorem,\ {\cite{stickelberger1898uber}}]\label{20241219them12}
  Let $f\in\mathbb{F}_q[x]$ be a polynomial with degree $\deg(f)=n\ge 2$ and discriminant $D(f)\ne 0$, and let $k$ be the number of monic irreducible factors of $f$. Then $\eta\big(D(f)\big)=(-1)^{n-k}$.
\end{theorem}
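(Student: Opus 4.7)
The plan is to pass to the splitting field $K$ of $f$ over $\mathbb{F}_q$, where $f$ factors as $f = c\prod_{i=1}^{n}(x - \alpha_i)$ with distinct roots $\alpha_1,\ldots,\alpha_n \in K$ (distinctness being ensured by $D(f) \ne 0$). Setting $\Delta = \prod_{i<j}(\alpha_i - \alpha_j) \in K$, one has $\Delta^2 = D(f) \in \mathbb{F}_q$, so $\Delta$ lies either in $\mathbb{F}_q$ or in a quadratic extension of $\mathbb{F}_q$. Consequently, $\eta(D(f)) = 1$ precisely when $\Delta \in \mathbb{F}_q$, which is equivalent to the Frobenius automorphism $\phi : x \mapsto x^q$ of $K/\mathbb{F}_q$ fixing $\Delta$.

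Next, I would analyze how $\phi$ acts on $\Delta$. Since $\phi$ permutes the root set, let $\pi \in S_n$ denote the induced permutation, so that $\phi(\alpha_i) = \alpha_{\pi(i)}$. A short inversion-counting argument comparing the pairs before and after reordering gives
\begin{equation*}
\phi(\Delta) = \prod_{i<j}(\alpha_{\pi(i)} - \alpha_{\pi(j)}) = \mathrm{sgn}(\pi)\cdot\Delta,
\end{equation*}
so $\phi(\Delta) = \Delta$ if and only if $\mathrm{sgn}(\pi) = 1$. Thus $\eta(D(f)) = \mathrm{sgn}(\pi)$, and it remains to express the sign through the factorization data.

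The decisive step identifies the cycle structure of $\pi$. Writing $f = f_1 f_2 \cdots f_k$ as a product of monic irreducibles of degrees $d_1,\ldots,d_k$, each $f_j$ is separable (again by $D(f) \ne 0$), and its roots form a single Frobenius orbit of length exactly $d_j$, because the minimal polynomial over $\mathbb{F}_q$ of any $\alpha \in K$ has as its root set the $\phi$-orbit of $\alpha$. Hence $\pi$ is a product of disjoint cycles of lengths $d_1,\ldots,d_k$, and since a cycle of length $d$ has sign $(-1)^{d-1}$,
\begin{equation*}
\mathrm{sgn}(\pi) = \prod_{j=1}^{k}(-1)^{d_j - 1} = (-1)^{\sum_{j} d_j - k} = (-1)^{n-k},
\end{equation*}
yielding the claimed equality $\eta(D(f)) = (-1)^{n-k}$.

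The main obstacle is not a deep technical hurdle but the care needed to justify the ingredients cleanly: verifying that $D(f)$ is a square in $\mathbb{F}_q$ iff $\Delta \in \mathbb{F}_q$ (using that $\Delta \in K$ with $\Delta^2 \in \mathbb{F}_q$, so that any square root of $D(f)$ in $\mathbb{F}_q$ must equal $\pm\Delta$), confirming the identity $\phi(\Delta) = \mathrm{sgn}(\pi)\Delta$ rigorously, and checking that the Frobenius orbit of each root of $f_j$ really has $d_j$ elements. Each item is standard, but assembling them without circularity is where the proof requires attention.
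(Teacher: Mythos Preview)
The paper does not prove this theorem; it is quoted from Stickelberger's 1898 work as a preliminary tool, so there is no ``paper's own proof'' to compare against. Your argument is the standard Galois-theoretic proof and is essentially correct: the Frobenius $\phi$ permutes the roots, $\phi(\Delta)=\mathrm{sgn}(\pi)\Delta$, and the cycle type of $\pi$ is $(d_1,\ldots,d_k)$, giving $\mathrm{sgn}(\pi)=(-1)^{n-k}$.

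One small slip: with $f=c\prod_i(x-\alpha_i)$ and $\Delta=\prod_{i<j}(\alpha_i-\alpha_j)$, the paper's (and the usual) definition of the discriminant is $D(f)=c^{2n-2}\Delta^2$, not $\Delta^2$. This is harmless for your conclusion since $c^{2n-2}=(c^{n-1})^2$ is a nonzero square in $\mathbb{F}_q$, so $\eta\big(D(f)\big)=\eta(\Delta^2)$ and the equivalence ``$D(f)$ is a square in $\mathbb{F}_q$ $\Longleftrightarrow$ $\Delta\in\mathbb{F}_q$'' still holds; but you should either absorb $c^{n-1}$ into $\Delta$ or note this explicitly.
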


\begin{remark}
  Assume that $f(x)=a_0(x-\alpha_1)\cdots(x-\alpha_n)$ with $a_0\in\mathbb{F}_q$ and $\alpha_1,\cdots,\alpha_n$ in the splitting field of $f$ over $\mathbb{F}_q$. Then the discriminant $D(f)$ of $f$ is defined as 
  $$D(f)=a_0^{2n-2}\displaystyle\prod_{1\le i<j\le n}(\alpha_i-\alpha_j)^2.$$
  It is clear that $D(f)=0$ if and only if $f$ has a multiple root in its splitting field over $\mathbb{F}_q$. 
  
  Moreover, it is well-known that $D(f)\in\mathbb{F}_q$ can be expressed in terms of the coefficients of $f$. For a more detailed introduction to the discriminant, readers may refer to \cite{lidl1997finite}.
\end{remark}

\begin{definition}
  Let $f(x)\in\mathbb{F}_q[x]$. If $f(x)=f_1(x)f_2(x)\cdots f_k(x)$ is the factorization of $f$ into irreducible factors over $\mathbb{F}_q$ with $i_1=\deg(f_1)\le i_2=\deg(f_2)\le\cdots\le i_k=\deg(f_k)$, then we say that $f$ is of type $(i_1,i_2,\cdots,i_k)$.
\end{definition}

We have a complete theory for the factorization of cubic polynomials.

\begin{theorem}[{\cite{dickson1906criteria}}]\label{20241219them10}
  Assume that $p>3$, let $\omega$ be a square root of $-3$ in $\mathbb{F}_{q^2}$, and let $a,b\in\mathbb{F}_q$ be such that $-4a^3-27b^2\ne 0$. Then the factorization of the cubic polynomial $f(x)=x^3+ax+b$ over $\mathbb{F}_q$ is characterized as follows:
  \begin{enumerate}[(1)]
    \item $f$ is of type $(1,1,1)$ if and only if $-4a^3-27b^2=81c^2$ for some $c\in\mathbb{F}_q$ and $\frac{1}{2}(-b+c\omega)$ is a cube in $\mathbb{F}_q$ $(\mathbb{F}_{q^2}$, resp.$)$ when $q\equiv 1\ ({\rm{mod}}\ 3)$ $\big(q\equiv 2\ ({\rm{mod}}\ 3)$, resp.$\big)$;
    \item $f$ is of type $(1,2)$ if and only if $-4a^3-27b^2\in C_1$;
    \item $f$ is of type $(3)$ if and only if $-4a^3-27b^2=81c^2$ for some $c\in\mathbb{F}_q$ and $\frac{1}{2}(-b+c\omega)$ is not a cube in $\mathbb{F}_q$ $(\mathbb{F}_{q^2}$, resp.$)$ when $q\equiv 1\ ({\rm{mod}}\ 3)$ $\big(q\equiv 2\ ({\rm{mod}}\ 3)$, resp.$\big)$.
  \end{enumerate}
\end{theorem}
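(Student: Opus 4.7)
The plan is to combine Stickelberger's parity theorem (Theorem 1.4) with Cardano's formula, handling case (2) first and then separating cases (1) and (3).

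First, the discriminant of the depressed cubic $f(x)=x^3+ax+b$ is $D(f)=-4a^3-27b^2$, which is nonzero by hypothesis, so $f$ is separable. Stickelberger's theorem gives $\eta\bigl(D(f)\bigr)=(-1)^{3-k}$ where $k\in\{1,2,3\}$ counts the monic irreducible factors of $f$ over $\mathbb{F}_q$. Hence $D(f)\in C_1$ is equivalent to $k=2$, which is exactly type $(1,2)$, and this proves case (2). The remaining two types both require $D(f)\in(\mathbb{F}_q^*)^2$; absorbing an absolute constant into $c$, this is the condition $-4a^3-27b^2=81c^2$ used in the statement.

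To separate types $(1,1,1)$ and $(3)$, I would apply Cardano's formula. Every root of $f$ in $\overline{\mathbb{F}_q}$ has the form $\alpha+\beta$ with $\alpha^3+\beta^3=-b$ and $\alpha\beta=-a/3$; equivalently, $\alpha^3$ and $\beta^3$ are the roots of $t^2+bt-a^3/27$. A direct computation using $-4a^3-27b^2=81c^2$ yields $b^2+\tfrac{4a^3}{27}=-3c^2=(c\omega)^2$, so $\{\alpha^3,\beta^3\}=\bigl\{\tfrac12(-b\pm c\omega)\bigr\}$. Moreover, given any cube root $\alpha$ of $\tfrac12(-b+c\omega)$, the choice $\beta:=-a/(3\alpha)$ automatically satisfies $\beta^3=\tfrac12(-b-c\omega)$ (using the identity $b^2+3c^2=-\tfrac{4a^3}{27}$, which rearranges the normalization of $c$), so the three roots of $f$ are $\alpha+\beta$, $\zeta\alpha+\zeta^{-1}\beta$, and $\zeta^{-1}\alpha+\zeta\beta$, where $\zeta\in\overline{\mathbb{F}_q}$ is a primitive cube root of unity.

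It remains to decide when such a root lies in $\mathbb{F}_q$, and this is where the split on $q\bmod 3$ enters. If $q\equiv 1\pmod 3$, then $\omega,\zeta\in\mathbb{F}_q$ and $\tfrac12(-b+c\omega)\in\mathbb{F}_q$, so $\alpha+\beta\in\mathbb{F}_q$ iff $\alpha\in\mathbb{F}_q$, i.e.\ iff $\tfrac12(-b+c\omega)$ is a cube in $\mathbb{F}_q$. If $q\equiv 2\pmod 3$, then $\omega\in\mathbb{F}_{q^2}\setminus\mathbb{F}_q$ and the Frobenius $\varphi(x)=x^q$ sends $\omega$ to $-\omega$; for any $\alpha\in\mathbb{F}_{q^2}$ with $\alpha^3=\tfrac12(-b+c\omega)$, the element $\beta:=\varphi(\alpha)$ satisfies $\beta^3=\tfrac12(-b-c\omega)$, and the compatibility $\alpha\beta=\alpha^{q+1}=-a/3$ is automatic, since $\alpha^{3(q+1)}=\bigl(-a/3\bigr)^3$ and cubing is a bijection on $\mathbb{F}_q^*$ (because $\gcd(3,q-1)=1$). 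Hence $f$ has a root in $\mathbb{F}_q$ iff $\tfrac12(-b+c\omega)$ is a cube in $\mathbb{F}_{q^2}$; and since $D(f)$ being a square forces $k$ odd (by Stickelberger), a single root in $\mathbb{F}_q$ forces all three. The main technical obstacle is the careful tracking of the Frobenius action and the cube-root compatibility in the $q\equiv 2\pmod 3$ case; once this is handled, cases (1) and (3) follow at once from the Cardano expression together with the Stickelberger dichotomy.
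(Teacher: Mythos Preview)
The paper does not give its own proof of this theorem; it is quoted as a classical result of Dickson and used as a black box throughout. Your sketch---Stickelberger's parity theorem to isolate type $(1,2)$, then Cardano's resolvent to distinguish $(1,1,1)$ from $(3)$---is the standard argument and is essentially correct.

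One small point worth tightening: in the $q\equiv 1\pmod 3$ case you assert that $\alpha+\beta\in\mathbb{F}_q$ iff $\alpha\in\mathbb{F}_q$, but only the forward implication is explicit. For the converse, if some root $r=\alpha+\beta$ lies in $\mathbb{F}_q$ then $\alpha$ satisfies the quadratic $t^2-rt-a/3\in\mathbb{F}_q[t]$, hence $\alpha\in\mathbb{F}_{q^2}$; but when $q\equiv 1\pmod 3$ an element $\gamma\in\mathbb{F}_q^*$ that is not a cube in $\mathbb{F}_q$ is not a cube in $\mathbb{F}_{q^2}$ either (otherwise $\gamma^2=N_{\mathbb{F}_{q^2}/\mathbb{F}_q}(\alpha)^3$ would be a cube in $\mathbb{F}_q$, forcing $\gamma=\gamma^3/\gamma^2$ to be one as well). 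This closes the loop. The $q\equiv 2\pmod 3$ case, where you set $\beta=\alpha^q$ and use bijectivity of cubing on $\mathbb{F}_q$ to pin down $\alpha\beta=-a/3$, is handled correctly; the reverse implication there follows by the same quadratic-in-$t$ trick, placing any cube root of $\tfrac12(-b+c\omega)$ into $\mathbb{F}_{q^2}$.
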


\begin{remark}
  Note that $-4a^3-27b^2$ is the discriminant of the cubic polynomial $f(x)=x^3+ax+b$. Hence, if it is zero, then $f$ has a multiple root in its splitting field over $\mathbb{F}_q$.
\end{remark}

\begin{theorem}[{\cite[Theorem 2]{williams1975note}}]\label{20241219them7}
  Let $a,b\in\mathbb{F}_{3^m}$ with $a\ne 0$. The factorization of the cubic polynomial $f(x)=x^3+ax+b$ over $\mathbb{F}_{3^m}$ is characterized as follows:
  \begin{enumerate}[(1)]
    \item $f$ is of type $(1,1,1)$ if and only if $-a$ is a square element in $\mathbb{F}_{3^m}$ and ${\rm{Tr}}_{\mathbb{F}_q/\mathbb{F}_3}(b/c^3)=0$, where $c$ is a square root of $-a$;
    \item $f$ is of type $(1,2)$ if and only if $-a$ is a non-square element in $\mathbb{F}_{3^m}$;
    \item $f$ is of type $(3)$ if and only if $-a$ is a square element in $\mathbb{F}_{3^m}$ and ${\rm{Tr}}_{\mathbb{F}_q/\mathbb{F}_3}(b/c^3)$\\
    $\ne 0$, where $c$ is a square root of $-a$.
  \end{enumerate}
\end{theorem}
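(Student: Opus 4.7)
The plan is to combine Stickelberger's parity theorem (Theorem~\ref{20241219them12}) with an Artin-Schreier reduction. In characteristic $3$ the derivative $f'(x)=3x^2+a=a\ne 0$ is a nonzero constant, so $f$ is automatically separable. The universal discriminant formula $D(f)=-4a^3-27b^2$ collapses in $\mathbb{F}_{3^m}$ to $D(f)=-a^3$ (using $-4\equiv -1$ and $27\equiv 0$), and since $-a^3=(-a)\cdot a^2$ one has $\eta\bigl(D(f)\bigr)=\eta(-a)$. Applying Stickelberger with $n=3$ then pins down the parity of the number $k$ of irreducible factors: if $-a$ is a non-square, $3-k$ is odd, forcing $k=2$ and type $(1,2)$; if $-a$ is a square, $3-k$ is even, so $k\in\{1,3\}$ and we are left with types $(1,1,1)$ and $(3)$.

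To separate these last two cases, I would twist by $x=cy$ where $c^2=-a$. Since $ac=-c^3$, this yields
\begin{equation*}
f(cy)=c^3y^3-c^3y+b=c^3\bigl(y^3-y-(-b/c^3)\bigr),
\end{equation*}
so the factorization type of $f$ over $\mathbb{F}_{3^m}$ matches that of the Artin-Schreier polynomial $g(y)=y^3-y-d$ with $d=-b/c^3$. The classical Artin-Schreier dichotomy in characteristic $p$ asserts that $y^p-y-d$ either splits completely over $\mathbb{F}_{p^m}$ (when $\mathrm{Tr}_{\mathbb{F}_{p^m}/\mathbb{F}_p}(d)=0$) or is irreducible (otherwise). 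Using $\mathbb{F}_3$-linearity of the trace, the condition $\mathrm{Tr}(d)=0$ is equivalent to $\mathrm{Tr}_{\mathbb{F}_{3^m}/\mathbb{F}_3}(b/c^3)=0$. Hence $f$ is of type $(1,1,1)$ exactly when the latter trace vanishes and of type $(3)$ otherwise.

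Finally, I would record that the conclusion is independent of which square root of $-a$ is chosen: replacing $c$ by $-c$ sends $b/c^3$ to $-b/c^3$, and by $\mathbb{F}_3$-linearity of the trace this only changes the sign, so the vanishing/nonvanishing dichotomy is preserved. The main conceptual point---and arguably the only obstacle---is spotting that the twist $x=cy$ is exactly the right substitution to convert the depressed cubic into Artin-Schreier form; this works precisely because characteristic $3$ kills the quadratic term automatically and allows the linear coefficient to be normalised to $-1$. Once this reduction is in place, the theorem is immediate from Stickelberger and the Artin-Schreier dichotomy without any further computation.
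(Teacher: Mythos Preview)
The paper does not prove this theorem; it is quoted verbatim as a preliminary result from \cite{williams1975note} and used later without justification. So there is no ``paper's own proof'' to compare against.

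Your argument is correct and self-contained. The two ingredients---Stickelberger's parity theorem for the square/non-square dichotomy on $-a$, and the substitution $x=cy$ reducing to the Artin--Schreier polynomial $y^3-y-d$ for the trace condition---are exactly the right ones, and your computations check out (in particular $D(f)=-a^3$ in characteristic $3$, and $ac=-c^3$ so that the linear term normalises to $-1$). The closing remark on independence of the choice of square root is a nice touch. This is, in fact, essentially how Williams proves it in the cited reference, so your reconstruction matches the original source even though the present paper omits the argument.
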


In the later part of the paper, we will frequently use character sums, which help calculate or estimate the number of elements that satisfy certain conditions. 

First is the most fundamental type of character sum: the Gaussian sum.

\begin{definition}[Gaussian sum]
  Let $\psi$ be an additive and $\chi$ a multiplicative character of $\mathbb{F}_q$. Then the Gaussian sum $G(\chi,\psi)$ is defined as
  $$G(\chi,\psi)=\sum\limits_{c\in\mathbb{F}_q}\chi(c)\psi(c).$$
\end{definition}
\begin{remark}
  A well-known result is that if $\chi$ and $\psi$ are both non-trivial, then $|G(\chi,\psi)|=\sqrt{q}$ (see \cite[Theorem 5.11]{lidl1997finite}).
\end{remark}

Gaussian sums are often used to compute other types of character sums.

\begin{proposition}[{\cite[Theorem 5.33]{lidl1997finite}}]\label{20241218prop1}
  Let $\psi$ be a non-trivial additive character of $\mathbb{F}_q$ and let $f(x)=a_2x^2+a_1x+a_0\in\mathbb{F}_q[x]$ with $a_2\ne 0$. Then 
  $$\sum\limits_{c\in\mathbb{F}_q}\psi\big(f(c)\big)=\psi(a_0-\frac{a_1^2}{4a_2})\eta(a_2)G(\eta,\psi),$$
  where $\eta$ is the quadratic character of $\mathbb{F}_q$.
\end{proposition}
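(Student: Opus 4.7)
The plan is to reduce the quadratic exponential sum to a pure quadratic Gaussian sum via completing the square, which is legitimate because $p$ is odd and hence $2a_2 \in \mathbb{F}_q^*$. First I would rewrite
\[
a_2 c^2 + a_1 c + a_0 = a_2\Bigl(c + \frac{a_1}{2a_2}\Bigr)^2 + \Bigl(a_0 - \frac{a_1^2}{4a_2}\Bigr),
\]
substitute $y = c + \frac{a_1}{2a_2}$ (a bijection on $\mathbb{F}_q$), and pull the constant contribution outside the sum. This leaves
\[
\sum_{c \in \mathbb{F}_q} \psi\bigl(f(c)\bigr) = \psi\Bigl(a_0 - \frac{a_1^2}{4a_2}\Bigr) \sum_{y \in \mathbb{F}_q} \psi(a_2 y^2),
\]
so the whole problem reduces to evaluating the pure quadratic sum $S(a_2) := \sum_{y \in \mathbb{F}_q}\psi(a_2 y^2)$.

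Next I would exploit the quadratic character to count preimages under the squaring map: for $z \in \mathbb{F}_q^*$ one has $\#\{y \in \mathbb{F}_q : y^2 = z\} = 1 + \eta(z)$, while $z = 0$ has a single preimage. Rewriting $S(a_2)$ as a sum over $z = y^2$ weighted by the number of preimages yields
\[
S(a_2) = \sum_{z \in \mathbb{F}_q} \psi(a_2 z) + \sum_{z \in \mathbb{F}_q^*} \eta(z)\, \psi(a_2 z).
\]
The first sum vanishes because $\psi$ is a nontrivial additive character and $z \mapsto a_2 z$ is a bijection, so it sums to zero over all of $\mathbb{F}_q$. For the second, I would make the change of variables $w = a_2 z$ on $\mathbb{F}_q^*$; multiplicativity of $\eta$ gives $\eta(a_2^{-1} w) = \eta(a_2)\eta(w)$, so the factor $\eta(a_2)$ pulls outside and the remaining sum is exactly the Gaussian sum $G(\eta, \psi) = \sum_{w \in \mathbb{F}_q}\eta(w)\psi(w)$ (the term $w = 0$ contributes nothing since $\eta(0) = 0$).

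Combining the two stages produces the asserted identity
\[
\sum_{c \in \mathbb{F}_q} \psi\bigl(f(c)\bigr) = \psi\Bigl(a_0 - \frac{a_1^2}{4a_2}\Bigr)\, \eta(a_2)\, G(\eta, \psi).
\]
There is no genuine obstacle: the argument rests on three standard ingredients, namely completing the square (valid since $\mathrm{char}\,\mathbb{F}_q$ is odd), orthogonality of nontrivial additive characters, and the definition of the Gaussian sum. The only delicate bookkeeping is keeping track of the factor $\eta(a_2)$ through the substitution $w = a_2 z$, which follows at once from $\eta$ being a multiplicative character of order two.
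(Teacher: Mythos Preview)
Your proof is correct and is precisely the standard argument (completing the square followed by the preimage count under squaring to reduce to the quadratic Gaussian sum). The paper does not supply its own proof of this proposition; it is quoted directly from \cite[Theorem 5.33]{lidl1997finite}, where essentially the same derivation appears.
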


The following is a multiplicative version of Proposition \ref{20241218prop1}.

\begin{proposition}[{\cite[Theorem 5.48]{lidl1997finite}}]\label{20241219prop5}
  Let $f(x)=a_2x^2+a_1x+a_0\in\mathbb{F}_q[x]$ with $a_2\ne 0$. Put $d=a_1^2-4a_0a_2$. Then 
  $$\sum\limits_{c\in\mathbb{F}_q}\eta\big(f(c)\big)=\begin{cases}
  -\eta(a_2),&\text{if }d\ne 0,\\
  (q-1)\eta(a_2),&\text{if }d=0.
\end{cases}$$
\end{proposition}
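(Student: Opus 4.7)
The plan is to complete the square in $f$ to reduce the general quadratic case to computing the single-parameter sum $S(d):=\sum_{u\in\mathbb{F}_q}\eta(u^2-d)$, and then to evaluate $S(d)$ by a double count of the affine conic $v^2=u^2-d$.

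\textbf{Step 1 (completing the square).} Since $p$ is odd and $a_2\ne 0$, the polynomial identity $4a_2 f(c)=(2a_2c+a_1)^2-d$ holds in $\mathbb{F}_q[c]$. Applying the multiplicativity of $\eta$ together with $\eta(4)=1$ gives $\eta(a_2)\eta(f(c))=\eta\bigl((2a_2c+a_1)^2-d\bigr)$. As $c\mapsto 2a_2c+a_1$ is a bijection of $\mathbb{F}_q$, summing over $c$ produces
$$\sum_{c\in\mathbb{F}_q}\eta\bigl(f(c)\bigr)=\eta(a_2)\,S(d).$$

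\textbf{Step 2 (the degenerate case $d=0$).} Here I directly compute $S(0)=\sum_{u\ne 0}\eta(u^2)=q-1$, which immediately yields the second branch of the claim.

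\textbf{Step 3 (the generic case $d\ne 0$, the main point).} I plan to show that $S(d)=-1$ by counting $N:=\#\{(u,v)\in\mathbb{F}_q^2:v^2=u^2-d\}$ in two different ways. Fixing $u$, the number of admissible $v$ equals $1+\eta(u^2-d)$ when $u^2\ne d$ and equals $1$ when $u^2=d$; since $\eta(0)=0$, a short bookkeeping gives $N=q+S(d)$. On the other hand, I would factor $(v-u)(v+u)=-d$ and substitute $s=v-u$, $t=v+u$; the inverse $u=(t-s)/2$, $v=(s+t)/2$ (well-defined because $p$ is odd) shows that $(u,v)\mapsto(s,t)$ is a bijection of $\mathbb{F}_q^2$, and since $-d\ne 0$, the pairs satisfying $st=-d$ are parametrized by $s\in\mathbb{F}_q^*$ via $t=-d/s$, yielding $N=q-1$. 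Comparing the two counts gives $S(d)=-1$, and therefore $\sum_c\eta(f(c))=-\eta(a_2)$.

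The only technical subtlety I anticipate is Step 3: tracking the boundary terms when $u^2=d$ in the fiber count. Because $\eta$ vanishes at $0$ these contributions cancel cleanly, so I do not foresee any real obstacle. The statement is a specialization of a classical identity (Theorem 5.48 in Lidl--Niederreiter), and the argument above is a streamlined version of the standard proof.
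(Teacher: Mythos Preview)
Your argument is correct. The paper does not supply its own proof of this proposition: it is quoted verbatim as a preliminary result from Lidl--Niederreiter (Theorem~5.48) and used as a black box later in Lemma~\ref{20241219lemma5}. Your completion-of-the-square reduction to $S(d)=\sum_u\eta(u^2-d)$ followed by the double count of the conic $v^2=u^2-d$ is the standard textbook proof, and every step checks out, including the boundary case $u^2=d$ (handled automatically by $\eta(0)=0$) and the multiplicativity issue when $f(c)=0$ (both sides of your identity $\eta(a_2)\eta(f(c))=\eta((2a_2c+a_1)^2-d)$ vanish there, so no harm is done).
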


The following Weil bound is a powerful tool for estimating character sums.

\begin{theorem}[{\cite[Theorem 5.41]{lidl1997finite}}]\label{20240904them3}
  Let $q$ be an odd prime power, let $\chi$ be a multiplicative character of $\mathbb{F}_q$ of order $n>1$, let $f\in\mathbb{F}_q[x]$ be a monic polynomial of positive degree that is not an $n$-th power of a polynomial, and let $d$ be the number of distinct roots of $f$ in its splitting field over $\mathbb{F}_q$. Then for any $a\in\mathbb{F}_q$, we have
  $$\left|\sum\limits_{x\in\mathbb{F}_q}\chi\big(af(x)\big)\right|\le(d-1)\sqrt{q}.$$
\end{theorem}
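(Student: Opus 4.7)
The plan is to deduce this theorem from Weil's theorem on the number of rational points of a smooth projective curve over a finite field (the function-field Riemann hypothesis, i.e.\ the Hasse--Weil bound).

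Two immediate reductions come first. If $a=0$, every summand is $\chi(0)=0$ and the bound is trivial; if $a \in \mathbb{F}_q^*$, multiplicativity gives $\sum_x \chi(af(x)) = \chi(a) \sum_x \chi(f(x))$ with $|\chi(a)|=1$, so it suffices to bound $S := \sum_x \chi(f(x))$. Next I would introduce the affine curve $C\colon y^n = f(x)$ over $\mathbb{F}_q$. Because $\chi$ has order $n$ and $n \mid q-1$ (so $\gcd(n,q)=1$), orthogonality gives $\#\{y \in \mathbb{F}_q : y^n = c\} = \sum_{j=0}^{n-1} \chi^j(c)$ for $c \in \mathbb{F}_q^*$ and $1$ for $c=0$. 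Summing on $x$ and isolating the trivial character's contribution $q-d$, with $d$ the number of distinct roots of $f$, yields
$$\#C(\mathbb{F}_q) = q + \sum_{j=1}^{n-1} S_j, \qquad S_j := \sum_{x \in \mathbb{F}_q} \chi^j(f(x)).$$
The hypothesis that $f$ is not an $n$-th power of a polynomial is exactly what makes $\chi \circ f$ non-trivial as a character of the function field and forces the relevant component of $C$ to be absolutely irreducible; a Riemann--Hurwitz computation then bounds the genus of the smooth projective completion $\overline{C}$ by $g \le (n-1)(d-1)/2$, since the cover $C \to \mathbb{A}^1_x$ is tame and ramified only above the zeros of $f$ and possibly $\infty$.

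To extract the bound on $S = S_1$ rather than on the pooled sum $\sum_j S_j$, I would assemble the character sums over all finite extensions $\mathbb{F}_{q^r}$ into the L-function $L(t) := \exp\bigl(\sum_{r \ge 1} t^r\sigma_r/r\bigr)$, where $\sigma_r := \sum_{x \in \mathbb{F}_{q^r}} \chi^{(r)}(f(x))$ and $\chi^{(r)} := \chi \circ N_{\mathbb{F}_{q^r}/\mathbb{F}_q}$. A local analysis at the ramification points of the cover shows that $L(t)$ is a polynomial of degree $d - 1$; Weil's theorem, applied through the zeta-function factorisation of $\overline{C}$ and its base changes, then forces the reciprocal roots of $L(t)$ to lie on the circle $|t| = q^{-1/2}$, so $|S| \le (d-1)\sqrt{q}$ by the triangle inequality. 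The main technical obstacle is this degree computation for $L(t)$: it hinges on the ramification behaviour of $y^n = f(x)$ at each distinct zero of $f$ and at the point at infinity, and is exactly the reason why the bound depends on the number $d$ of \emph{distinct} roots of $f$ rather than on $\deg f$.
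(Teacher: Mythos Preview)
The paper does not prove this statement; it is quoted verbatim as Theorem~5.41 of Lidl and Niederreiter's \emph{Finite Fields} and invoked as a black box in Lemmas~\ref{20241219lemma5} and the proof of Theorem~\ref{20241217them2}. There is therefore no proof in the paper to compare your proposal against.

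That said, your sketch is the standard route to the Weil bound for multiplicative character sums, via the $L$-function attached to $\chi\circ f$ on the rational function field and the Riemann hypothesis for curves. One small caution: the hypothesis that $f$ is not an $n$-th power does \emph{not} by itself force the curve $y^{n}=f(x)$ to be absolutely irreducible (for instance, with $n=4$ and $f=g^{2}$ for $g$ squarefree of degree at least two, the curve splits as $y^{2}=g$ and $y^{2}=-g$). What the hypothesis does guarantee is that the character $\chi\circ f$ is non-trivial on $\mathbb{F}_q(x)^{\times}$, which is precisely what makes $L(\chi,t)$ a polynomial rather than acquiring a pole at $t=q^{-1}$. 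Since your actual bound is extracted from the $L$-function and not from a raw point count on $C$, this does not damage the argument; but the irreducibility clause should be dropped or rephrased. The degree claim is also slightly overstated: the degree of $L(\chi,t)$ is at most $d-1$ (it equals the number of ramified places minus two, and the finite ramified places are among the $d$ zeros of $f$ while $\infty$ contributes at most one), which is all that is needed for $|S|\le(d-1)\sqrt{q}$.
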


Next, we shall use character sums to evaluate the size of certain sets. Before that, we introduce a simple lemma, which is a special case of Hilbert's Theorem 90.

\begin{lemma}[{\cite[Theorem 2.25]{lidl1997finite}}]\label{20241218lemma1}
  Let $F$ be a finite extension of $K=\mathbb{F}_q$. Then for any $\alpha\in F$, ${\rm{Tr}}_{F/K}(\alpha)=0$ if and only if $\alpha=\beta^q-\beta$ for some $\beta\in F$.
\end{lemma}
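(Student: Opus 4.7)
The plan is to prove both directions by elementary means, treating $F=\mathbb{F}_{q^n}$ as an $n$-dimensional vector space over $K=\mathbb{F}_q$. First, I would dispatch the easy direction. If $\alpha=\beta^q-\beta$ for some $\beta\in F$, then
$$
{\rm{Tr}}_{F/K}(\alpha)=\sum_{i=0}^{n-1}(\beta^q-\beta)^{q^i}=\sum_{i=0}^{n-1}\beta^{q^{i+1}}-\sum_{i=0}^{n-1}\beta^{q^i}=\beta^{q^n}-\beta=0,
$$
since the sum telescopes and $\beta^{q^n}=\beta$ for every $\beta\in F$.

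For the converse, the key idea is a dimension count inside the $\mathbb{F}_q$-vector space $F$. Consider the $\mathbb{F}_q$-linear endomorphism $\phi:F\to F$ defined by $\phi(\beta)=\beta^q-\beta$. Its kernel is precisely $\mathbb{F}_q$, because $\beta^q=\beta$ holds if and only if $\beta\in\mathbb{F}_q$. Hence $\dim_{\mathbb{F}_q}(\ker\phi)=1$, and rank-nullity gives $\dim_{\mathbb{F}_q}({\rm{im}}\,\phi)=n-1$. On the other hand, the trace ${\rm{Tr}}_{F/K}:F\to K$ is a nonzero $\mathbb{F}_q$-linear functional, so its kernel has $\mathbb{F}_q$-dimension $n-1$. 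By the easy direction already established, ${\rm{im}}\,\phi\subseteq\ker({\rm{Tr}}_{F/K})$; because both subspaces have the same finite dimension, they must coincide, which is exactly the converse statement.

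The only mildly delicate point is justifying that ${\rm{Tr}}_{F/K}$ is not identically zero, so that its kernel really has codimension one. The cleanest justification is polynomial: the trace, as a function on $F$, is given by the polynomial $x+x^q+\cdots+x^{q^{n-1}}\in F[x]$, which has degree $q^{n-1}<q^n=|F|$ and therefore cannot vanish on every element of $F$. Alternatively, one may appeal to Artin's theorem on the linear independence of the Galois characters $x\mapsto x^{q^i}$ of $\mathrm{Gal}(F/K)$. With surjectivity of the trace in hand, the dimension count closes the proof without further computation.
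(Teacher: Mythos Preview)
Your argument is correct and is essentially the standard textbook proof: the telescoping computation for the easy direction, followed by the $\mathbb{F}_q$-linear dimension count showing that ${\rm im}\,\phi$ and $\ker({\rm Tr}_{F/K})$ are both $(n-1)$-dimensional subspaces of $F$, hence equal. The justification that ${\rm Tr}_{F/K}$ is not identically zero via the degree bound $q^{n-1}<q^n$ is clean and sufficient.

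As for comparison: the paper does not supply its own proof of this lemma. It is stated with a citation to \cite[Theorem 2.25]{lidl1997finite} and used as a black box (it is described in the surrounding text as ``a special case of Hilbert's Theorem 90''). So there is nothing in the paper to compare against beyond noting that your proof is the standard one found in Lidl--Niederreiter and most finite-field references.
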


\begin{lemma}\label{20241219lemma5}
  Assume that $p=3$. Then we have 
 $$\#\left\{x\in\mathbb{F}_q^*:\ x,x+1,x-1\in C_0\right\}\le\frac{q+2\sqrt{q}+9}{8}.$$
\end{lemma}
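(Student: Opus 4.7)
The plan is to write $N := \#\{x \in \mathbb{F}_q^* : x, x+1, x-1 \in C_0\}$ as an indicator sum using the quadratic character $\eta$ and then bound each resulting piece, with Weil's theorem doing the only non-elementary work. Since $x \in C_0$ and $x \pm 1 \in C_0$ force $x \in S := \mathbb{F}_q \setminus \{0, 1, -1\}$, and for such $x$ all three of $x$, $x+1$, $x-1$ are nonzero, the product $\tfrac{1}{8}(1 + \eta(x))(1 + \eta(x+1))(1 + \eta(x-1))$ is an exact indicator, whence
\[
8N = \sum_{x \in S}\bigl(1 + \eta(x)\bigr)\bigl(1 + \eta(x+1)\bigr)\bigl(1 + \eta(x-1)\bigr).
\]

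Expanding yields eight summands. The constant term contributes $|S| = q - 3$. Each of the three \emph{linear} sums $\sum_{x \in S}\eta(x + c)$ with $c \in \{0, \pm 1\}$ reduces, via the shift $y = x + c$ together with the characteristic-three identities $2 = -1$ and $-2 = 1$, to a sum of $\eta$ over $\mathbb{F}_q \setminus \{0, 1, -1\}$, which by $\sum_{\mathbb{F}_q^*}\eta = 0$ evaluates to $-1 - \eta(-1)$. Each of the three \emph{quadratic} sums $\sum_{x \in S}\eta\bigl((x+c_1)(x+c_2)\bigr)$ is handled by applying Proposition \ref{20241219prop5} to the full sum over $\mathbb{F}_q$---which gives $-1$, since the relevant discriminants are $1$ or $4$, both nonzero in characteristic three---followed by a direct boundary check at $x \in \{0, 1, -1\}$; each such restricted sum again comes out to $-1 - \eta(-1)$.

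The remaining and only substantial piece is the \emph{cubic} character sum
\[
T := \sum_{x \in \mathbb{F}_q}\eta\bigl(x(x+1)(x-1)\bigr) = \sum_{x \in \mathbb{F}_q}\eta(x^3 - x).
\]
The polynomial $x^3 - x$ is monic, has three distinct simple roots $0, 1, -1$ in $\mathbb{F}_q$ (distinct because $p = 3 > 2$), and is therefore not a square in $\mathbb{F}_q[x]$, so Weil's bound (Theorem \ref{20240904them3}) applies with $d = 3$ and yields $|T| \le 2\sqrt{q}$. Assembling the pieces,
\[
8N = (q - 3) + 6\bigl(-1 - \eta(-1)\bigr) + T = q - 9 - 6\eta(-1) + T,
\]
and combining $-\eta(-1) \le 1$ with $T \le 2\sqrt{q}$ gives $8N \le q - 3 + 2\sqrt{q}$, which implies (and is in fact strictly stronger than) the stated bound $(q + 2\sqrt{q} + 9)/8$.

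The main---indeed, only---nontrivial step will be invoking Weil's bound for $T$; every other contribution is routine bookkeeping with Proposition \ref{20241219prop5} and the vanishing of $\sum_{\mathbb{F}_q^*}\eta$, once the characteristic-three identifications are tracked. No new technical ingredient beyond those already collected in Section \ref{section1} is needed.
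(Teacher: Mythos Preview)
Your proposal is correct and follows essentially the same approach as the paper: express $N$ via the indicator $\tfrac{1}{8}(1+\eta(x))(1+\eta(x+1))(1+\eta(x-1))$ summed over $\mathbb{F}_q^*\setminus\{\pm 1\}$, expand, handle the quadratic pieces with Proposition~\ref{20241219prop5}, and bound the cubic piece $\sum_{x}\eta(x^3-x)$ with Weil (Theorem~\ref{20240904them3}). The only difference is bookkeeping at the three boundary points: you compute each linear and quadratic contribution over $S$ exactly as $-1-\eta(-1)$, whereas the paper simply extends the sum to all of $\mathbb{F}_q$ and absorbs the nonnegative boundary terms into a crude $+12$, which is why your final constant $(q+2\sqrt{q}-3)/8$ is slightly sharper than the stated $(q+2\sqrt{q}+9)/8$.
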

\begin{proof}
  We have 
  \begin{align*}
    &\#\left\{x\in\mathbb{F}_q^*:\ x,x+1,x-1\in C_0\right\}\\
    =\ &\frac{1}{8}\sum\limits_{x\in\mathbb{F}_q^*\setminus\{\pm 1\}}\big(1+\eta(x)\big)\big(1+\eta(x+1)\big)\big(1+\eta(x-1)\big)\\
    \le\ &\frac{1}{8}\Big(12+\sum\limits_{x\in\mathbb{F}_q}\big(1+\eta(x)\big)\big(1+\eta(x+1)\big)\big(1+\eta(x-1)\big)\Big)\\
    =\ &\frac{q+12}{8}+\frac{1}{8}\Big(\sum\limits_{x\in\mathbb{F}_q}\eta(x^2-x)+\eta(x^2+x)+\eta(x^2-1)+\sum\limits_{x\in\mathbb{F}_q}\eta(x^3-x)\Big)\\
    =\ &\frac{q+9}{8}+\frac{1}{8}\cdot\sum\limits_{x\in\mathbb{F}_q}\eta(x^3-x)\quad(\mbox{by Proposition \ref{20241219prop5}})\\
    \le\ &\frac{q+9}{8}+\frac{2\sqrt{q}}{8}=\frac{q+2\sqrt{q}+9}{8}\quad(\mbox{by Theorem \ref{20240904them3}}).
  \end{align*}
\end{proof}

\begin{lemma}\label{20241219lemma6}
  Assume that $p=3$. Then for any $i\in\{0,1,2\}$, we have
  \begin{align*}
    \#\left\{x\in\mathbb{F}_q:\ {\rm{Tr}}_{\mathbb{F}_q/\mathbb{F}_3}(x)=i,\ {\rm{Tr}}_{\mathbb{F}_q/\mathbb{F}_3}(x^2)=0\right\}\ge\frac{q-6\sqrt{q}}{9}.
  \end{align*}
\end{lemma}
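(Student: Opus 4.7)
The plan is to detect the trace conditions by orthogonality of the additive characters of $\mathbb{F}_3$ and then reduce the problem to evaluating quadratic exponential sums of the shape $\sum_{x\in\mathbb{F}_q}\psi_1(bx^2+ax)$, which is exactly the setting of Proposition \ref{20241218prop1}. Concretely, letting $\psi_1^{(0)}$ be the canonical additive character of $\mathbb{F}_3$ and $\psi_1$ that of $\mathbb{F}_q$, I would use the identity
\[
\mathbf{1}_{\mathrm{Tr}_{\mathbb{F}_q/\mathbb{F}_3}(y)=j}=\frac{1}{3}\sum_{a\in\mathbb{F}_3}\xi_3^{a(\mathrm{Tr}_{\mathbb{F}_q/\mathbb{F}_3}(y)-j)}=\frac{1}{3}\sum_{a\in\mathbb{F}_3}\xi_3^{-aj}\psi_1(ay)
\]
twice to rewrite the quantity of interest, call it $N_i$, as
\[
N_i=\frac{1}{9}\sum_{a,b\in\mathbb{F}_3}\xi_3^{-ai}\sum_{x\in\mathbb{F}_q}\psi_1(ax+bx^2).
\]

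Next I would split the double sum by the three regimes in $(a,b)$. When $a=b=0$, the inner sum equals $q$, producing the main term $q/9$. When $b=0$ and $a\neq 0$, the inner sum is $\sum_{x\in\mathbb{F}_q}\psi_1(ax)=0$ since $\psi_1$ is non-trivial. When $b\in\mathbb{F}_3^*$, Proposition \ref{20241218prop1} applies to the quadratic polynomial $bx^2+ax\in\mathbb{F}_q[x]$, giving
\[
\sum_{x\in\mathbb{F}_q}\psi_1(ax+bx^2)=\psi_1\!\left(-\frac{a^2}{4b}\right)\eta(b)\,G(\eta,\psi_1),
\]
a complex number of modulus exactly $\sqrt{q}$ (since $\eta$ and $\psi_1$ are both non-trivial, $|G(\eta,\psi_1)|=\sqrt{q}$).

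Finally, the triangle inequality finishes the argument: there are precisely $2\cdot 3=6$ pairs $(a,b)$ with $b\in\mathbb{F}_3^*$, and each contributes at most $\sqrt{q}/9$ in absolute value to $N_i-q/9$. Therefore
\[
\left|N_i-\frac{q}{9}\right|\le \frac{6\sqrt{q}}{9},
\]
which in particular yields the required lower bound $N_i\ge\frac{q-6\sqrt{q}}{9}$. No step here is really an obstacle: the mild subtlety is just keeping track that the $b=0,\,a\ne 0$ terms vanish (so the contribution of $6\sqrt q$, rather than $8\sqrt q$, is tight), and noting that the formula of Proposition \ref{20241218prop1} is applied over $\mathbb{F}_q$ with leading coefficient $b\in\mathbb{F}_3^*\subset\mathbb{F}_q^*$, which is legitimate.
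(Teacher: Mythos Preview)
Your proposal is correct and follows essentially the same route as the paper: detect both trace conditions by orthogonality of additive characters, separate off the main term $q/9$ from the $(a,b)=(0,0)$ contribution, note that the $b=0,\,a\neq 0$ terms vanish, and bound the six remaining quadratic sums by $\sqrt{q}$ via Proposition~\ref{20241218prop1} and $|G(\eta,\psi_1)|=\sqrt{q}$. The only cosmetic differences are your swap of the roles of $a$ and $b$ and your use of the scalar factor $\xi_3^{-ai}$ in place of the paper's choice of an element $x_i$ with $\mathrm{Tr}_{\mathbb{F}_q/\mathbb{F}_3}(x_i)=i$.
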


\begin{proof}
  Let $\psi_1$ be the canonical additive character of $\mathbb{F}_q$ defined by $\psi_1(x)=\xi_3^{{\rm{Tr}}_{\mathbb{F}_q/\mathbb{F}_3}(x)}$, and let $x_i\in\mathbb{F}_q$ be such that ${\rm{Tr}}_{\mathbb{F}_q/\mathbb{F}_3}(x_i)=i$ for any $i\in\{0,1,2\}$. We have
  \begin{align*}
    N_1:=\ &\#\left\{x\in\mathbb{F}_q:\ {\rm{Tr}}_{\mathbb{F}_q/\mathbb{F}_3}(x)=0,\ {\rm{Tr}}_{\mathbb{F}_q/\mathbb{F}_3}(x^2)=0\right\}\\
    =\ &\frac{1}{9}\sum\limits_{x\in\mathbb{F}_q}\Big(\sum\limits_{a\in\mathbb{F}_3}\xi_3^{a{\rm{Tr}}_{\mathbb{F}_q/\mathbb{F}_3}(x^2)}\Big)\Big(\sum\limits_{b\in\mathbb{F}_3}\xi_3^{b\big({\rm{Tr}}_{\mathbb{F}_q/\mathbb{F}_3}(x)-i\big)}\Big)\\
    =\ &\frac{1}{9}\sum\limits_{a,b\in\mathbb{F}_3}\sum\limits_{x\in\mathbb{F}_q}\xi_3^{{\rm{Tr}}_{\mathbb{F}_q/\mathbb{F}_3}(ax^2+bx-bx_i)}\\
    =\ &\frac{1}{9}\sum\limits_{a\in\mathbb{F}_3^*}\sum\limits_{b\in\mathbb{F}_3}\sum\limits_{x\in\mathbb{F}_q}\xi_3^{{\rm{Tr}}_{\mathbb{F}_q/\mathbb{F}_3}(ax^2+bx-bx_i)}+\frac{1}{9}\sum\limits_{b\in\mathbb{F}_3}\sum\limits_{x\in\mathbb{F}_q}\xi_3^{{\rm{Tr}}_{\mathbb{F}_q/\mathbb{F}_3}(bx-bx_i)}\\
    =\ &\frac{q}{9}+\frac{1}{9}\sum\limits_{a\in\mathbb{F}_3^*}\sum\limits_{b\in\mathbb{F}_3}\sum\limits_{x\in\mathbb{F}_q}\psi_1(ax^2+bx-bx_i)\\
    =\ &\frac{q}{9}+\frac{1}{9}\sum\limits_{a\in\mathbb{F}_3^*}\sum\limits_{b\in\mathbb{F}_3}\psi_1(-bx_i-\frac{b^2}{a})\eta(a)G(\eta,\psi_1)\quad(\mbox{by Proposition \ref{20241218prop1}})\\
    =\ &\frac{q}{9}+\frac{G(\eta,\psi_1)}{9}\sum\limits_{a\in\mathbb{F}_3^*}\eta(a)\sum\limits_{b\in\mathbb{F}_3}\psi_1(-bx_i-\frac{b^2}{a}),
  \end{align*}
  which implies that 
  $$N_1\ge\frac{q}{9}-\frac{2}{3}\cdot|G(\eta,\psi_1)|=\frac{q}{9}-\frac{2}{3}\sqrt{q}.$$
\end{proof}

The following lemma will help prove certain properties of the boomerang spectrum of the power function $f(x)=x^{q+2}$ when $p=3$.

\begin{lemma}\label{20241219lemma7}
  Assume that $p=3$ and $m\ge 4$. For any $i\in\{0,1,2\}$, there exists $x\in C_1$ such that $x^3-x=b^2$ for some $b\in\mathbb{F}_q^*$ with ${\rm{Tr}}_{\mathbb{F}_q/\mathbb{F}_3}(b)=i$.
\end{lemma}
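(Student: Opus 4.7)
The proof will be a counting argument combining Lemmas~\ref{20241219lemma5} and \ref{20241219lemma6} with the characteristic-$3$ cubic factorization result Theorem~\ref{20241219them7}. Let $\phi\colon\mathbb{F}_q\to\mathbb{F}_q$ denote the additive map $\phi(x)=x^3-x$, whose kernel is $\mathbb{F}_3$. I first observe the symmetry $b\mapsto -b$, which sends ${\rm{Tr}}_{\mathbb{F}_q/\mathbb{F}_3}(b)$ to $-{\rm{Tr}}_{\mathbb{F}_q/\mathbb{F}_3}(b)$ while preserving $b^2$; this reduces the case $i=2$ to $i=1$.

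Next, I apply Theorem~\ref{20241219them7} to the cubic $x^3-x-b^2\in\mathbb{F}_q[x]$: since $a=-1\ne 0$, $-a=1$ is a square with square root $c=1$, and $-b^2/c^3=-b^2$, so the cubic is of type $(1,1,1)$ iff ${\rm{Tr}}_{\mathbb{F}_q/\mathbb{F}_3}(b^2)=0$ and of type $(3)$ otherwise. Hence when ${\rm{Tr}}_{\mathbb{F}_q/\mathbb{F}_3}(b^2)=0$ the three roots form a single $\mathbb{F}_3$-coset $\{x_0,x_0+1,x_0-1\}$. A parity argument applies: since $\eta(b^2)=1$ for $b\ne 0$, we have $\eta(x_0)\eta(x_0+1)\eta(x_0-1)=\eta(b^2)=1$, so the triple contains either $0$ or exactly $2$ elements of $C_1$, and in the latter case it produces the desired $x$.

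Set
$$G_i=\{b\in\mathbb{F}_q^*:\ {\rm{Tr}}_{\mathbb{F}_q/\mathbb{F}_3}(b)=i,\ {\rm{Tr}}_{\mathbb{F}_q/\mathbb{F}_3}(b^2)=0\},\quad B_i=\{b\in G_i:\ \phi^{-1}(b^2)\subseteq C_0\}.$$
The lemma is equivalent to $G_i\setminus B_i\ne\emptyset$ for each $i\in\{0,1,2\}$. Lemma~\ref{20241219lemma6} supplies the lower bound $|G_i|\ge(q-6\sqrt{q})/9-\varepsilon_i$ with $\varepsilon_i\in\{0,1\}$ (to exclude $b=0$ when $i=0$). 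For $|B_i|$, put $F=\{x\in\mathbb{F}_q^*:\ x,x+1,x-1\in C_0\}$; by Lemma~\ref{20241219lemma5}, $|F|\le(q+2\sqrt{q}+9)/8$. Each bad $y=b^2$ arises from exactly $3$ elements of $F$ and has two square roots with opposite traces. A short case analysis on the trace of these square roots yields
$$|B_i|\le\frac{|F|}{3}\le\frac{q+2\sqrt{q}+9}{24}\ \text{for } i\in\{1,2\},\qquad |B_0|\le\frac{2|F|}{3}\le\frac{q+2\sqrt{q}+9}{12}.$$

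Substituting into $|G_i|-|B_i|>0$ gives an explicit polynomial inequality in $\sqrt{q}$ that is satisfied for all sufficiently large $q$. The \emph{main obstacle} is that the inequality is (just barely) violated for the smallest values of $m$ in the range: namely $m=4$ when $i=1$, and $m\in\{4,5,6\}$ when $i=0$. For these finitely many exceptional cases, I would complete the proof by direct verification inside $\mathbb{F}_{3^m}$, either by constructing explicit admissible $b$'s (note $|F|$, $|G_i|$ and $|B_i|$ can be enumerated by hand for $q\le 729$) or by a brief computer search; combined with the character-sum bound for the infinite tail $m\ge 7$, this establishes the lemma for every $m\ge 4$.
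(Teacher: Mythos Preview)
Your proposal is correct and follows essentially the same counting argument as the paper's proof: both use Lemma~\ref{20241219lemma6} to lower-bound the number of admissible $b$'s and Lemma~\ref{20241219lemma5} to upper-bound the ``bad'' ones (those whose $\mathbb{F}_3$-coset of roots lies entirely in $C_0$), obtaining an inequality that forces the result for large $q$ and leaving finitely many small $m$ to direct verification. Your presentation is slightly sharper for $i\in\{1,2\}$ (using the trace symmetry $b\mapsto -b$ to halve the bad count), but for $i=0$ your inequality $q-30\sqrt{q}-63>0$ coincides exactly with the paper's, so the exceptional range $m\in\{4,5,6\}$ and the need for a computer check are the same.
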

\begin{proof}

Let \( E(q) = \{(x,y) \in \mathbb{F}_q^* \times \mathbb{F}_q^*:\ x^3 - x = y^2\} \). It is important to note that \( (x \pm 1)^3 - (x \pm 1) = x^3 - x \). 

If there exists \( x \in \mathbb{F}_q^* \) such that \( x + \epsilon \in C_1 \) and \( x^3 - x = b^2 \) for some \( b \in \mathbb{F}_q^* \) with \( \text{Tr}_{\mathbb{F}_q/\mathbb{F}_3}(b) = i \), where \( \epsilon \in \{ \pm 1 \} \), then \( x' = x + \epsilon \) is the desired element. Hence, if this lemma does not hold for \( i \in \{0, 1, 2\} \), we must have
\[
E_1 := \{(x,y) \in E(q):\ \text{Tr}_{\mathbb{F}_q/\mathbb{F}_3}(y) = i\} 
\subset E_2 := \{(x,y) \in E(q):\ x, x \pm 1 \in C_0\}.
\]

It is clear that
\[
\# E_2 = 2 \cdot \# \{x \in \mathbb{F}_q^*:\ x, x \pm 1 \in C_0\},
\]
and, by Lemma \ref{20241218lemma1},
\[
\# E_1 = 3 \cdot \# \left\{y \in \mathbb{F}_q^*:\ \text{Tr}_{\mathbb{F}_q/\mathbb{F}_3}(y) = i, \ \text{Tr}_{\mathbb{F}_q/\mathbb{F}_3}(y^2) = 0\right\}.
\]
By applying Lemmas \ref{20241219lemma5} and \ref{20241219lemma6}, we arrive at the following inequality:
\[
3\left(\frac{q - 6\sqrt{q}}{9} - 1\right) \leq 2\left(\frac{q + 2\sqrt{q} + 9}{8}\right),
\]
which simplifies to
\[
q - 30\sqrt{q} - 63 \leq 0.
\]

This inequality does not hold for \( q \geq 1023 \). Therefore, this lemma is valid for \( q \geq 1023 \). We verify this lemma directly using computational methods for \( 81 \leq q < 1023 \).
\end{proof}

\section{The differential spectrum of $x^{q+2}$ over $\mathbb{F}_{q^2}$}\label{section2}

In this section, we focus on the differential spectrum of the power function \( f(x) = x^{q+2} \) for \( x \in \mathbb{F}_{q^2} \), where \( q = p^m \), \( p \) is an odd prime, and \( m \) is a positive integer. This differential spectrum of $f$ was determined by Man, Xia, Li, and Helleseth in 2022 \cite{man2022differential}. This section will present an alternative method for evaluating the differential spectrum of $f$, which involves simplified calculations.

Note that
\begin{align}
  D_1f(x) & =  f(x+1)-f(x)   \notag                                                                             \\
          & = 2x^{q+1}+x^q+x^2+2x+1\notag                                                                   \\
          & =  2(x^{q+1}+\frac{1}{2}x^q+\frac{1}{2}x+\frac{1}{4})+(x^2+x+\frac{1}{4})+\frac{1}{4} \notag                       \\
          & =  2(x+\frac{1}{2})^{q+1}+(x+\frac{1}{2})^{2}+\frac{1}{4}\notag                                    \\
          & =  2u^{q+1}+u^{2}+\frac{1}{4},\label{20240628equation1}
\end{align}
where $u=x+\frac{1}{2}$. For any $b\in\mathbb{F}_{q^2}$, we set 
\begin{align}
  \delta(b)  :=\delta_f(1,b+\frac{1}{4}) & =\#\{x\in\mathbb{F}_{q^2}:\ D_1f(x)=b+\frac{1}{4}\}\notag                 \\
                                         & =\#\{u\in\mathbb{F}_{q^2}:\ 2u^{q+1}+u^{2}=b\}.\label{20240620equation1}
\end{align}

Let \(\alpha\) be a fixed non-square element in \(\mathbb{F}_{q}\), and let \(Z\in\mathbb{F}_{q^2} \setminus \mathbb{F}_q\) be such that \(Z^2 = \alpha\). Then any element in \(\mathbb{F}_{q^2}\) can be uniquely expressed in the form \(c + dZ\) where \(c, d \in \mathbb{F}_q\). By Lemma \ref{20240620lemma6}, if \(q \equiv 5 \,({\rm{mod}}\, 6)\), we can take \(\alpha\) to be \(-3\). 

Since \(Z^{q-1} = \alpha^{\frac{q-1}{2}} = -1\), it follows that \(Z^q = -Z\), which implies the following  equations:
\begin{equation}\label{20241018equation3}
  \traced(c+dZ)=(c+dZ)+(c-dZ)=2c,
\end{equation}
and 
\begin{equation}
  \norm(c+dZ)=(c+dZ)(c-dZ)=c^2-d^2\alpha.
\end{equation}
Here, \( \traced \) and \( \norm \) denote the trace and norm functions from \(\mathbb{F}_{q^2}\) to \(\mathbb{F}_q\), respectively.

 Moreover, we have 
\begin{equation}\label{20241018equation1}
  2(x+yZ)^{q+1}+(x+yZ)^{2}=(3x^2-y^2\alpha)+2xyZ
\end{equation}
for any $x,y\in\mathbb{F}_q$. By equations (\ref{20240620equation1}) and (\ref{20241018equation1}), we obtain that 
\begin{equation}\label{20241018equation2}
  \delta(c+dZ)=\#\left\{(x,y)\in\mathbb{F}_q^2:\ \begin{cases}
    3x^2-y^2\alpha=c\\
    2xy=d
  \end{cases}\right\}
\end{equation}
for any $c,d\in\mathbb{F}_q$. If $c\in\mathbb{F}_q^*$, then
\begin{align}
  \delta(c)&=\#\left\{(x,y)\in\mathbb{F}_q^2:\ \begin{cases}
    3x^2-y^2\alpha=c\\
    xy=0
  \end{cases}\right\}\notag\\
&=\#\left\{x\in\mathbb{F}_q:\ 3x^2=c\right\}+\#\left\{y\in\mathbb{F}_q:\ y^2=-\frac{c}{\alpha}\right\}.\label{20241018equation10}
\end{align}
For any $c\in\mathbb{F}_q$ and $d\in\mathbb{F}_q^*$, we have 
\begin{align}
  \delta(c+dZ)&=\#\left\{x\in\mathbb{F}_q^*:\ 3x^2-\frac{d^2\alpha}{4x^2}=c\right\}\notag\\
&=\#\left\{x\in\mathbb{F}_q^*:\ 3x^4-cx^2-\frac{d^2\alpha}{4}=0\right\}\notag\\
&=2\cdot\#\left\{y\in C_0:\ 3y^2-cy-\frac{d^2\alpha}{4}=0\right\}.\label{20241018equation11}
\end{align}

In particular, we have the following conclusion.

\begin{lemma}\label{20241018lemma4}
  For any $b\in\mathbb{F}_{q^2}^*$, $\delta(b)$ is an even number such that $\delta(b)\le 4$.
\end{lemma}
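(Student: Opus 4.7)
My plan is to derive the lemma directly from the two explicit formulas (\ref{20241018equation10}) and (\ref{20241018equation11}) for $\delta(b)$, splitting on whether the $Z$-component of $b$ vanishes. Write $b = c + dZ$ with $c,d \in \mathbb{F}_q$ not both zero. There are two cases to handle: $d = 0$ (so necessarily $c \ne 0$), and $d \ne 0$.

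In the case $d = 0$, I would invoke (\ref{20241018equation10}) to write $\delta(c)$ as the sum of the number of roots in $\mathbb{F}_q$ of $3x^2 = c$ and of $y^2 = -c/\alpha$. Since $c \ne 0$, each of these equations has either $0$ or $2$ solutions in $\mathbb{F}_q$ (depending on whether the relevant constant is a square), so both summands are even and bounded by $2$. Hence $\delta(c) \in \{0,2,4\}$, which gives the desired conclusion in this case.

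In the case $d \ne 0$, I would use (\ref{20241018equation11}), which expresses $\delta(c + dZ)$ as twice the number of $y \in C_0$ satisfying the quadratic equation $3y^2 - cy - \tfrac{d^2\alpha}{4} = 0$. Since $d^2\alpha/4 \ne 0$, this is a genuine quadratic over $\mathbb{F}_q$ with at most two roots in $\mathbb{F}_q$, so at most two of them lie in $C_0$. Therefore $\delta(c+dZ)$ is $2$ times an integer in $\{0,1,2\}$, hence even and at most $4$.

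There is no substantive obstacle here: both parts follow by inspection once the formulas (\ref{20241018equation10}) and (\ref{20241018equation11}) are in place. The only point worth being slightly careful about is that the ``0 or 2'' dichotomy in each case genuinely requires the constant term to be nonzero, which is guaranteed by the hypothesis $b \ne 0$ (forcing $c \ne 0$ when $d = 0$) and by $d \ne 0$ together with $\alpha \ne 0$ in the second case. Combining the two cases gives the statement for every $b \in \mathbb{F}_{q^2}^*$.
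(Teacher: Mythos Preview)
Your proof is correct and matches the paper's approach exactly: the paper states this lemma as an immediate consequence of equations (\ref{20241018equation10}) and (\ref{20241018equation11}) without further argument. One minor quibble: when $p=3$ the polynomial $3y^2 - cy - \tfrac{d^2\alpha}{4}$ is not a \emph{genuine} quadratic (the leading coefficient vanishes), but your conclusion still holds since it is a nonzero polynomial of degree at most $2$ and hence has at most two roots.
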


\begin{proposition}\label{20240620prop4}
  We have
  $$\delta(0)=\begin{cases}
      q, & \text{if }p=3,    \\
      1, & \text{otherwise}.
    \end{cases}$$
\end{proposition}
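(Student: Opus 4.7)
\textbf{Proof plan for Proposition \ref{20240620prop4}.} By the definition recalled in (\ref{20240620equation1}), we have
$$\delta(0)=\#\{u\in\mathbb{F}_{q^2}:\ 2u^{q+1}+u^2=0\}.$$
The plan is to factor the defining equation and analyze the solutions in $\mathbb{F}_{q^2}$ directly, separating the cases $p=3$ and $p\ne 3$ via the group $\mathbb{U}=\{z\in\mathbb{F}_{q^2}:\ z^{q+1}=1\}$.

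First I would observe the factorization
$$2u^{q+1}+u^2=u^2\bigl(2u^{q-1}+1\bigr),$$
so that $u=0$ is always a solution, and the remaining solutions are those $u\in\mathbb{F}_{q^2}^*$ satisfying $u^{q-1}=-\tfrac{1}{2}$. The key structural fact I would use is that for every $u\in\mathbb{F}_{q^2}^*$, the element $u^{q-1}$ belongs to $\mathbb{U}$, since $(u^{q-1})^{q+1}=u^{q^2-1}=1$. Moreover, the map $u\mapsto u^{q-1}$ from $\mathbb{F}_{q^2}^*$ to $\mathbb{U}$ is surjective with fibers of constant size $q-1$. Hence the equation $u^{q-1}=-\tfrac{1}{2}$ either has no solution (when $-\tfrac{1}{2}\notin\mathbb{U}$) or exactly $q-1$ solutions (when $-\tfrac{1}{2}\in\mathbb{U}$).

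The next step is to decide when $-\tfrac{1}{2}\in\mathbb{U}$. Since $-\tfrac{1}{2}\in\mathbb{F}_q$, we have $\bigl(-\tfrac{1}{2}\bigr)^{q+1}=\bigl(-\tfrac{1}{2}\bigr)^2=\tfrac{1}{4}$, so $-\tfrac{1}{2}\in\mathbb{U}$ iff $\tfrac{1}{4}=1$ in $\mathbb{F}_q$, iff $3=0$ in $\mathbb{F}_q$, iff $p=3$. In the case $p=3$, one further notes that $-\tfrac{1}{2}=1$ in $\mathbb{F}_3\subset\mathbb{F}_q$, so the condition $u^{q-1}=-\tfrac{1}{2}$ becomes $u^{q-1}=1$, whose nonzero solutions are exactly the elements of $\mathbb{F}_q^*$; together with $u=0$ this gives $q$ solutions.

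Assembling the two cases: if $p=3$ then $\delta(0)=(q-1)+1=q$, while if $p\ne 3$ then the only solution is $u=0$, giving $\delta(0)=1$. The proof is essentially a factorization followed by a parity/cyclotomic check, and there is no real obstacle; the only subtlety is correctly identifying $-\tfrac{1}{2}$ as an element of $\mathbb{U}$ precisely in characteristic $3$.
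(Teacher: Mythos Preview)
Your proof is correct and takes a genuinely different route from the paper. The paper works in the $\mathbb{F}_q$-basis $\{1,Z\}$ and uses the precomputed identity (\ref{20241018equation2}), reducing $\delta(0)$ to counting pairs $(x,y)\in\mathbb{F}_q^2$ with $3x^2-y^2\alpha=0$ and $xy=0$; the dichotomy $p=3$ versus $p\ne 3$ then comes from whether the coefficient $3$ vanishes. You instead factor $2u^{q+1}+u^2=u^2(2u^{q-1}+1)$ directly in $\mathbb{F}_{q^2}$ and use the standard fact that $u\mapsto u^{q-1}$ is a $(q-1)$-to-$1$ surjection onto $\mathbb{U}$, so the question becomes the membership $-\tfrac{1}{2}\in\mathbb{U}$, which is equivalent to $p=3$. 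Your argument is self-contained and arguably more transparent for this single value; the paper's approach has the advantage of being a special case of the same coordinate computation (\ref{20241018equation2}) that drives the rest of the section, so no new idea is introduced just for $b=0$.
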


\begin{proof}
  It is clear that
  \begin{align*}
    \delta(0)&=\#\left\{(x,y)\in\mathbb{F}_q^2:\ \begin{cases}
      3x^2-y^2\alpha=0\\
      xy=0
    \end{cases}\right\}\\
    &=\begin{cases}
      \#(\mathbb{F}_q\times\{0\})=q, & \text{if }p=3,    \\
      \#\{(0,0)\}=1, & \text{otherwise}.
    \end{cases}
  \end{align*}
\end{proof}

The following proposition completely describes the values $\delta(b)$ ($b\in\mathbb{F}_{q^2}^*$) when $p=3$.

\begin{proposition}\label{20240626prop1}
  If $p=3$, then for any $b\in\mathbb{F}_{q^2}^*$, we have
  \begin{equation}\label{20241018equation4}
    \delta(b)=\begin{cases}
      2, & \text{if }\traced(b)\ \text{is a non-square element in}\ \mathbb{F}_{q}, \\
      0, & \text{if }\traced(b)\ \text{is a square element in}\ \mathbb{F}_{q}.     \\
    \end{cases}
  \end{equation}
   Moreover, there are $\frac{q^2+q-2}{2}$ elements $b\in\mathbb{F}_{q^2}^*$ for which $\delta(b)=0$, and $\frac{q^2-q}{2}$ elements $b\in\mathbb{F}_{q^2}^*$ for which $\delta(b)=2$.
  \end{proposition}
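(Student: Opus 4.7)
The plan is to specialise the formula \eqref{20241018equation2} for $\delta(c+dZ)$ to characteristic $p=3$ and then carry out a very short case analysis. In characteristic $3$ we have $3=0$ and $2=-1$, so writing $b=c+dZ$ with $c,d\in\mathbb{F}_q$ not both zero, the defining system
\[
3x^2-y^2\alpha=c,\qquad 2xy=d
\]
collapses to
\[
y^2\alpha=-c,\qquad xy=-d.
\]
Thus counting $\delta(b)$ reduces to counting $\mathbb{F}_q$-solutions of this drastically simpler system.

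Next I would split into two cases. If $c=0$, the first equation forces $y=0$ (because $\alpha\ne 0$), and then $xy=0$, so we would need $d=0$; but this contradicts $b\ne 0$, giving $\delta(b)=0$. If $c\ne 0$, the first equation has a solution iff $-c/\alpha\in\mathbb{F}_q^*$ is a square, in which case it has exactly two roots $y=\pm y_0$ with $y_0\ne 0$, and the second equation determines $x$ uniquely in each case (namely $x=-d/y$ when $d\ne 0$, and $x=0$ when $d=0$); hence $\delta(b)=2$. When $-c/\alpha$ is a non-square we get $\delta(b)=0$.

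To match the formulation in the proposition I would use that $\alpha\in C_1$, so $\eta(-c/\alpha)=1$ iff $\eta(-c)=\eta(\alpha)=-1$, i.e.\ $-c/\alpha\in C_0$ iff $-c\in C_1$. Combining with $\traced(b)=2c=-c$ from \eqref{20241018equation3}, we obtain $\delta(b)=2$ iff $\traced(b)\in C_1$, and $\delta(b)=0$ otherwise (including the case $\traced(b)=0$). This proves \eqref{20241018equation4}.

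For the cardinalities, the map $\traced:\mathbb{F}_{q^2}\to\mathbb{F}_q$ is $\mathbb{F}_q$-linear and surjective, so every fibre has size $q$. Since $|C_1|=\tfrac{q-1}{2}$, the number of $b\in\mathbb{F}_{q^2}^*$ with $\traced(b)\in C_1$ is $q\cdot\tfrac{q-1}{2}=\tfrac{q^2-q}{2}$, and the complement in $\mathbb{F}_{q^2}^*$ has size $(q^2-1)-\tfrac{q^2-q}{2}=\tfrac{q^2+q-2}{2}$, matching the claim. There is essentially no obstacle here: the whole argument rides on the observation that the quadratic term $3x^2$ vanishes in characteristic $3$, decoupling $y$ from $x$ in the first equation; the only mildly delicate step is the sign/square bookkeeping when rewriting $\eta(-c/\alpha)$ in terms of $\eta(\traced(b))$.
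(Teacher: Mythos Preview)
Your proof is correct and follows essentially the same route as the paper's own argument: both specialise \eqref{20241018equation2} to $p=3$ (where $3x^2$ vanishes), reduce to the system $y^2=-c/\alpha$, $xy=-d$, do the two-line case split on $c=0$ versus $c\ne 0$, and then translate the squareness condition on $-c/\alpha$ into a condition on $\traced(b)=2c=-c$ via $\alpha\in C_1$. The counting step is also identical. The only cosmetic difference is that the paper writes $2c$ and $2d$ rather than $-c$ and $-d$, and displays the two solutions explicitly rather than saying ``$x$ is determined uniquely.''
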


\begin{proof}
  By equation (\ref{20241018equation2}), for any $c,d\in\mathbb{F}_q$, we have 
  \begin{equation*}
    \delta(c+dZ)=\#\left\{(x,y)\in\mathbb{F}_q^2:\ \begin{cases}
      y^2=\frac{2c}{\alpha}\\
      xy=2d
    \end{cases}\right\}.
  \end{equation*}
  Then it is clear that for any $d\in\mathbb{F}_q^*$, we have $\delta(dZ)=0$. Moreover, for any $c\in\mathbb{F}_q^*$ and $d\in\mathbb{F}_q$, we have
  \begin{align*}
    &\left\{(x,y)\in\mathbb{F}_q^2:\ \begin{cases}
      y^2=\frac{2c}{\alpha}\\
      xy=2d
    \end{cases}\right\}\\
    =\ &\begin{cases}
      \emptyset,&\mbox{if }2c\ \mbox{is square in }\mathbb{F}_q,\\
      \left\{(\frac{2d}{\sqrt{\frac{2c}{\alpha}}},\sqrt{\frac{2c}{\alpha}}),(\frac{2d}{-\sqrt{\frac{2c}{\alpha}}},-\sqrt{\frac{2c}{\alpha}})\right\},&\mbox{otherwise},
    \end{cases}
  \end{align*} 
  where $\pm\sqrt{\frac{2c}{\alpha}}$ are the two square roots of $\frac{2c}{\alpha}$. Then equation (\ref{20241018equation4}) follows immediately from equation (\ref{20241018equation3}).

Next, we want to compute how many elements \( b \in \mathbb{F}_{q^2}^* \) satisfy \( \delta(b) = 2 \), which is equivalent to saying that \( \traced(b) \) is a non-square element in \( \mathbb{F}_q \). Recall that the function \( \traced: \mathbb{F}_{q^2} \to \mathbb{F}_q \) is a surjective \( \mathbb{F}_q \)-linear map, and its kernel contains \( q \) elements. Consequently, each element in \( \mathbb{F}_q \) has \( q \) preimages under $\traced$. Since there are \( \frac{q-1}{2} \) non-square elements in \( \mathbb{F}_q \), it follows that there are \( \frac{q(q-1)}{2} = \frac{q^2 - q}{2} \) elements \( b \in \mathbb{F}_{q^2}^* \) for which \( \delta(b) = 2 \).

As a result, the number of elements \( b \in \mathbb{F}_{q^2}^* \) such that \( \delta(b) = 0 \) can be computed as follows: 
\[
q^2 - 1 - \frac{q^2 - q}{2} = \frac{q^2 + q - 2}{2}.
\] 
Thus, there are \( \frac{q^2 + q - 2}{2} \) elements \( b \in \mathbb{F}_{q^2}^* \) with \( \delta(b) = 0 \).
\end{proof}

\begin{remark}
  We can also use equation (\ref{20240622equation1}) to compute the number of elements $b \in \mathbb{F}_{q^2}^*$ such that $\delta(b) = 0$ and $\delta(b) = 2$, respectively, once we establish that $\delta(b) \in \{0, 2\}$ for any $b \in \mathbb{F}_{q^2}^*$.
\end{remark}

Next, we address the case where $p>3$.

\begin{proposition}
  Assume that $p>3$. Then
  \begin{equation}\label{20241018equation9}
    \delta(3)=\begin{cases}
      4, & \text{if }q\equiv 5\ ({\rm{mod}}\ 6), \\
      2, & \text{otherwise}.
    \end{cases}
  \end{equation}
  In particular, if $q\equiv 5\pmod{6}$, then $\delta_f=4$.
\end{proposition}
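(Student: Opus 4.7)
The plan is to plug $c=3$ into the clean expression for $\delta(c)$ on the ``trace-zero'' branch of $\mathbb{F}_{q^2}$, namely equation~(\ref{20241018equation10}), and reduce the problem to a quadratic-residue question about $-3$, which is exactly the content of Lemma~\ref{20240620lemma6}.

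First I would observe that since $3\in\mathbb{F}_q^*$, equation~(\ref{20241018equation10}) gives
\[
\delta(3)=\#\{x\in\mathbb{F}_q:\ 3x^2=3\}+\#\{y\in\mathbb{F}_q:\ y^2=-3/\alpha\}.
\]
The first set is $\{\pm 1\}$ (here $p>3$, so $3\ne 0$), contributing $2$ to $\delta(3)$. Thus everything reduces to deciding when $-3/\alpha$ is a nonzero square in $\mathbb{F}_q$.

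Next I would use that $\alpha$ is by choice a non-square in $\mathbb{F}_q^*$. Hence $-3/\alpha$ is a square in $\mathbb{F}_q$ if and only if $-3$ is a non-square in $\mathbb{F}_q$. By Lemma~\ref{20240620lemma6} (which requires $p>3$, matching our hypothesis), the latter is equivalent to $q\equiv 5\pmod 6$. Therefore the second set contributes $2$ when $q\equiv 5\pmod 6$ and $0$ otherwise, which yields formula~(\ref{20241018equation9}).

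Finally, for the ``in particular'' statement, I would invoke Lemma~\ref{20241018lemma4}, which bounds $\delta(b)\le 4$ for every $b\in\mathbb{F}_{q^2}^*$, together with Proposition~\ref{20240620prop4} giving $\delta(0)=1$ for $p>3$. Since $\delta(3)=4$ is attained whenever $q\equiv 5\pmod 6$, the maximum $\delta_f=\max_b\delta(b)$ equals $4$. No step here looks like a serious obstacle — the entire argument is a direct specialisation of (\ref{20241018equation10}) followed by Lemma~\ref{20240620lemma6}; the only thing to watch is confirming that the hypothesis $p>3$ of that lemma is consistent with all appearances of $3$ and $-3$ in the computation (so that $3\ne 0$ in $\mathbb{F}_q$ and the quadratic residue symbol of $-3$ is well defined), which is exactly the running assumption of this proposition.
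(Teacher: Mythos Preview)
Your proof is correct and follows essentially the same route as the paper: specialize equation~(\ref{20241018equation10}) at $c=3$, reduce the second summand to the quadratic character of $-3$ via the non-squareness of $\alpha$, apply Lemma~\ref{20240620lemma6}, and conclude $\delta_f=4$ from Lemma~\ref{20241018lemma4}. Your added remark that $\delta(0)=1$ (Proposition~\ref{20240620prop4}) is a harmless extra detail the paper omits, since $1<4$ anyway.
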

\begin{proof}
  By equation (\ref{20241018equation10}), we have
  \begin{align*}
    \delta(3)&=\#\left\{x\in\mathbb{F}_q:\ x^2=1\right\}+\#\left\{y\in\mathbb{F}_q:\ y^2=-\frac{3}{\alpha}\right\}\\
    &=\begin{cases}
      4,&\mbox{if }-3\ \mbox{is a non-square element in }\mathbb{F}_q,\\
      2,&\mbox{otherwise}.
    \end{cases}
  \end{align*}
  Then equation (\ref{20241018equation9}) follows from Lemma \ref{20240620lemma6},  and the second assertion follows from Lemma \ref{20241018lemma4}.
\end{proof}

\begin{proposition}\label{20240626prop2}
  Assume that $q\equiv 5\pmod{6}$. Then there are exactly $q-1$ elements $b\in\mathbb{F}_{q^2}^*$ such that $\delta(b)=2$.
\end{proposition}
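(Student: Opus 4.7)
The plan is to partition $\mathbb{F}_{q^2}^*$ into $\mathbb{F}_q^*$ and $\mathbb{F}_{q^2}\setminus\mathbb{F}_q$, evaluate $\delta(b)$ on each part using equations (\ref{20241018equation10}) and (\ref{20241018equation11}) respectively, and show that all contributions to $\#\{b:\delta(b)=2\}$ come from the second part and total exactly $q-1$. A key ingredient throughout is that when $q\equiv 5\pmod 6$, Lemma \ref{20240620lemma6} gives $-3\in C_1$; combined with $\alpha\in C_1$, this forces $-3\alpha\in C_0$, and therefore $-\alpha/3\in C_0$ as well.

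First I would handle $b=c\in\mathbb{F}_q^*$. By equation (\ref{20241018equation10}), $\delta(c)$ is the sum of $\#\{x:x^2=c/3\}$ and $\#\{y:y^2=-c/\alpha\}$, each of which equals $0$ or $2$. The ratio of the two targets $(c/3)/(-c/\alpha)=-\alpha/3$ is a non-zero square, so the two targets share the same quadratic character. Hence $\delta(c)\in\{0,4\}$ for every $c\in\mathbb{F}_q^*$, and this half of $\mathbb{F}_{q^2}^*$ contributes nothing.

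Next, for $b=c+dZ$ with $d\ne 0$, I would apply equation (\ref{20241018equation11}), which relates $\delta(b)$ to the count of $y\in C_0$ solving $p(y):=3y^2-cy-d^2\alpha/4=0$. This quadratic has discriminant $\Delta=c^2+3d^2\alpha$, and the product of its two roots equals $(d/2)^2\cdot(-\alpha/3)$, a non-zero square. Thus whenever $\Delta\in C_0$, the two distinct roots of $p$ lie in the same quadratic class, giving $\delta(b)\in\{0,4\}$; whenever $\Delta\in C_1$, $p$ has no roots in $\mathbb{F}_q$ and $\delta(b)=0$. Therefore $\delta(b)=2$ occurs precisely when $\Delta=0$ and the double root $y_0=c/6$ lies in $C_0$.

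Finally, I would count pairs $(c,d)\in\mathbb{F}_q\times\mathbb{F}_q^*$ satisfying $c^2+3d^2\alpha=0$ and $c/6\in C_0$. Since $-3\alpha\in C_0$, one can fix $\gamma\in\mathbb{F}_q^*$ with $\gamma^2=-3\alpha$; then the vanishing of $\Delta$ with $d\ne 0$ forces $c=\varepsilon d\gamma$ for some $\varepsilon\in\{\pm 1\}$. For each fixed $\varepsilon$, the set of $d\in\mathbb{F}_q^*$ with $\varepsilon d\gamma/6\in C_0$ is a multiplicative coset of $C_0$ of cardinality $(q-1)/2$. Summing over $\varepsilon\in\{\pm 1\}$, and noting that the two signs give distinct values of $c$ for a fixed $d$, yields exactly $2\cdot(q-1)/2=q-1$ pairs, hence $N_2=q-1$. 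I expect the main obstacle to be the middle step — the case analysis on the discriminant $\Delta$, and specifically the argument that two distinct roots in $\mathbb{F}_q$ are forced into the same quadratic class — because this is where the hypothesis $q\equiv 5\pmod 6$ enters decisively, via the identification of the product of roots as a non-zero square.
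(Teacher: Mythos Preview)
Your proof is correct and follows essentially the same approach as the paper: both split $\mathbb{F}_{q^2}^*$ into $\mathbb{F}_q^*$ and its complement, use that $-3\in C_1$ forces $\delta(c)\in\{0,4\}$ for $c\in\mathbb{F}_q^*$, then observe that the product of the roots of $3y^2-cy-d^2\alpha/4$ lies in $C_0$ so that $\delta(c+dZ)=2$ only when the discriminant $c^2+3d^2\alpha$ vanishes and $c/6\in C_0$. Your final count via the parametrization $c=\varepsilon d\gamma$ is a minor repackaging of the paper's count of $2\cdot\#C_0=q-1$.
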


\begin{proof}
  Since $q\equiv 5\pmod{6}$, $-3$ is a non-square element in $\mathbb{F}_q$. It follows that for any
  $c\in\mathbb{F}_q^*$, either both of $\frac{c}{3}$ and $\frac{-c}{\alpha}$ are in $C_0$ or neither of them is in $C_0$. By equation (\ref{20241018equation10}), we have $\delta(c)=0$ or $4$. For any $c\in\mathbb{F}_q$ and $d\in\mathbb{F}_q^*$, put $f_{c,d}(y)=3y^2-cy-\frac{d^2\alpha}{4}$. Then by equation (\ref{20241018equation11}), $\delta(c+dZ)=2$ if and only if one of the following two cases occurs:
  \begin{enumerate}[(1)]
    \item $f_{c,d}(y)$ has exactly one root in $\mathbb{F}_q^*$ and it is in $C_0$;
    \item $f_{c,d}(y)$ has two roots in $\mathbb{F}_q^*$ and exactly one of them is in $C_0$.
  \end{enumerate} 
  Let $y_1,y_2$ be the two roots of $f_{c,d}(y)$ in $\mathbb{F}_{q^2}^*$. Then $y_1y_2=\frac{-d^2\alpha}{12}\in C_0$. Hence, case (2) cannot occur. Note that $f_{c,d}(y)$ has exactly one root in $\mathbb{F}_q^*$ if and only if the discriminant $\Delta=c^2+3d^2\alpha=0$. Moreover, in this case, the only root of $f_{c,d}(y)$ is $\frac{c}{6}$. Hence $\delta(c+dZ)=2$ if and only if $\frac{c}{6}\in C_0$ and $d^2=-\frac{c^2}{3\alpha}$. There are $2\cdot\# C_0=q-1$ such elements.   This completes the proof.
\end{proof}

\begin{remark}
  In this case, we can take $\alpha=-3$. Then it follows from the proof that for any $b\in\mathbb{F}_{q^2}^*$, $\delta(b)=2$ if and only if $b=2c(3+\omega)$ or $2c(3-\omega)$ for some $c\in C_0$, where $\omega$ is a square root of $-3$ in $\mathbb{F}_{q^2}$. The latter condition is equivalent to saying that one of the following two elements is a square element in $\mathbb{F}_q$:
  $$\frac{b}{2(3+\omega)}\quad\mbox{and}\quad\frac{b}{2(3-\omega)}.$$
\end{remark}

Summarizing the previous results, we obtain the following main theorem of this section.
\begin{theorem}\label{20240629them1}
  \begin{enumerate}[(1)]
    \item If $p=3$, then $\delta_f=q$ and the differential spectrum of $f$ is
          $${\rm{DS}}_f=\{\omega_0=\frac{q^2+q-2}{2},\ \omega_2=\frac{q^2-q}{2},\ \omega_q=1\}.$$
          In particular, $f$ is locally APN.
    \item If $q\equiv 1\pmod{6}$, then $\delta_f=2$ and the differential spectrum of $f$ is
          $${\rm{DS}}_f=\{\omega_0=\frac{q^2-1}{2},\ \omega_1=1,\ \omega_2=\frac{q^2-1}{2}\}.$$
          In particular, $f$ is APN.
    \item If $q\equiv 5\pmod{6}$, then $\delta_f=4$ and the differential spectrum of $f$ is
          $${\rm{DS}}_f=\{\omega_0=\frac{(3q+1)(q-1)}{4},\ \omega_1=1,\ \omega_2=q-1,\ \omega_4=\frac{(q-1)^2}{4}\}.$$
  \end{enumerate}
\end{theorem}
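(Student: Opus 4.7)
The theorem assembles the three cases and computes their differential spectra from the results established in this section together with the two identities in (\ref{20240622equation1}) (which, for a function over $\mathbb{F}_{q^2}$, each equal $q^2$). My plan is to handle each case in turn, with essentially all the new work concentrated in the $q\equiv 1\ ({\rm mod}\ 6)$ case.

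For (1), the case $p=3$, the statement should fall out immediately from Propositions \ref{20240620prop4} and \ref{20240626prop1}. Proposition \ref{20240620prop4} yields $\delta(0)=q$, which under the correspondence $\delta(b)=\delta_f(1,b+\tfrac14)$ gives $\omega_q=1$; Proposition \ref{20240626prop1} then splits the remaining $q^2-1$ nonzero values of $b$ according to whether $\traced(b)$ is a square in $\mathbb{F}_q$, directly producing $\omega_0=\tfrac{q^2+q-2}{2}$ and $\omega_2=\tfrac{q^2-q}{2}$. The locally APN property is a tautological consequence, as the only $c\in\mathbb{F}_{q^2}$ with $\delta_f(1,c)>2$ corresponds to $b=0$, i.e.\ $c=\tfrac14\in\mathbb{F}_p$.

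For (3), the case $q\equiv 5\ ({\rm mod}\ 6)$, I would combine Proposition \ref{20240620prop4} (which gives $\omega_1=1$ since $\delta(0)=1$ when $p>3$), the proposition on $\delta(3)$ (which yields $\delta_f=4$ in view of Lemma \ref{20241018lemma4}), and Proposition \ref{20240626prop2} (which gives $\omega_2=q-1$). The two remaining unknowns $\omega_0$ and $\omega_4$ are then pinned down uniquely by the two linear relations in (\ref{20240622equation1}); a routine $2\times 2$ solve then produces $\omega_4=\tfrac{(q-1)^2}{4}$ and $\omega_0=\tfrac{(3q+1)(q-1)}{4}$.

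For (2), the case $q\equiv 1\ ({\rm mod}\ 6)$, the substantive task is to prove $\delta_f=2$; once this is known, the counts follow from (\ref{20240622equation1}) exactly as in the previous case. The crux, and the step I expect to be the main obstacle, is to show $\delta(b)\ne 4$ for every $b\in\mathbb{F}_{q^2}^*$. For $b=c\in\mathbb{F}_q^*$, formula (\ref{20241018equation10}) forces both $\tfrac{c}{3}$ and $-\tfrac{c}{\alpha}$ into $C_0$ whenever $\delta(c)=4$, so their quotient $-\tfrac{\alpha}{3}$ must be a square in $\mathbb{F}_q$, i.e., $-3\alpha\in C_0$; since $\alpha\in C_1$ this amounts to $-3\in C_1$, which by Lemma \ref{20240620lemma6} is precluded by $q\equiv 1\ ({\rm mod}\ 6)$. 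For $b=c+dZ$ with $d\ne 0$, formula (\ref{20241018equation11}) forces both roots of $3y^2-cy-\tfrac{d^2\alpha}{4}$ into $C_0\subset\mathbb{F}_q^*$, so in particular their product $-\tfrac{d^2\alpha}{12}$ is a square in $\mathbb{F}_q$, which again reduces to $-3\alpha\in C_0$ and fails for the same reason. Combining this with $\delta(0)=1$ and the proposition on $\delta(3)$ (which supplies the lower bound $\delta_f\ge 2$), we conclude $\delta_f=2$; plugging $\omega_1=1$ into (\ref{20240622equation1}) yields $\omega_0=\omega_2=\tfrac{q^2-1}{2}$, and APN is immediate.
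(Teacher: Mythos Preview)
Your proposal is correct and follows the paper's approach in all three cases. For (2) you are in fact more explicit than the paper's printed proof of the theorem: you supply the argument that $\delta(b)\ne 4$ (via the ratio in (\ref{20241018equation10}) and the product of roots in (\ref{20241018equation11}), both reducing to $-3\in C_1$), whereas the paper simply writes the system in $\omega_0,\omega_1,\omega_2$ and cites Proposition~\ref{20240620prop4} and Lemma~\ref{20241018lemma4}; your reasoning is the mirror image of that used in the proof of Proposition~\ref{20240626prop2}.
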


\begin{proof}

Statement (1) follows directly from Proposition \ref{20240620prop4} and Proposition \ref{20240626prop1}. We will now prove statement (2). By using equation (\ref{20240622equation1}), we have the system of equations:
$$
\begin{cases}
    \omega_0 + \omega_1 + \omega_2 = q^2, \\
    \omega_1 + 2\omega_2 = q^2.
\end{cases}
$$
From Proposition \ref{20240620prop4} and Lemma \ref{20241018lemma4}, we find that $\omega_1 = 1$. Consequently, we can easily obtain that $\omega_0 = \frac{q^2 - 1}{2}$ and $\omega_2 = \frac{q^2 - 1}{2}$.

Finally, we prove Statement (3). Again, using equation (\ref{20240622equation1}), we have the following system of equations:
$$
\begin{cases}
    \omega_0 + \omega_1 + \omega_2 + \omega_4 = q^2, \\
    \omega_1 + 2\omega_2 + 4\omega_4 = q^2.
\end{cases}
$$
From Proposition \ref{20240620prop4}, Lemma \ref{20241018lemma4} and Proposition \ref{20240626prop2}, we find that $\omega_1 = 1$ and $\omega_2 = q - 1$. Then it is straightforward to derive that $\omega_0 = \frac{(3q + 1)(q - 1)}{4}$ and $\omega_4 = \frac{(q - 1)^2}{4}$.  This completes the proof.
\end{proof}

\section{The boomerang uniformity of $x^{q+2}$ over $\mathbb{F}_{q^2}$}\label{section3}
In this section, we analyze the boomerang uniformity of the power function \( f(x)=x^{q+2} \) for the cases where \( q \equiv 1 \) or \( 3\ ({\rm{mod}}\ 6)\), excluding the case where \( p = 5 \) and \( m \) is even.

For any \( b \in \mathbb{F}_{q^2} \), we put
\[
\beta(b) := \beta_f\left(1, \frac{1}{4}b\right) = \#\left\{(x, y) \in \mathbb{F}_{q^2}^2:\  
\begin{cases} 
f(x) - f(y) = \frac{1}{4}b \\ 
D_1f(x) = D_1f(y) 
\end{cases}
\right\}.
\]

By making the substitutions \( x = u - \frac{1}{2} \) and \( y = v - \frac{1}{2} \), and applying equation (\ref{20240628equation1}), we can see that \( \beta(b) \) corresponds to the number of solutions \( (u, v) \in \mathbb{F}_{q^2}^2 \) for the following system of equations:
\begin{equation}\label{20241214equation1}
  \begin{cases}(u-\frac{1}{2})^{q+2}-(v-\frac{1}{2})^{q+2}=\frac{1}{4}b,\\2u^{q+1}+u^2=2v^{q+1}+v^2.\end{cases}
\end{equation}
Expanding the first equation yields
$$(u^{q+2}+\frac{1}{4}u^q+\frac{1}{2}u)-(u^{q+1}+\frac{1}{2}u^2)-(v^{q+2}+\frac{1}{4}v^q+\frac{1}{2}v)+(v^{q+1}+\frac{1}{2}v^2)=\frac{1}{4}b.$$
Using the second equation to eliminate equal terms, we deduce that the system (\ref{20241214equation1}) is equivalent to the following system:
$$\begin{cases}(4u^{q+2}+u^q+2u)-(4v^{q+2}+v^q+2v)=b,\\2u^{q+1}+u^2=2v^{q+1}+v^2.\end{cases}$$

For any $x,y\in\mathbb{F}_q$, we have 
\begin{align*}
&4(x+yZ)^{q+2}+(x+yZ)^q+2(x+yZ)\\
=\ &(4x^3-4xy^2\alpha+3x)+(4x^2y-4y^3\alpha+y)Z,
\end{align*}
which, combined with equation (\ref{20241018equation1}), implies that for any $c,d\in\mathbb{F}_q$, $\beta(c+dZ)$ equals the number of solutions $(x,y,s,t)\in\mathbb{F}_q^4$ to the following system of equations:
\begin{equation}\label{20241214equation2}
  \begin{cases}(4x^3-4xy^2\alpha+3x)-(4s^3-4st^2\alpha+3s)=c,\\
    (4x^2y-4y^3\alpha+y)-(4s^2t-4t^3\alpha+t)=d,\\
    3x^2-y^2\alpha=3s^2-t^2\alpha,\\
    xy=st.
    \end{cases}
\end{equation}

Let ${\rm{1}}_{\{0\}}$ be the indicator function of the set $\{0\}$. For any $c,d\in\mathbb{F}_q$ and $i,j\in\{0,1\}$, let $\beta_{ij}(c+dZ)$ denote the number of solutions $(x,y,s,t)\in\mathbb{F}_q^4$ to the system (\ref{20241214equation2}) such that ${\rm{1}}_{\{0\}}(s)=i$ and ${\rm{1}}_{\{0\}}(y)=j$. For any $c,d\in\mathbb{F}_q$, we have
\begin{align}\label{20241216equation1}
  \beta_{11}(c+dZ)&=\#\left\{(x,t)\in\mathbb{F}_q^2:\ \begin{cases}
    4x^3+3x=c\\
   4t^3\alpha-t=d\\
    -3x^2=t^2\alpha
  \end{cases}\right\}
\end{align}
and 
\begin{align*}
  \beta_{10}(c+dZ)&=\#\left\{(x,y,t)\in\mathbb{F}_q^3:\ \begin{cases}
    4x^3-4xy^2\alpha+3x=c\\
    (4x^2y-4y^3\alpha+y)+4t^3\alpha-t=d\\
    3x^2-y^2\alpha=-t^2\alpha\\
    xy=0\\
    y\ne 0
  \end{cases}\right\}\\
  &=\#\left\{(y,t)\in\mathbb{F}_q^*\times\mathbb{F}_q:\ \begin{cases}
    0=c\\
    -4y^3\alpha+y+4t^3\alpha-t=d\\
    y^2=t^2
  \end{cases}\right\}\\
  &=\#\left\{(y,t)\in\mathbb{F}_q^*\times\mathbb{F}_q:\ \begin{cases}
    0=c\\
    -4y^3\alpha+y+4t^3\alpha-t=d\\
    y=\pm t
  \end{cases}\right\}.
\end{align*}
Then it is clear that \(\beta_{10}(c + dZ) = 0\) if \(c \neq 0\), and for any \(d \in \mathbb{F}_q^*\), we have the following inequality:
\begin{align}\label{20241216equation5}
  \beta_{10}(2dZ) &= \#\left\{y \in \mathbb{F}_q^*:\ -4\alpha y^3 + y = d\right\} \le 3.
\end{align}

In a similar manner, we can express \(\beta_{01}(c + dZ)\) as follows:
$$\beta_{01}(c + dZ) = \#\left\{(x, s) \in \mathbb{F}_q \times \mathbb{F}_q^* :\ \begin{cases}
  4x^3 + 3x - (4s^3 + 3s) = c \\
  d = 0 \\
  3x^2 = 3s^2
\end{cases}\right\}.$$
Then it is clear that \(\beta_{01}(c + dZ) = 0\) if \(d \neq 0\). Furthermore, if \(p > 3\), then we have
\begin{equation}\label{20241217equation2}
  \beta_{01}(2c) = \#\left\{x \in \mathbb{F}_q^*:\ 4x^3 + 3x = c\right\} \le 3
\end{equation}
for any \(c \in \mathbb{F}_q^*\). On the other hand, if \(p = 3\), then
\begin{align}
  \beta_{01}(c) &= \#\left\{(x, s) \in \mathbb{F}_q \times \mathbb{F}_q^* :\  x - s = \sqrt[3]{c}\right\} = q - 1\label{20241217equation4}
\end{align}
for any \(c \in \mathbb{F}_q^*\).

\subsection{The case $q\equiv 1\ ({\rm{mod}}\ 6)$}

In this subsection, we assume that $q\equiv 1\ ({\rm{mod}}\ 6)$. Then $-3$ is a square element in $\mathbb{F}_q$.

We begin by examining \(\beta_{11}(c + dZ)\). If there exist elements \(x_0, t_0 \in \mathbb{F}_q^*\) such that \(-3x_0^2 = t_0^2 \alpha\), then it follows that \(-3 = \frac{t_0^2 \alpha}{x_0^2} \in C_1\), which leads to a contradiction. Therefore, according to equation (\ref{20241216equation1}), we conclude that \(\beta_{11}(c + dZ) = 0\) if \(c + dZ \neq 0\).

From our previous discussion, we know that for any \(c, d \in \mathbb{F}_q^*\), the following conclusions holds \(\beta_{10}(c) = 0\), \(\beta_{01}(c) \leq 3\), \(\beta_{10}(dZ) \leq 3\), \(\beta_{01}(dZ) = 0\), and \(\beta_{10}(c + dZ) = \beta_{01}(c + dZ) = 0\). Therefore, for any \(c, d \in \mathbb{F}_q\) with \(c + dZ \neq 0\), we have 
\begin{equation}\label{20241216equation2}
  \beta_{10}(c+dZ)+\beta_{01}(c+dZ)\begin{cases}
  =0, & \text{if }c\ne 0\ \text{and }d\ne 0,\\
  \le 3, & \text{if }cd=0.
\end{cases}
\end{equation}

Finally, we consider \(\beta_{00}(c + dZ)\). We first investigate the existence of elements \(x, t \in \mathbb{F}_q\) and \(y, s \in \mathbb{F}_q^*\) such that the following system of equations holds:
\[
\begin{cases}
  3x^2 - y^2\alpha = 3s^2 - t^2\alpha, \\
  xy = st,
\end{cases}
\]
This system is equivalent to the following system:
\[
\begin{cases}
  -3s^2\left(\frac{x^2}{s^2} - 1\right) = \alpha y^2\left(\frac{t^2}{y^2} - 1\right), \\
  \frac{x}{s} = \frac{t}{y},
\end{cases}
\]
From this, we can conclude that either \(\frac{x}{s} = \frac{t}{y} \in \{\pm 1\}\) or \(-3s^2 = \alpha y^2\). However, the latter case contradicts Lemma \ref{20240620lemma6}. Therefore, we must have \(\frac{x}{s} = \frac{t}{y} \in \{\pm 1\}\). 

Consequently, for any \(c, d \in \mathbb{F}_q\) with \(c + dZ \neq 0\), we have
\begin{align*}
  \beta_{00}(2c+2dZ)&=\#\left\{(x,y)\in{\mathbb{F}_q^*}^2:\ \begin{cases}4x^3-4xy^2\alpha+3x=c\\
    4x^2y-4y^3\alpha+y=d
    \end{cases}\right\}\\
    &=\#\left\{(x,y)\in{\mathbb{F}_q^*}^2:\ \begin{cases}4x\cdot\norm(x+yZ)=c-3x\\
      4y\cdot\norm(x+yZ)=d-y
      \end{cases}\right\}.
\end{align*}
For any $d\in\mathbb{F}_q^*$, we have 
\begin{align}
  \beta_{00}(2dZ)&=\#\left\{(x,y)\in{\mathbb{F}_q^*}^2:\ \begin{cases}\norm(x+yZ)=-\frac{3}{4}\\
    4y\cdot\norm(x+yZ)=d-y
    \end{cases}\right\}\notag\\
    &=\#\left\{(x,y)\in{\mathbb{F}_q^*}^2:\ \begin{cases}\norm(x+yZ)=-\frac{3}{4}\\
      y=-\frac{d}{2}
      \end{cases}\right\}\notag\\
      &=\#\left\{x\in\mathbb{F}_q^*:\ x^2=\frac{d^2\alpha-3}{4}\right\}\le 2.\label{20241216equation3}
\end{align}
Similarly, for any $c\in\mathbb{F}_q^*$, we have 
\begin{equation}\label{20241216equation4}
  \beta_{00}(2c)=\#\left\{y\in\mathbb{F}_q^*:\ y^2=\frac{c^2+1}{4\alpha}\right\}\le 2.
\end{equation}

Now assume that $c,d\in\mathbb{F}_q^*$. Then 
\begin{align}
  &\beta_{00}(2c+2dZ)\notag\\
  =\ &\#\left\{(x,y)\in{\mathbb{F}_q^*}^2:\ \begin{cases}4x\cdot\norm(x+yZ)=c-3x\\
      4y\cdot\norm(x+yZ)=d-y
      \end{cases}\right\}\notag\\
       =\ &\#\left\{(x,y)\in{\mathbb{F}_q^*}^2:\ \begin{cases}4x\cdot\norm(x+yZ)=c-3x\\
      \frac{x}{y}=\frac{c-3x}{d-y}\\
        d\ne y
                \end{cases}\right\}\notag\\
               =\ &\#\left\{(x,y)\in{\mathbb{F}_q^*}^2:\ \begin{cases}4x^3-4xy^2\alpha+3x=c\\
        (d+2y)x=cy\\
          d\ne y
                   \end{cases}\right\}\notag\\ 
        =\ &\#\left\{y\in\mathbb{F}_q^*:\ \begin{cases}4(\frac{cy}{d+2y})^3-4(\frac{cy}{d+2y})y^2\alpha+3(\frac{cy}{d+2y})=c\\
            d\ne y,\ d+2y\ne 0
                     \end{cases}\right\}\notag\\
        =\ &\#\left\{z\in\mathbb{F}_q^*:\ \begin{cases}
          4c^2z^3-4\alpha z^3(\frac{d}{1-2z})^2+3z=1\\
          z\ne\frac{1}{2},\ z\ne\frac{1}{3}
        \end{cases}\right\}\ \ (z=\frac{y}{d+2y})\notag\\
        =\ &\#\left\{z\in\mathbb{F}_q:\ 16c^2z^5-16c^2z^4+4(c^2-\alpha d^2+3)z^3-16z^2+7z-1=0\right\}. 
        &&\label{20241217equation10}     
\end{align}

By equations (\ref{20241216equation2}), (\ref{20241216equation3}), (\ref{20241216equation4}) and (\ref{20241217equation10}), we have the following conclusion.

\begin{proposition}
  If $q\equiv 1\ ({\rm{mod}}\ 6)$, then $\beta_f\le 5$.
\end{proposition}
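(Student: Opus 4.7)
The plan is to assemble the bounds on $\beta_{00}$, $\beta_{01}$, $\beta_{10}$, $\beta_{11}$ derived in the preceding paragraphs via a routine case analysis. Since $f$ is a power function, $\beta_f = \max_{b \in \mathbb{F}_{q^2}^*}\beta_f(1,b)$, and because $\beta_f(1, b/4) = \beta(b)$, it suffices to prove $\beta(b) \leq 5$ for every $b \in \mathbb{F}_{q^2}^*$. By construction, $\beta(b) = \beta_{00}(b) + \beta_{01}(b) + \beta_{10}(b) + \beta_{11}(b)$, and the hypothesis $q \equiv 1 \pmod 6$ has already been used (via Lemma \ref{20240620lemma6}, which forbids $-3 = t_0^2\alpha/x_0^2 \in C_1$ in equation (\ref{20241216equation1})) to conclude $\beta_{11}(b) = 0$ whenever $b \neq 0$. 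So the task reduces to bounding $\beta_{00}(b) + \beta_{01}(b) + \beta_{10}(b)$.

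Next I would write $b = 2c + 2dZ$ with $c, d \in \mathbb{F}_q$ not both zero, and split into three cases. If both $c$ and $d$ are nonzero, equation (\ref{20241216equation2}) immediately gives $\beta_{01}(b) = \beta_{10}(b) = 0$, while (\ref{20241217equation10}) exhibits $\beta_{00}(b)$ as the number of $\mathbb{F}_q$-roots of an explicit polynomial of degree $5$; hence $\beta(b) \leq 5$. If $d = 0$ and $c \neq 0$, then (\ref{20241216equation2}) gives $\beta_{10}(b) = 0$, (\ref{20241217equation2}) gives $\beta_{01}(b) \leq 3$, and (\ref{20241216equation4}) gives $\beta_{00}(b) \leq 2$, totalling at most $5$. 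The remaining case $c = 0$, $d \neq 0$ is symmetric: (\ref{20241216equation5}) and (\ref{20241216equation3}) yield $\beta_{10}(b) \leq 3$ and $\beta_{00}(b) \leq 2$, with $\beta_{01}(b) = 0$, giving $\beta(b) \leq 5$ once more.

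Taking the maximum over $b \in \mathbb{F}_{q^2}^*$ then gives $\beta_f \leq 5$. There is no genuine obstacle remaining: all the substantive work was concentrated in deriving the quintic polynomial of (\ref{20241217equation10}), and the bound $5$ is extracted solely from its degree. I note, however, that each of the three cases is individually capable of attaining $5$, so any sharpening of the proposition would require additional arithmetic control over the roots of that quintic or of the cubics appearing in (\ref{20241217equation2}) and (\ref{20241216equation5}); the present argument deliberately makes no such attempt.
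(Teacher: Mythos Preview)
Your proposal is correct and follows essentially the same approach as the paper: the paper's proof is the one-line statement that the result follows from equations (\ref{20241216equation2}), (\ref{20241216equation3}), (\ref{20241216equation4}), and (\ref{20241217equation10}), and your argument simply unpacks this into the three-case analysis you describe. One minor remark: when $d=0$ you cite (\ref{20241216equation2}) for $\beta_{10}(2c)=0$, but strictly speaking that vanishing is established in the sentence immediately preceding (\ref{20241216equation2}) rather than by the inequality itself; this is cosmetic and does not affect the argument.
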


We also have the following observation.
\begin{lemma}
  If $q\equiv 1\ ({\rm{mod}}\ 6)$, then for any $c\in\mathbb{F}_q^*$, we have 
  \begin{align*}
    \beta(2c)\in\begin{cases}
      \{3\}, & \text{if }c^2+1\in C_1,\\
      \{0,3\}, & \text{if }c^2+1\in C_0,\\
      \{0,1,2\}, & \text{if }c^2+1=0.
    \end{cases}
  \end{align*}
\end{lemma}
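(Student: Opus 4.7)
The plan is to split $\beta(2c) = \beta_{00}(2c) + \beta_{01}(2c) + \beta_{10}(2c) + \beta_{11}(2c)$ using the decomposition introduced just before (\ref{20241216equation1}), evaluate each piece under the hypothesis $q \equiv 1 \pmod 6$, and then assemble the case analysis. By Lemma \ref{20240620lemma6}, $-3 \in C_0$, so the condition $-3x^2 = t^2 \alpha$ appearing in (\ref{20241216equation1}) (with $\alpha \in C_1$) forces $x = t = 0$; but then $4x^3 + 3x = 2c \neq 0$ is violated, giving $\beta_{11}(2c) = 0$. The discussion preceding (\ref{20241216equation5}) also shows $\beta_{10}(c'+d'Z) = 0$ whenever $c' \neq 0$, so $\beta_{10}(2c) = 0$. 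All of the contribution therefore comes from $\beta_{00}(2c) + \beta_{01}(2c)$.

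For $\beta_{00}(2c)$ I will read (\ref{20241216equation4}) directly: since $\alpha$ is a fixed non-square, $(c^2+1)/(4\alpha) \in C_0$ if and only if $c^2 + 1 \in C_1$. Consequently $\beta_{00}(2c) = 2$ when $c^2+1 \in C_1$ and $\beta_{00}(2c) = 0$ in the remaining two cases (when $c^2+1 = 0$ the constraint forces $y = 0$, which is excluded).

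The main step will be $\beta_{01}(2c)$, which by (\ref{20241217equation2}) counts the $\mathbb{F}_q$-roots of the cubic $g(x) = 4x^3 + 3x - c$. After the normalization $g(x)/4 = x^3 + (3/4)x - c/4$, a short calculation yields the discriminant $-27(c^2+1)/16$. The key observation is that $-27 = -3 \cdot 9 \in C_0$ (again using $-3 \in C_0$) and $16 \in C_0$, so the square class of the discriminant coincides with that of $c^2+1$. I then apply Dickson's criterion (Theorem \ref{20241219them10}): if $c^2+1 \in C_1$ the cubic is of type $(1,2)$ and contributes $\beta_{01}(2c) = 1$; if $c^2+1 \in C_0$ it is of type $(1,1,1)$ or $(3)$ and contributes $\beta_{01}(2c) \in \{0, 3\}$; if $c^2+1 = 0$ the discriminant vanishes, and a direct check confirms that $x = c/2$ is a double root and $x = -c$ a simple root (distinct and both nonzero since $p > 3$), so $\beta_{01}(2c) = 2$.

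Adding the two contributions yields exactly the three cases of the statement. The main obstacle will be tracking the sign of the discriminant and translating it into a square-class assertion; this reduces to the single input $\eta(-3) = 1$ provided by Lemma \ref{20240620lemma6}, so it is essentially routine once the normalization is set up. A minor subtlety worth flagging is that in the degenerate case $c^2+1 = 0$ the analysis actually pins down $\beta(2c) = 2$ exactly, so the stated bound $\{0, 1, 2\}$ is correct but not sharp in that case.
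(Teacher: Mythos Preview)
Your proof is correct and follows essentially the same approach as the paper: both reduce to $\beta(2c)=\beta_{01}(2c)+\beta_{00}(2c)$, read $\beta_{00}(2c)$ off equation (\ref{20241216equation4}), and control $\beta_{01}(2c)$ via the discriminant $-27(c^2+1)/16$ of the cubic together with Theorem \ref{20241219them10}, using $\eta(-3)=1$ from Lemma \ref{20240620lemma6}. Your treatment of the degenerate case $c^2+1=0$ is actually sharper than the paper's: you explicitly factor $4x^3+3x-c=4(x-c/2)^2(x+c)$ to get $\beta(2c)=2$ exactly, whereas the paper only invokes the remark after Theorem \ref{20241219them10} to bound $\beta_{01}(2c)\le 2$.
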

\begin{proof}
  By equations (\ref{20241217equation2}) and (\ref{20241216equation4}), we have 
  \begin{align*}
    \beta(2c)&=\beta_{01}(2c)+\beta_{00}(2c)\\
    &=\#\left\{x\in\mathbb{F}_q^*:\ x^3+\frac{3}{4}x-\frac{c}{4}=0\right\}+\#\left\{y\in\mathbb{F}_q^*:\ y^2=\frac{c^2+1}{4\alpha}\right\}
  \end{align*}
  for any $c\in\mathbb{F}_q$.

  It is not difficult  to notice that for any \( c \in \mathbb{F}_q^* \), \(\beta_{00}(2c) = 2\) if and only if \( c^2 + 1 \in C_1\), and \(\beta_{00}(2c) = 0\) if and only if \( c^2 + 1 \in C_0\). 

Additionally, note that
\[
-4\left(\frac{3}{4}\right)^3 - 27\left(\frac{c}{4}\right)^2 = \frac{-27(c^2 + 1)}{4^2}.
\]
If \( c^2 + 1 \neq 0 \), then by Theorem \ref{20241219them10}, we have \(\beta_{01}(2c) = 1\) if and only if \( c^2 + 1 \in C_1\). Consequently, if \( c^2 + 1 \in C_1\), then \(\beta(2c) = 3\). On the other hand, if \( c^2 + 1 \in C_0\), then \(\beta(2c) \in \{0, 3\}\).

Finally, if \( c^2 + 1 = 0\), then \(\beta_{00}(2c) = 0\) and \(\beta_{01}(2c) \leq 2\), as indicated in the remark of Theorem \ref{20241219them10}, which implies that \(\beta(2c) \in \{0, 1, 2\}\). This completes the proof.

\end{proof}

We begin by considering the case where \( p > 5 \). The following simple lemma will be utilized in the proof of Theorem \ref{20241217them2}.

\begin{lemma}\label{20241216lemma4}
  If the polynomial \( x^6 + bx^3 + c \in \mathbb{F}_q[x] \) with \( c \neq 0 \) is the square of a polynomial, then it must hold that \( b^2 - 4c = 0 \).
\end{lemma}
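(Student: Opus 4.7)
The plan is to write a hypothetical square root explicitly and match coefficients. Suppose $x^6+bx^3+c=g(x)^2$ for some $g(x)\in\mathbb{F}_q[x]$. Comparing degrees forces $\deg g=3$, and comparing leading coefficients gives $a^2=1$, where $a$ is the leading coefficient of $g$; hence $a=\pm 1$. After replacing $g$ by $-g$ if necessary, I may assume $g$ is monic and write $g(x)=x^3+\alpha x^2+\beta x+\gamma$ with $\alpha,\beta,\gamma\in\mathbb{F}_q$.

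The next step is to expand
\begin{align*}
  g(x)^2&=x^6+2\alpha x^5+(2\beta+\alpha^2)x^4+(2\gamma+2\alpha\beta)x^3\\
  &\quad+(2\alpha\gamma+\beta^2)x^2+2\beta\gamma\,x+\gamma^2
\end{align*}
and match it coefficient by coefficient against $x^6+bx^3+c$. Since $p$ is odd, $2$ is invertible in $\mathbb{F}_q$; the $x^5$-coefficient therefore forces $\alpha=0$, and then the $x^4$-coefficient forces $\beta=0$. Once $\alpha=\beta=0$, the $x^2$- and $x^1$-coefficient conditions hold automatically, while the $x^3$-coefficient yields $b=2\gamma$ and the constant term yields $c=\gamma^2$. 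Squaring the first relation and substituting the second produces $b^2=4\gamma^2=4c$, i.e.\ $b^2-4c=0$, as claimed.

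No step presents a genuine obstacle; this is essentially a direct undetermined-coefficients calculation. The only point that needs to be explicitly invoked is the standing hypothesis that $p$ is odd, which guarantees that $2$ is invertible in $\mathbb{F}_q$ and thereby makes the cancellations $2\alpha=0\Rightarrow\alpha=0$ and $2\beta=0\Rightarrow\beta=0$ legitimate. The assumption $c\ne 0$ plays no role in the derivation itself; it merely reflects the context in which the lemma will be applied.
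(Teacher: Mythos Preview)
Your proof is correct and follows essentially the same approach as the paper's own proof: write the hypothetical square root as a monic cubic with undetermined coefficients, expand, and match. The only cosmetic difference is in which coefficients are used to force the quadratic and linear terms of the square root to vanish---the paper compares the constant term and the coefficient of $x$ (using $c\ne 0$ to conclude $\gamma\ne 0$ and hence $\beta=0$ from $2\beta\gamma=0$), whereas you use the $x^4$-coefficient directly, which is slightly cleaner and, as you observe, avoids invoking $c\ne 0$.
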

\begin{proof}
  Assume that 
  \[
  x^6 + bx^3 + c = (x^3 + ex^2 + fx + g)^2
  \]
  for some elements \( e, f, g \) in the algebraic closure of \( \mathbb{F}_q \). By comparing the coefficients of the constant term, \( x \), \( x^5 \), and \( x^6 \) on both sides of the equation, we find that \( e = f = 0 \). This leads us to the following simplification:
  \[
  x^6 + bx^3 + c = (x^3 + g)^2 = x^6 + 2gx^3 + g^2.
  \]
  From this, we can conclude that \( b^2 - 4c = (2g)^2 - 4g^2 = 0 \).   This completes the proof.
\end{proof}

The result presented below is the primary outcome of this subsection.

\begin{theorem}\label{20241217them2}
  If \( p > 5 \), \( q \equiv 1 \pmod{6} \), and \( q \not\in \{ 7, 19, 313 \} \), then \( \beta_f = 5 \).
\end{theorem}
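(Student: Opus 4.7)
Since the preceding analysis of this subsection has established $\beta_f \le 5$, the task reduces to producing, for each admissible $q$, a pair $(c,d) \in (\mathbb{F}_q^*)^2$ for which the quintic
\[
P_{c,d}(z) = 16c^2 z^5 - 16c^2 z^4 + 4(c^2 - \alpha d^2 + 3)z^3 - 16z^2 + 7z - 1
\]
has five distinct roots in $\mathbb{F}_q$. The plan is two-phase: an asymptotic count via character sums that handles all sufficiently large $q$, combined with a direct verification for the few remaining small values of $q$.

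The main observation is that, after normalising $P_{c,d}$ by its leading coefficient $16c^2$, five distinct roots $z_1,\ldots,z_5 \in \mathbb{F}_q$ force their elementary symmetric functions $e_1,\ldots,e_5$ to satisfy
\[
e_1 = 1,\qquad e_3 = \tfrac{1}{c^2},\qquad e_4 = \tfrac{7 e_3}{16},\qquad e_5 = \tfrac{e_3}{16},
\]
with $c$ and $d$ then recovered via $c^2 = 1/e_3$ and $\alpha d^2 = (1 + 3e_3 - 4 e_2)/e_3$. Eliminating $c$, the three purely symmetric equations $e_1 = 1$, $16 e_4 = 7 e_3$, $16 e_5 = e_3$ cut out an affine variety $V \subset \mathbb{A}^5$ of expected dimension two. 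Conversely, every $\mathbb{F}_q$-point of $V$ with distinct coordinates yields four admissible pairs $(c,d) \in (\mathbb{F}_q^*)^2$ provided both $e_3 \in C_0$ (so that $c^2 = 1/e_3$ has a solution in $\mathbb{F}_q^*$) and $(1 + 3 e_3 - 4 e_2)/(\alpha e_3) \in C_0$ (so that $d^2$ is a nonzero square in $\mathbb{F}_q$).

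Next one applies the Lang--Weil estimate $|V(\mathbb{F}_q)| = q^2 + O(q^{3/2})$, contingent on the absolute irreducibility of the relevant component of $V$, and handles the two quadratic-residue conditions through character sums controlled by Theorem \ref{20240904them3}. Each square condition preserves about half the tuples, so the number $M$ of admissible $(c,d) \in (\mathbb{F}_q^*)^2$ should satisfy $M = \frac{q^2}{120} + O(q^{3/2})$. This is positive once $q$ exceeds an explicit threshold, and for the remaining $q \equiv 1 \pmod 6$ below the threshold one performs an exhaustive computer search over $(c,d)$; the excluded primes $\{7, 19, 313\}$ should emerge as exactly those values for which no splitting $(c,d)$ exists.

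The principal obstacle will be the geometric step: proving that $V$ is absolutely irreducible, equivalently that the monodromy group of the family $\{P_{c,d}\}$ over the function field $\mathbb{F}_q(c,d)$ is the full symmetric group $S_5$. A practical route is to specialise to a single pair $(c_0, d_0)$ for which $P_{c_0, d_0}$ has Galois group $S_5$ over $\mathbb{F}_q$ (verified via an irreducibility and resolvent check) and transfer this to the generic fibre by a standard specialization argument. A further delicate point is that the two square conditions in the parametrization involve a pair of quadratic characters whose joint distribution over $V$ must not degenerate; controlling these cross-correlations is what produces the explicit constants needed to isolate the exceptional set $\{7, 19, 313\}$.
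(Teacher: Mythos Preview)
Your outline has a real gap at the geometric step: the absolute irreducibility of the surface $V$ (equivalently, that the Galois group of $P_{c,d}$ over $\mathbb{F}_q(c,d)$ is the full $S_5$) is asserted but not established. The specialization idea is sound in principle---if some $P_{c_0,d_0}$ has Galois group $S_5$ over $\mathbb{F}_q$ then the generic group contains and hence equals $S_5$---but you have not exhibited such a pair, nor argued that one exists uniformly in $q$. Even granting irreducibility, Lang--Weil comes with constants depending on the degree and embedding of $V$; making these explicit enough to pin down the threshold, and then to certify that exactly $\{7,19,313\}$ fall below it, is substantial further work that you have not done. As written, the proposal is a plausible strategy, not a proof.

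The paper avoids all of this by restricting to the line $c=0$ rather than the full two-parameter family. For $b=2dZ$ one has $\beta(2dZ)=\beta_{10}(2dZ)+\beta_{00}(2dZ)$, where $\beta_{10}$ counts roots of a cubic and $\beta_{00}$ roots of a quadratic; thus $\beta=5$ reduces to a fully split cubic (characterized by Theorem~\ref{20241219them10}) together with a single square condition. A parametrization turns the existence of a suitable $d$ into finding $x\in\mathbb{F}_q$ with $x^6+x^3$ a nonsquare cube and $2^2 3^4(x^6+x^3)+1$ a nonsquare. This is a one-variable problem, and the Weil bound (Theorem~\ref{20240904them3}) applied to the relevant order-six character sums gives the completely explicit inequality $12N\ge q-29\sqrt{q}-156$, forcing $N>0$ once $q\ge 1132$ and leaving only a handful of small $q$ for direct computation. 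Your route through the generic quintic is not wrong in spirit, but it replaces a clean one-dimensional Weil estimate with a two-dimensional Lang--Weil problem whose constants you have not controlled.
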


\begin{proof}
Recall that \(\alpha\) is an arbitrary non-square element in \(\mathbb{F}_q\). Hence we only need to prove that there exists \(\alpha \in C_1\) such that \(\beta(8\alpha Z) = 5\).

By equations (\ref{20241216equation5}) and (\ref{20241216equation3}), we have
\begin{align*}
  \beta(8\alpha Z) &= \beta_{10}(8\alpha Z) + \beta_{00}(8\alpha Z)\\
  &= \#\{ y \in \mathbb{F}_q^* :\ y^3 - \frac{1}{4\alpha}y + 1 = 0 \} + \#\{ x \in \mathbb{F}_q^*:\ x^2 = \frac{16\alpha^3 - 3}{4} \}.
\end{align*}
Thus, \(\beta(8\alpha Z) = 5\) if and only if \(16\alpha^3 - 3 \in C_0\) and the cubic equation \(y^3 - \frac{1}{4\alpha}y + 1 = 0\) has three distinct solutions in \(\mathbb{F}_q^*\). 

Let \(\alpha' = \frac{1}{4\alpha}\). Then it suffices to show that there exists \(\alpha' \in C_1\) such that \(1 - 12\alpha'^3 \in C_1\) and that the cubic equation \(y^3 - \alpha'y + 1 = 0\) has three distinct solutions in \(\mathbb{F}_q^*\). According to Theorem \ref{20241219them10}, this cubic equation has three distinct solutions in \(\mathbb{F}_q^*\) if \(4\alpha'^3 - 27 \in C_0\) and \(\frac{1}{2}(-1 + c\omega)\) is a cube in \(\mathbb{F}_q\), where \(\omega\) is a square root of \(-3\) in \(\mathbb{F}_q\) and \(c \in \mathbb{F}_q^*\) is such that \(4\alpha'^3 - 27 = 81c^2\).

We claim that it suffices to find an \(x \in \mathbb{F}_q\) such that \(2x^3 + 1 \neq 0\), \(x^6 + x^3\) is a cube in \(\mathbb{F}_q\), \(x^6 + x^3 \in C_1\), and \(2^2 3^4 (x^6 + x^3) + 1 \in C_1\). Indeed, if such an \(x\) exists, we can take \(\alpha' \in \mathbb{F}_q\) to be a cubic root of \(-27(x^6 + x^3)\). Since \(-3 \in C_0\) and \(x^6 + x^3 \in C_1\), it follows that \(\alpha' \in C_1\). Moreover, we have
\[
1 - 12\alpha'^3 = 1 + 12 \cdot 27(x^6 + x^3) = 2^2 3^4 (x^6 + x^3) + 1 \in C_1,
\]
and
\[
4\alpha'^3 - 27 = -27(4x^6 + 4x^3 + 1) = -27(2x^3 + 1)^2 \in C_0.
\]
We can take \(c = \frac{2x^3 + 1}{\omega}\), ensuring that \(\frac{1}{2}(-1 + c\omega) = x^3\) is a cube in \(\mathbb{F}_q\). Thus, the claim is proved.

  We will prove that such an \( x \) exists. Let \( N \) be the number of such \( x \), and let \( \chi \) be a multiplicative character of \( \mathbb{F}_q \) of order $6$. We define \( \eta = \chi^3 \), which is the quadratic character of \( \mathbb{F}_q \), and \( \psi = \chi^2 \), which is a multiplicative character of \( \mathbb{F}_q \) with an order of 3. 

Define the following two polynomials:
\[
f(x) = x^6 + x^3,\qquad g(x) = x^6 + x^3 + \frac{1}{2^2 3^4}.
\]
Then, utilizing the basic properties of multiplicative characters, we have
  \begin{align*}
    12N&=\sum\limits_{x\in\mathbb{F}_q\setminus\Omega}\sum\limits_{i=0}^2\psi^i\big(f(x)\big)\sum\limits_{j=0}^1(-1)^j\eta^j\big(f(x)\big)\sum\limits_{k=0}^1(-1)^k\eta^k\big(g(x)\big)\\
    &=\sum\limits_{i=0}^2\sum\limits_{j=0}^1\sum\limits_{k=0}^1(-1)^{j+k}\sum\limits_{x\in\mathbb{F}_q\setminus\Omega}\chi\Big(\big(f(x)\big)^{2i+3j}\big(g(x)\big)^{3k}\Big)\\
    &=\sum\limits_{i=0}^2\sum\limits_{j=0}^1\sum\limits_{k=0}^1(-1)^{j+k}S_{i,j,k}-T,
  \end{align*}
  where $\Omega=\{x\in\mathbb{F}_q:\ 2x^3+1=0\ \mbox{or}\ x^6+x^3=0\ \mbox{or}\ 2^23^4(x^6+x^3)+1=0\}$,
  $$T=\sum\limits_{i=0}^2\sum\limits_{j=0}^1\sum\limits_{k=0}^1(-1)^{j+k}\sum\limits_{x\in\Omega}\chi\Big(\big(f(x)\big)^{2i+3j}\big(g(x)\big)^{3k}\Big),$$
  and
  $$S_{i,j,k}=\sum\limits_{x\in\mathbb{F}_q}\chi\Big(\big(f(x)\big)^{2i+3j}\big(g(x)\big)^{3k}\Big)$$
  for any $i\in\{0,1,2\}$ and $j,k\in\{0,1\}$. It is clear that $\#\Omega\le 3+4+6=13$ and thus
  $$|T|\le 3\cdot 2\cdot 2\cdot\#\Omega\le 156.$$
  
  Now we consider the character sums \( S_{i,j,k} \). We have \( S_{0,0,0} = q \). By Lemma \ref{20241216lemma4}, the polynomial \( g(x) \) is the square of a polynomial if and only if \( 1 - 4 \cdot \left( \frac{1}{2^2 3^4} \right)^2 = 0 \). This simplifies to the condition \( 5 \equiv 0 \ (\text{mod}\ p) \). Since \( p > 5 \), this cannot occur.  Therefore, \( g(x)^3 \) is not the sixth power of a polynomial.

Moreover, note that 
\[
f_{i,j,k}(x) := \left( f(x) \right)^{2i + 3j} \left( g(x) \right)^{3k} = x^{3(2i + 3j)} (x^3 + 1)^{2i + 3j} g(x)^{3k},
\]
where \( x \), \( x^3 + 1 \), and \( g(x) \) are pairwise coprime. Consequently, the multiplicity of the factor \( x^3 + 1 \) in \( f_{i,j,k}(x) \) is \( 2i + 3j \), which cannot be a multiple of 6 for any \( i \in \{ 0, 1, 2 \} \) and \( j \in \{ 0, 1 \} \) with \( i + j > 0 \). 

Thus, for any \( i \in \{ 0, 1, 2 \} \) and \( j, k \in \{ 0, 1 \} \) with \( i + j + k > 0 \), \( f_{i,j,k}(x) \) is not the sixth power of a polynomial. By Theorem \ref{20240904them3}, we have

  $$|S_{i,j,k}|\le\begin{cases} 
    (1+3-1)\sqrt{q}=3\sqrt{q},&\mbox{if }i+j>0\ \mbox{and }k=0,\\
    (6-1)\sqrt{q}=5\sqrt{q},&\mbox{if }i+j=0\ \mbox{and }k=1,\\
    (1+3+6-1)\sqrt{q}=9\sqrt{q},&\mbox{if }i+j>0\ \mbox{and }k=1.
  \end{cases}
    $$

It follows that
$$12N \ge q - 156 - (5 + 2 \cdot 9 + 2 \cdot 3) \sqrt{q} = q - 29\sqrt{q} - 156.$$
Thus, if \( q \ge 1132 \), it can be concluded that \( N > 0 \). For the case where \( 7 \le q < 1132 \), we directly verify the existence of such an \( x \) using a Python program. We find that the only counterexamples are \( 7, 13, 19, 31, 37, 43, 79, 139, \) and \( 313 \). 

For these counterexamples, we utilize the Python program to calculate \( \beta_f \) directly and identify \( 7, 19, \) and \( 313 \) as the only counterexamples. The proof is, therefore, completed.
\end{proof}

\begin{remark}
  If $q=7$, then $\beta_f=3$; if $q=19$, then $\beta_f=4$; and if $q=313$, then $\beta_f=3$.
\end{remark}

The following observation clarifies why the case \( p = 5 \) cannot be addressed using the method employed to prove Theorem \ref{20241217them2}.

\begin{lemma}
If \( p = 5 \) and \( m \) is even, then for any \( d \in \mathbb{F}_q^* \), we have 
\[
\beta(2dZ) \in \begin{cases}
  \{3\}, & \text{if } d^2\alpha + 2 \in C_0, \\
  \{0, 3\}, & \text{if } d^2\alpha + 2 \in C_1.
\end{cases}
\]
\end{lemma}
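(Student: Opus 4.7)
The plan is to decompose $\beta(2dZ)$ using the partition into the four $\beta_{ij}$ terms established at the start of this subsection and reduce to counting roots of one quadratic and one cubic polynomial in characteristic~$5$. First I would note that $q = 5^m$ with $m$ even satisfies $q \equiv 1 \pmod 6$, since $5 \equiv -1 \pmod 6$; hence every argument in the preceding pages of this subsection applies verbatim. In particular $\beta_{11}(2dZ) = 0$ (the equation $-3x^2 = t^2\alpha$ with $x,t \neq 0$ is ruled out because $-3 \in C_0$), and $\beta_{01}(2dZ) = 0$ because the $Z$-component $2d$ is nonzero, so
\[
\beta(2dZ) = \beta_{10}(2dZ) + \beta_{00}(2dZ).
\]

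For the $\beta_{00}$ term, equation (\ref{20241216equation3}) gives $\beta_{00}(2dZ) = \#\{x \in \mathbb{F}_q^* : x^2 = (d^2\alpha - 3)/4\}$. Multiplying the right-hand side by the nonzero square $16$ and reducing modulo $5$ shows that $(d^2\alpha - 3)/4$ lies in $C_0$ if and only if $d^2\alpha + 2$ does. Moreover $d^2\alpha + 2 \neq 0$: indeed $3 \in \mathbb{F}_5^*$ is a non-square in $\mathbb{F}_5$ but becomes a square in $\mathbb{F}_q$ once $m$ is even (since then $\mathbb{F}_{25} \subset \mathbb{F}_q$), so $-2 = 3 \in C_0$ in $\mathbb{F}_q$, whereas $d^2\alpha \in C_1$ for $d \neq 0$. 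Hence $\beta_{00}(2dZ) = 2$ when $d^2\alpha + 2 \in C_0$ and $\beta_{00}(2dZ) = 0$ when $d^2\alpha + 2 \in C_1$.

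For $\beta_{10}$, equation (\ref{20241216equation5}) gives $\beta_{10}(2dZ) = \#\{y \in \mathbb{F}_q^* : -4\alpha y^3 + y = d\}$, which in characteristic $5$ reads $\alpha y^3 + y - d = 0$, and since $d \neq 0$ the root $y = 0$ is excluded automatically. I would apply Theorem \ref{20241219them10} to the depressed cubic $y^3 + \alpha^{-1} y - d\alpha^{-1}$, with $a = \alpha^{-1}$ and $b = -d\alpha^{-1}$. A direct reduction modulo $5$ yields
\[
-4a^3 - 27b^2 \;=\; \frac{1 + 3 d^2\alpha}{\alpha^3} \;=\; \frac{3(d^2\alpha + 2)}{\alpha^3},
\]
and combining $\eta(3) = 1$ with $\eta(\alpha^{-3}) = -1$ gives $\eta(-4a^3 - 27b^2) = -\eta(d^2\alpha + 2)$. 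Therefore, when $d^2\alpha + 2 \in C_0$ the cubic is of type $(1,2)$ and $\beta_{10}(2dZ) = 1$, whereas when $d^2\alpha + 2 \in C_1$ it is of type $(1,1,1)$ or $(3)$ and $\beta_{10}(2dZ) \in \{0, 3\}$.

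Adding the two contributions yields exactly the stated dichotomy: $\beta(2dZ) = 1 + 2 = 3$ when $d^2\alpha + 2 \in C_0$, and $\beta(2dZ) \in \{0,3\}$ when $d^2\alpha + 2 \in C_1$. I anticipate no serious obstacle; the only step requiring some care is the characteristic-$5$ arithmetic that brings both the quadratic condition on the $\beta_{00}$ side and the cubic discriminant on the $\beta_{10}$ side into the single normalized form $d^2\alpha + 2$. It is precisely this coincidence---both tests being governed by $\eta(d^2\alpha + 2)$---that causes the value set of $\beta(2dZ)$ to collapse to the clean form claimed.
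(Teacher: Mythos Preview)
Your proof is correct and follows essentially the same approach as the paper: decompose $\beta(2dZ)=\beta_{10}(2dZ)+\beta_{00}(2dZ)$ via equations (\ref{20241216equation5}) and (\ref{20241216equation3}), reduce the characteristic-$5$ arithmetic so that both the quadratic condition and the cubic discriminant are governed by $\eta(d^2\alpha+2)$, and invoke Theorem~\ref{20241219them10}. The only cosmetic difference is your ``multiply by the square $16$'' maneuver versus the paper's direct substitution $-3=2$ in $\mathbb{F}_5$; the content is identical.
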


\begin{proof}
By equations (\ref{20241216equation5}) and (\ref{20241216equation3}), we have
\begin{align*}
  \beta(2dZ)&= \beta_{10}(2dZ) + \beta_{00}(2dZ) \\
&= \#\left\{ y \in \mathbb{F}_q^*:\ y^3 + \frac{1}{\alpha}y - \frac{d}{\alpha} = 0 \right\} + \#\left\{ x \in \mathbb{F}_q^*:\ x^2 = \frac{d^2\alpha + 2}{4} \right\}
\end{align*}
for any \( d \in \mathbb{F}_q^* \). Since \( m \) is even, we have \( 3 \in C_0 \), which implies that \( d^2\alpha + 2 \neq 0 \) for any \( d \in \mathbb{F}_q^* \). Therefore, \( \beta_{00}(2dZ) \) can be either \( 0 \) or \( 2 \) for any \( d \in \mathbb{F}_q^* \), and \( \beta_{00}(2dZ) = 2 \) if and only if \( d^2\alpha + 2 \in C_0 \). By Theorem \ref{20241219them10}, for any \( d \in \mathbb{F}_q^* \), \( \beta_{10}(2dZ) = 1 \) if and only if \( -4\left(\frac{1}{\alpha}\right)^3 - 27\left(-\frac{d}{\alpha}\right)^2 = \frac{3(d^2\alpha + 2)}{\alpha^3} \in C_1 \), i.e., \( d^2\alpha + 2 \in C_0 \). Thus, if \( d^2\alpha + 2 \in C_0 \), then \( \beta(2dZ) = 3 \), and if \( d^2\alpha + 2 \in C_1 \), then \( \beta(2dZ) = \beta_{10}(2dZ) \in \{0, 3\} \), which completes the proof.
\end{proof}

\begin{remark}
We conjecture that if \( p = 5 \) and \( m \ge 4 \) is even, then \( \beta_f \) is also \( 5 \).
\end{remark}

\subsection{The case $p=3$}
In this subsection, we assume that $p=3$. By equation (\ref{20241216equation1}), for any $c,d\in\mathbb{F}_q$, we have
\begin{align}
  \beta_{11}(c+dZ)&=\#\left\{(x,t)\in\mathbb{F}_q^2:\ \begin{cases}
    x^3=c\\
   t^3\alpha-t=d\\
    t=0
  \end{cases}\right\}\notag\\
  &=\#\left\{x\in\mathbb{F}_q:\ \begin{cases}
    x=\sqrt[3]{c}\\
   0=d\\
  \end{cases}\right\}=\begin{cases}
    1, & \text{if }d=0,\\
    0, & \text{if }d\ne 0.
  \end{cases}\label{20241217equation5}
\end{align}

By equation (\ref{20241216equation5}), for any $d\in\mathbb{F}_q^*$, we have
\begin{align*}
  \beta_{10}(2dZ)&=\#\left\{y\in\mathbb{F}_q^*:\ -\alpha y^3+y=d\right\}.
\end{align*}
Consider the map \(\mathbb{F}_q \rightarrow \mathbb{F}_q\) defined by \(y \mapsto -\alpha y^3 + y\). This map is \(\mathbb{F}_p\)-linear and its kernel is given by \(\{y \in \mathbb{F}_q:\ y = 0 \text{ or } y^2 = \frac{1}{\alpha}\}\). Since $\alpha\in C_1$, the kernel is $\{0\}$, and thus the map is bijective. Consequently, we find that \(\beta_{10}(2dZ) = 1\) for any \(d \in \mathbb{F}_q^*\). Additionally, by equation (\ref{20241217equation4}), for any \(c \in \mathbb{F}_q^*\), it follows that \(\beta_{01}(c) = q - 1\).

 Finally, from equation (\ref{20241214equation2}), for any \(c, d \in \mathbb{F}_q\) with \(c + dZ \neq 0\), we have
\begin{align*}
  &\beta_{00}(c+dZ)\\
  =\ &\#\left\{(x,y,s,t)\in\mathbb{F}_q^4:\ \begin{cases}(x^3-xy^2\alpha)-(s^3-st^2\alpha)=c\\
    (x^2y-y^3\alpha+y)-(s^2t-t^3\alpha+t)=d\\
    y^2=t^2\\
    xy=st\\
    y,s\ne 0
    \end{cases}\right\}\\
    =\ &\#\left\{(x,y,s,t)\in\mathbb{F}_q^4:\ \begin{cases}(x^3-xy^2\alpha)-(s^3-st^2\alpha)=c\\
      (x^2y-y^3\alpha+y)-(s^2t-t^3\alpha+t)=d\\
      y=-t\ne 0\\
      x=-s\ne 0
      \end{cases}\right\} \\
            =\ &\#\left\{(x,y)\in{\mathbb{F}_q^*}^2:\ \begin{cases}x^3-xy^2\alpha=2c\\
        x^2y-y^3\alpha+y=2d
              \end{cases}\right\}\\
              =\ &\#\left\{(x,y)\in{\mathbb{F}_q^*}^2:\ \begin{cases}
        x\cdot \norm(x+yZ)=2c\\
      y\cdot \norm(x+yZ)=2(d+y)
              \end{cases}\right\}\notag.
\end{align*}
Then it is clear that for any $d\in\mathbb{F}_q^*$, we have $\beta_{00}(dZ)=0$, and for any $c\in\mathbb{F}_q^*$, we have
\begin{align}
  \beta_{00}(c)&=\#\left\{(x,y)\in{\mathbb{F}_q^*}^2:\ \begin{cases}
    x\cdot \norm(x+yZ)=2c\\
  y\cdot \norm(x+yZ)=2y
          \end{cases}\right\}\notag\\
  &=\#\left\{(x,y)\in{\mathbb{F}_q^*}^2:\ \begin{cases}
    x=c\\
 \norm(x+yZ)=2
          \end{cases}\right\}\notag\\
  &=\#\left\{y\in\mathbb{F}_q^*:\ \norm(c+yZ)=2\right\}\notag\\
  &=\#\left\{y\in\mathbb{F}_q^*:\ y^2=\frac{c^2+1}{\alpha}\right\}=\begin{cases}
    2,&c^2+1\in C_1,\\
    0,&\mbox{otherwise}.
  \end{cases}\label{20241217equation7}
\end{align} 

Now assume that $c,d\in\mathbb{F}_q^*$. Then
\begin{align}
  &\beta_{00}(c+dZ)\notag\\
  =\ &\#\left\{(x,y)\in{\mathbb{F}_q^*}^2:\ \begin{cases}
    x\cdot \norm(x+yZ)=2c\\
  y\cdot \norm(x+yZ)=2(d+y)
          \end{cases}\right\}\notag\\
  =\ &\#\left\{(x,y)\in{\mathbb{F}_q^*}^2:\ \begin{cases}
    x\cdot \norm(x+yZ)=2c\\
    cy=x(d+y)
            \end{cases}\right\}\notag\\
            =\ &\#\left\{y\in\mathbb{F}_q^*\setminus\{-d\}:\ (\frac{cy}{d+y})^3-\frac{cy^3\alpha}{d+y}=2c\right\}\notag\\
            =\ &\#\left\{z\in\mathbb{F}_q^*\setminus\{1\}:\ z^5-\frac{d^2\alpha}{c^2}z^3-(1+\frac{1}{c^2})z^2+\frac{d^2\alpha}{c^2}=0\right\}\ \ (z=\frac{d}{d+y})\notag\\
            =\ &\#\left\{z\in\mathbb{F}_q:\ z^5-A(c,d)z^3-\Big(1+B(c)\Big)z^2+A(c,d)=0\right\},\label{20241214equation8}
\end{align}
where $A(c,d)=\frac{d^2\alpha}{c^2}$ and $B(c)=\frac{1}{c^2}$. Note that as $c,d$ vary in $\mathbb{F}_q^*$, $A(c,d)$ ($B(c)$, resp.) can represent any non-square (square, resp.) element in $\mathbb{F}_q^*$.

\begin{proposition}\label{20241214prop1}
  \begin{enumerate}[(1)]
    \item For any $c\in\mathbb{F}_q^*$, we have$$\beta(c)=\begin{cases}
  q+2,&\mbox{if }c^2+1\in C_1,\\
  q,&\mbox{otherwise}.
\end{cases}$$
\item For any $d\in\mathbb{F}_q^*$, we have $\beta(dZ)=1$. In particular, $\nu_1>0$.
\item For any $c,d\in\mathbb{F}_q^*$, we have $\beta(c+dZ)\in\{0,1,2,5\}$.
  \end{enumerate}
\end{proposition}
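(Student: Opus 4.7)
The plan is to use the decomposition $\beta(b)=\beta_{00}(b)+\beta_{01}(b)+\beta_{10}(b)+\beta_{11}(b)$ together with the explicit values of the four summands already established in the paragraphs just preceding the proposition; this reduces Parts (1) and (2) to bookkeeping and Part (3) to a polynomial root count.

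For Part (1) we substitute $b=c\in\mathbb{F}_q^*$ and read off $\beta_{11}(c)=1$ from (\ref{20241217equation5}), $\beta_{10}(c)=0$ from the vanishing criterion derived after (\ref{20241216equation5}), $\beta_{01}(c)=q-1$ from (\ref{20241217equation4}), and $\beta_{00}(c)\in\{0,2\}$ from (\ref{20241217equation7}), with $\beta_{00}(c)=2$ precisely when $c^2+1\in C_1$; summing gives the claimed $q$ or $q+2$. For Part (2) we take $b=dZ$ with $d\ne 0$, in which case $\beta_{11}(dZ)=\beta_{01}(dZ)=\beta_{00}(dZ)=0$ by the stated vanishing conditions, while $\beta_{10}(dZ)=1$ follows from (\ref{20241216equation5}) once one notes that the $\mathbb{F}_3$-linear map $y\mapsto -\alpha y^3+y$ has trivial kernel (because $\alpha\notin C_0$). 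Hence $\beta(dZ)=1$, and since $\beta_f(1,\tfrac{1}{4} dZ)=\beta(dZ)$ for every $d\in\mathbb{F}_q^*$, this yields $\nu_1>0$.

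The main work is Part (3). For $c,d\in\mathbb{F}_q^*$ the vanishing conditions force $\beta_{11}(c+dZ)=\beta_{10}(c+dZ)=\beta_{01}(c+dZ)=0$, so by (\ref{20241214equation8}) $\beta(c+dZ)$ equals the number of roots in $\mathbb{F}_q$ of
\[
p(z)=z^5-Az^3-(1+B)z^2+A,
\]
where $A=\alpha d^2/c^2\in C_1$ and $B=1/c^2\in C_0$. The crucial step is the characteristic-$3$ identity
\[
p(z)+z\,p'(z)=-A(z-1)^3,
\]
obtained by a direct calculation: one has $p'(z)=-z^4+(1+B)z$ (using $5\equiv -1$, $3\equiv 0$, $2\equiv -1\pmod 3$), so $zp'(z)=-z^5+(1+B)z^2$ and the sum $p+zp'$ collapses to $-A(z^3-1)=-A(z-1)^3$ via the Frobenius identity $(z-1)^3=z^3-1$. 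Two consequences follow: first, $p$ has no multiple roots, since a common zero of $p$ and $p'$ would force $A(r-1)^3=0$ hence $r=1$, contradicting $p(1)=-B\ne 0$; second, every root $r$ is nonzero (as $p(0)=A\ne 0$) and satisfies $p'(r)=-A(r-1)^3/r$. Combining this with the Vieta relations $\prod_i r_i=-A$ and $\prod_i(r_i-1)=-p(1)=B$ we compute
\[
\operatorname{Disc}(p)=\prod_{i=1}^{5}p'(r_i)=\frac{(-A)^5\,B^3}{-A}=A^4B^3.
\]

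Since $A^4$ and $B=1/c^2$ are both squares, $\operatorname{Disc}(p)\in C_0$; applying Stickelberger's parity theorem (Theorem \ref{20241219them12}) to the separable polynomial $p$, the number $k$ of its monic irreducible factors satisfies $(-1)^{5-k}=1$, hence $k$ is odd. Enumerating the separable degree-$5$ factorization types with an odd number of irreducible factors, namely $(5)$, $(1,1,3)$, $(1,2,2)$, and $(1,1,1,1,1)$, gives respectively $0$, $2$, $1$, and $5$ roots in $\mathbb{F}_q$, so $\beta(c+dZ)\in\{0,1,2,5\}$. The hardest step will be spotting the characteristic-$3$ identity $p+z\,p'=-A(z-1)^3$: without it, a direct computation of $\operatorname{Disc}(p)$ for the generic quintic would be laborious, and the appeal to Stickelberger would not go through as transparently.
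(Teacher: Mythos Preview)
Your argument is correct and follows the same overall structure as the paper's proof: Parts (1) and (2) are read off from the preceding $\beta_{ij}$ computations, and Part (3) reduces to the root count of the quintic $z^5-Az^3-(1+B)z^2+A$ followed by Stickelberger's parity theorem with discriminant $A^4B^3$. The one substantive difference is that the paper simply reports ``We use Mathematica to compute the discriminant \ldots\ which yields $A^4B^3$,'' whereas you derive it by hand via the characteristic-$3$ identity $p(z)+z\,p'(z)=-A(z-1)^3$ together with Vieta; this is a genuine improvement in transparency over the paper's black-box computation and also makes separability of $p$ explicit rather than implicit in $D(f)\neq 0$.
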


\begin{proof}

The points (1) and (2) follow directly  from the previous discussion. Now we prove the point (3). From the preceding analysis, for \(c,d \in \mathbb{F}_q^*\), it follows that \(\beta(c + dZ) = \beta_{00}(c + dZ)\). According to equation (\ref{20241214equation8}), we need to demonstrate that if \(A \in C_1\) and \(B \in C_0\), then the number \(N(f)\) of the roots of the polynomial \(f(z) = z^5 - Az^3 - (1 + B)z^2 + A\) in \(\mathbb{F}_q\) can only be 0, 1, 2, or 5.

We use Mathematica to compute the discriminant \(D(f)\) of the polynomial \(f(z)\), which yields \(A^4B^3\). Since \(B \in C_0\), we know that \(\eta\big(D(f)\big) = 1\). Let \(k\) represent the number of monic irreducible factors of \(f(z)\) over \(\mathbb{F}_q\). By Theorem \ref{20241219them12}, we establish that \(k\) must be an odd number. Therefore, the possible types of \(f\) are \((5)\), \((1,2,2)\), \((1,1,3)\), and \((1,1,1,1,1)\), which correspond to the cases where \(N(f) = 0\), \(1\), \(2\), and \(5\), respectively. The proof is thus completed.
\end{proof}

The following theorem is the main result of this subsection.

\begin{theorem}
  If $p=3$, then $\beta_f=q+2$. If $q\ge 9$, then 
  $$\nu_{q+2}=\begin{cases}
    \frac{q-1}{2},&\mbox{if }n\ \mbox{is even},\\
    \frac{q+1}{2},&\mbox{if }n\ \mbox{is odd},
  \end{cases}$$
  and
  $$\nu_q=\begin{cases}
    \frac{q-1}{2},&\mbox{if }n\ \mbox{is even},\\
    \frac{q-3}{2},&\mbox{if }n\ \mbox{is odd}.
  \end{cases}$$
\end{theorem}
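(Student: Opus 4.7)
The plan is to deduce both assertions directly from Proposition \ref{20241214prop1} combined with the cyclotomic numbers in Theorem \ref{20241218them11}. I would first observe that Proposition \ref{20241214prop1} stratifies $\mathbb{F}_{q^2}^*$ into three regions---$\mathbb{F}_q^*$, $\{dZ : d \in \mathbb{F}_q^*\}$, and $\{c+dZ : cd \ne 0\}$---on which $\beta(b)$ takes values respectively in $\{q, q+2\}$, $\{1\}$, and $\{0,1,2,5\}$. Since $q \ge 9$, both $q$ and $q+2$ strictly exceed $5$, so any $b$ contributing to $\nu_q$ or $\nu_{q+2}$ must lie in $\mathbb{F}_q^*$; in that case Proposition \ref{20241214prop1}(1) identifies $\beta(b) = q+2$ with $b^2+1 \in C_1$ and $\beta(b) = q$ with $b^2+1 \in C_0 \cup \{0\}$. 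The first assertion $\beta_f = q+2$ then reduces to showing that at least one $b \in \mathbb{F}_q^*$ satisfies $b^2+1 \in C_1$, which is a byproduct of the positive count of $\nu_{q+2}$ computed below.

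For $\nu_{q+2}$, I would set $y = b^2+1$: the map $b \mapsto y$ is a $2$-to-$1$ surjection from $\{b \in \mathbb{F}_q^* : b^2+1 \in C_1\}$ onto $\{y \in C_1 : y-1 \in C_0\}$, the case $y = 1$ being excluded automatically because it would force $b = 0$. Putting $x = y - 1$, this last set has cardinality $(0,1)$. Noting that $3^n \equiv 1 \pmod{4}$ if $n$ is even and $3^n \equiv 3 \pmod{4}$ if $n$ is odd, Theorem \ref{20241218them11} yields $(0,1) = \tfrac{q-1}{4}$ or $\tfrac{q+1}{4}$ accordingly, and doubling gives $\nu_{q+2} = \tfrac{q-1}{2}$ or $\tfrac{q+1}{2}$.

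For $\nu_q$, I would carry out the analogous count but split the condition $b^2+1 \in C_0 \cup \{0\}$ into two subcases. The subcase $b^2 = -1$ contributes two elements when $n$ is even (since $-1 \in C_0$ iff $q \equiv 1 \pmod{4}$) and none when $n$ is odd. The subcase $b^2+1 \in C_0$ contributes $2\,(0,0)$, equal to $\tfrac{q-5}{2}$ or $\tfrac{q-3}{2}$ by Theorem \ref{20241218them11}. Summing yields $\nu_q = \tfrac{q-1}{2}$ when $n$ is even and $\nu_q = \tfrac{q-3}{2}$ when $n$ is odd, as required. The only bookkeeping subtlety I anticipate is aligning the parity of $n$ correctly with the residue of $q$ modulo $4$ and remembering the $b^2 = -1$ contribution in the $n$-even case for $\nu_q$; beyond that the argument is entirely arithmetic and requires no additional character-sum machinery, since Proposition \ref{20241214prop1} has already absorbed all the analytic work.
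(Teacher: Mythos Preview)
Your proposal is correct and follows essentially the same route as the paper: invoke Proposition \ref{20241214prop1} to see that only $b\in\mathbb{F}_q^*$ can give $\beta(b)\in\{q,q+2\}$ once $q\ge 9$, then translate the condition $c^2+1\in C_1$ (resp.\ $c^2+1\in C_0\cup\{0\}$) into the cyclotomic count $2\cdot(0,1)$ (resp.\ $2\cdot(0,0)$ plus the $c^2=-1$ correction) via Theorem \ref{20241218them11}. The paper's proof is just the terse two-line version of your argument. One tiny remark: the first sentence $\beta_f=q+2$ is asserted for all $q=3^m$, including $q=3$, whereas your stratification and your ``byproduct'' remark both sit under the hypothesis $q\ge 9$; to cover $q=3$ simply note that the reduction you already stated---existence of $c$ with $c^2+1\in C_1$---is exactly $(0,1)>0$, which Theorem \ref{20241218them11} gives for every odd $q$.
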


\begin{proof}
  By Proposition \ref{20241214prop1} (1) and Theorem \ref{20241218them11}, we have 
  $$\nu_{q+2}=2\cdot(0,1)=\begin{cases}
    \frac{q-1}{2},&\mbox{if }q\equiv 1\ ({\rm{mod}}\ 4),\\
    \frac{q+1}{2},&\mbox{if }q\equiv 3\ ({\rm{mod}}\ 4),
  \end{cases}$$
  and 
  $$\nu_q=2\cdot(0,0)+\{c\in\mathbb{F}_q:\ c^2=-1\}=\begin{cases}
    \frac{q-1}{2},&\mbox{if }q\equiv 1\ ({\rm{mod}}\ 4),\\
    \frac{q-3}{2},&\mbox{if }q\equiv 3\ ({\rm{mod}}\ 4).
  \end{cases}$$
\end{proof}

We prove the following additional result on the boomerang spectrum of $f$.

\begin{proposition}
  If $q\ge 9$, then $\nu_2>0$; and if $q\ge 27$, then $\nu_5>0$.
\end{proposition}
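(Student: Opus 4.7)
The plan is to use equation (\ref{20241214equation8}): for $c, d \in \mathbb{F}_q^*$,
$$\beta(c+dZ) = N(f_{A,B}), \qquad f_{A,B}(z) = z^5 - Az^3 - (1+B)z^2 + A,$$
where $A = d^2\alpha/c^2 \in C_1$, $B = 1/c^2 \in C_0$, and $N(f_{A,B})$ denotes the number of $\mathbb{F}_q$-roots of $f_{A,B}$. The map $(c,d) \mapsto (A,B)$ is four-to-one onto $C_1 \times C_0$, so it suffices to exhibit $(A, B) \in C_1 \times C_0$ realizing $N(f_{A,B}) = 2$ and $N(f_{A,B}) = 5$ respectively. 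The key structural identity in characteristic $3$ is $1 - z^3 = (1-z)^3$, yielding $f_{A,B}(z) = (A - z^2)(1-z)^3 - B z^2$.

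For $\nu_5 > 0$, the natural first attempt is the specialization $A = 1 - B$. Then $z = -1$ is automatically a root and $f_{A,B}(z) = (z+1) h(z)$ for a quartic $h$; after $w = z - 1$, the quartic becomes $g(w) = w^4 + Bw^2 + B$, whose $\mathbb{F}_q$-roots correspond to $\mathbb{F}_q$-squares among the roots of the auxiliary quadratic $y^2 + By + B$. Since the product of these roots is $B \in C_0$, both share the same quadratic character, so $g$ has either $0$ or $4$ roots in $\mathbb{F}_q$. Parameterizing by $B = 1/(1-u^2)$, I plan to reduce the condition $N(f_{A,B}) = 5$ to producing $u \in \mathbb{F}_q \setminus \{0, \pm 1\}$ with $1 \pm u \in C_0$, and use Proposition \ref{20241219prop5} to obtain roughly $q/4$ such $u$, confirming $\nu_5 > 0$ at least in the $q \equiv 3 \pmod 4$ case. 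For the $q \equiv 1 \pmod 4$ case this direct substitution produces a non-square discriminant, so I will need a different specialization combined with Lemma \ref{20241219lemma7} to handle $q = 81$ and beyond.

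For $\nu_2 > 0$, the above specialization cannot succeed (it gives $1$ or $5$ roots only), so I plan a direct two-root construction: pick distinct tentative roots $z_1, z_2 \in \mathbb{F}_q^* \setminus \{1\}$ and solve the linear system $f_{A,B}(z_i) = 0$ for $(A,B)$, then verify that $A \in C_1$, $B \in C_0$, and that the residual cubic cofactor $f_{A,B}(z)/\bigl((z-z_1)(z-z_2)\bigr)$ is irreducible over $\mathbb{F}_q$. Theorem \ref{20241219them7} characterizes this irreducibility in characteristic $3$ through a squareness condition plus a trace condition on $b/c^3$, and Lemma \ref{20241219lemma7} is tailored precisely to supply elements with any prescribed value of the trace. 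A character-sum count over admissible $(z_1, z_2)$ should then yield existence for $q \ge 9$, with the small borderline cases $q = 9, 27$ verified by direct computation.

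The main obstacle in both parts is controlling the trace condition from Theorem \ref{20241219them7} that governs irreducibility of the cubic cofactor; the geometry of the elliptic-curve-like relation $x^3 - x = b^2$ encoded in Lemma \ref{20241219lemma7} is the natural tool, and the heart of the proof will consist in reconciling this trace requirement with the simultaneous quadratic-character constraints $A \in C_1$ and $B \in C_0$.
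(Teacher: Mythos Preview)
Your structural identity $f_{A,B}(z)=(A-z^2)(1-z)^3-Bz^2$ is correct and useful, and your $A=1-B$ specialization for $\nu_5$ genuinely works when $q\equiv 3\pmod 4$: the computation $B=1/(1-u^2)$ gives $y_{1,2}=1/(1\mp u)$, so $1\pm u\in C_0$ is exactly the right condition, and a standard character-sum count shows such $u$ exist for $q\ge 27$. This is a different and somewhat more elementary route than the paper's for that parity.

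However, there are two genuine gaps. First, for $q\equiv 1\pmod 4$ you yourself observe that the discriminant $B^2-B=-AB$ is a non-square whenever $A\in C_1$ and $B\in C_0$, so the $A=1-B$ specialization \emph{always} gives $N(f_{A,B})=1$; saying you ``will need a different specialization combined with Lemma~\ref{20241219lemma7}'' is not a plan. Second, your $\nu_2$ strategy of picking two roots $z_1,z_2$, solving linearly for $(A,B)$, and then hoping the residual cubic is irreducible leaves the hardest step unaddressed: you have not explained how the trace condition of Theorem~\ref{20241219them7} on the cubic's coefficients can be linked back to the curve $x^3-x=b^2$ of Lemma~\ref{20241219lemma7}, nor how a character-sum argument would simultaneously enforce $A\in C_1$, $B\in C_0$, and the trace obstruction.

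The paper avoids both difficulties with a single specialization that treats $\nu_2$ and $\nu_5$ at once. Taking $B=(A-1)^2(A+1)$ forces the factorization
\[
f_{A,B}(z)=\bigl(z^2-(A+1)z+1\bigr)\bigl(z^3+(A+1)z^2+A(A+1)z+A\bigr),
\]
so one needs the quadratic to split (a quadratic-character condition on $A(A-1)$) and the cubic to have either three roots or none. After the substitution $z\mapsto 1/z+A$, the cubic becomes $z^3+\frac{A+1}{A(1-A)}z+\frac{1}{A(1-A)}$, and the identity $y^3-y=\frac{A(A-1)}{(A+1)^3}$ with $y=-1/(A+1)$ ties the trace in Theorem~\ref{20241219them7} directly to $\mathrm{Tr}(b)$ on the curve $y^3-y=b^2$. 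Lemma~\ref{20241219lemma7} then supplies, for each $i\in\{0,1\}$, an $A$ with the required sign pattern $\eta(A)=-1$, $\eta(A+1)=1$, $\eta(A-1)=-1$ and the prescribed trace, giving $N=5$ for $i=0$ and $N=2$ for $i=1$, uniformly for $q\ge 81$. This is the missing idea in your proposal: a specialization that makes the quintic split as quadratic times cubic, so that Lemma~\ref{20241219lemma7} controls the cubic factor directly.
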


\begin{proof}

If \( q \geq 81 \), then by Lemma \ref{20241219lemma7}, for any \( i \in \{0, 1\} \), there exists \( x_i \in C_1 \) such that \( x_i^3 - x_i = b_i^2 \) for some \( b_i \in \mathbb{F}_q^* \) with \( \text{Tr}_{\mathbb{F}_q/\mathbb{F}_3}(b_i) = i \). For each \( i\in\{0,1\} \), we choose \( y_i \) from the set \( \{x_i, x_i + 1, x_i - 1\} \) such that the following conditions are satisfied: \( \eta(y_i) = \eta(-1) \), \( \eta(y_i - 1) = -\eta(-1) \), and \( \eta(y_i + 1) = -1 \). We then define \( A_i = -\frac{y_i + 1}{y_i} \) and \( B_i = (A_i - 1)^2 (A_i + 1) \).

Then, it is easy to verify that
\[
\eta(A_i) = -1, \quad \eta(A_i + 1) = 1, \quad \eta(A_i - 1) = -1,
\]
which implies that \( B_i \in C_0 \). We observe the following:
   \begin{align*}
    f^{(i)}(z)&:=z^5-A_iz^3-(1+B_i)z^2+A_i\\
    &=\big(z^2-(A_i+1)z+1\big)\big(z^3+(A_i+1)z^2+A_i(A_i+1)z+A_i\big).
  \end{align*}
  Consider the quadratic polynomial $f_1^{(i)}(z):=z^2-(A_i+1)z+1$. Since its discriminant $(A_i+1)^2-1=A_i(A_i-1)$ belongs to $C_0$, it has two distinct roots in $\mathbb{F}_q$. 
  
  Next, consider the cubic polynomial $f_2^{(i)}(z)=z^3+(A_i+1)z^2+A_i(A_i+1)z+A_i$. If we put 
  $$g_2^{(i)}(z)=\frac{1}{A_i(1-A_i)}z^3f_2(\frac{1}{z}+A_i)=z^3+\frac{A_i+1}{A_i(1-A_i)}z+\frac{1}{A_i(1-A_i)},$$
  then $g_2^{(i)}(z)$ and $f_2^{(i)}(z)$ have the same number of roots in $\mathbb{F}_q$. Note that $-\frac{A_i+1}{A_i(1-A_i)}=\frac{A_i+1}{A_i(A_i-1)}\in C_0$ and
  $$b_i^2=y_i^3-y_i=(-\frac{1}{A_i+1})^3+\frac{1}{A_i+1}=\frac{A_i(A_i-1)}{(A_i+1)^3},$$
  which implies that
  $$(\frac{1}{(A_i+1)b_i})^2=\frac{A_i+1}{A_i(A_i-1)}.$$
Let \( c_i = \frac{1}{(A_i + 1) b_i} \). Then we have
\[
\text{Tr}_{\mathbb{F}_q/\mathbb{F}_3}\left(\frac{1}{A_i(1-A_i)c_i^3}\right) = -\text{Tr}_{\mathbb{F}_q/\mathbb{F}_3}(b_i) = -i.
\]

According to Theorem \ref{20241219them7}, the polynomial \( g_2^{(0)}(z) \) has three distinct roots in \( \mathbb{F}_q \), while \( g_2^{(1)}(z) \) has no roots in \( \mathbb{F}_q \). By equation (\ref{20241214equation8}), this proposition is proven for \( q \geq 81 \). The remaining cases are verified directly using a computer program.
\end{proof}

\section{The Walsh spectrum of $x^{3^m+2}$ over $\mathbb{F}_{3^{2m}}$ and its consequences}\label{section4}

\subsection{The Walsh spectrum of $x^{3^m+2}$ over $\mathbb{F}_{3^{2m}}$}

In \cite{yan2022note}, the authors demonstrated that for any odd prime power \( q \), the differential spectrum of the power function \( g(x) = x^{2q-1} \) over \( \mathbb{F}_{q^2} \) is given by
\[
\text{DS}_g = \left\{ \omega_0 = \frac{q^2 + 2 - q}{2}, \ \omega_2 = \frac{q^2 - q}{2}, \ \omega_q = 1 \right\}.
\]
It can be easily observed that when $p=3$, our power function $f(x)=x^{q+2}$ over $\mathbb{F}_{q^2}$, which is linearly equivalent to \( x^{2q + 1} \), has the same differential spectrum as $g$.

Furthermore, in \cite{li2022class}, the authors analyzed the value distribution of the Walsh spectrum of \( g \), revealing that it is \( 4 \)-valued. This naturally raises the question of whether our power function \( f \) over \( \mathbb{F}_{q^2} \) with \( p = 3 \) shares the same property. The answer is affirmative. In the remainder of this section, we will assume that \( p = 3 \).

\begin{proposition}\label{20240623prop4}
  The Walsh spectrum of $f$ takes value in $\{-q,\ 0,\ q,\ 2q\}$.
\end{proposition}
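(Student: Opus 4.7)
The plan is to reduce the computation to $W_f(a,1)$ for $a\in\mathbb{F}_{q^2}$, then evaluate it via the quadratic Gauss sum formula of Proposition \ref{20241218prop1} together with the classical Salie sum identity. Since $p=3$, we have $\gcd(q+2,q^2-1)=\gcd(q+2,3)=1$, so $x\mapsto x^{q+2}$ is a permutation of $\mathbb{F}_{q^2}$. For any $b\in\mathbb{F}_{q^2}^*$ there is $\mu\in\mathbb{F}_{q^2}^*$ with $\mu^{q+2}=1/b$, and substituting $x\mapsto\mu x$ in the Walsh sum gives $W_f(a,b)=W_f(a\mu,1)$, so it suffices to show $W_f(a,1)\in\{-q,0,q,2q\}$ for every $a\in\mathbb{F}_{q^2}$. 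Using $x^{q^2}=x$, a direct check yields the key identity $\traced(x^{q+2})=\norm(x)\cdot\traced(x)$, so
$$W_f(a,1)=\sum_{x\in\mathbb{F}_{q^2}}\psi\big(\norm(x)\cdot\traced(x)-\traced(ax)\big),$$
where $\psi$ is the canonical additive character of $\mathbb{F}_q$.

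Writing $x=s+tZ$ and $a=a_1+a_2 Z$ with $s,t,a_1,a_2\in\mathbb{F}_q$ (where $Z^2=\alpha$ is a fixed nonsquare in $\mathbb{F}_q$), and using $\norm(x)=s^2-\alpha t^2$, $\traced(x)=2s$, $\traced(ax)=2(a_1 s+\alpha a_2 t)$, the exponent of $\psi$ becomes $-s^3+\alpha s t^2+a_1 s+\alpha a_2 t$ after using $2=-1$ in characteristic $3$. The crucial simplification is that ${\rm Tr}_{\mathbb{F}_q/\mathbb{F}_3}(s^3)={\rm Tr}_{\mathbb{F}_q/\mathbb{F}_3}(s)$ in characteristic $3$ (by Frobenius invariance of the trace), so $\psi(-s^3)=\psi(-s)$ and the exponent may be replaced by $(a_1-1)s+\alpha s t^2+\alpha a_2 t$. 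Summing over $t$ first and applying Proposition \ref{20241218prop1} (using $\eta(\alpha)=-1$ and $4=1$ in $\mathbb{F}_3$) gives
$$W_f(a,1)=q\cdot\mathbf{1}_{a_2=0}\;-\;G(\eta,\psi)\sum_{s\neq 0}\eta(s)\psi\Big((a_1-1)s-\alpha a_2^2/s\Big).$$
The remaining sum is a Salie sum $S(A,B):=\sum_{s\neq 0}\eta(s)\psi(As+B/s)$ with $A=a_1-1$ and $B=-\alpha a_2^2$.

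By the classical Salie evaluation, $S(A,B)=0$ when $AB$ is a non-square in $\mathbb{F}_q^*$, and $S(A,B)=\eta(A)\,G(\eta,\psi)\big(\psi(2c)+\psi(-2c)\big)$ when $AB=c^2\in C_0$; the degenerate cases $A=0$ or $B=0$ can be handled directly from the definition of the Gauss sum. Plugging this in, using $G(\eta,\psi)^2=\eta(-1)q$, and observing that $\eta(B)=\eta(-\alpha a_2^2)=-\eta(-1)$ forces $\eta(A)=-\eta(-1)$ whenever $AB\in C_0$ and $a_2\neq 0$, the various sign factors cancel, yielding $W_f(a,1)=q\big(\psi(2c)+\psi(-2c)\big)$ in the nondegenerate square case. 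Since $\psi(\pm 2c)$ are complex conjugate cube roots of unity, $\psi(2c)+\psi(-2c)\in\{-1,2\}$. A short case analysis on whether $a_2=0$, whether $a_1=1$, and whether $AB\in C_0$ then produces only the values in $\{-q,0,q,2q\}$. The main technical point is the sign-bookkeeping between $\eta(\alpha)$, $\eta(-1)$, and $G(\eta,\psi)^2$ through the case analysis; otherwise the argument is a routine application of Gauss and Salie sum machinery enabled by the characteristic-$3$ identity ${\rm Tr}_{\mathbb{F}_q/\mathbb{F}_3}(s^3)={\rm Tr}_{\mathbb{F}_q/\mathbb{F}_3}(s)$.
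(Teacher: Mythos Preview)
Your proof is correct and takes a genuinely different route from the paper's. The paper does \emph{not} reduce to $b=1$; instead it keeps $b$ arbitrary, splits $\mathbb{F}_{q^2}^*$ into squares $C_0$ and non-squares $\lambda C_0$, and on each piece uses the polar decomposition of Lemma~\ref{20240617lemma1} (writing a square as $yz$ with $y\in\mathbb{F}_q^*$, $z\in\mathbb{U}$). After applying Frobenius to turn $y^3$ into $y$, the inner sum over $y$ becomes a geometric-series-type count, and the whole Walsh value is expressed as $W_f(a,b)=-q+q\cdot\#\Lambda(a,b)$ where $\Lambda(a,b)$ is the set of roots in (a subset of) $\mathbb{U}$ of the cubic $a^3s^3-bs^2-b^qs+a^{3q}$. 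The four-value conclusion is then immediate from $\#\Lambda(a,b)\in\{0,1,2,3\}$.

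Your approach instead passes to $\mathbb{F}_q$-coordinates via the algebraic identity $\traced(x^{q+2})=\norm(x)\traced(x)$, linearizes $s^3$ using the characteristic-$3$ trace identity, and reduces to a Gauss sum times a Sali\'e sum. The trade-offs: the paper's argument is self-contained within its preliminaries and yields the clean structural formula $W_f(a,b)=q(\#\Lambda(a,b)-1)$, which makes the four-valuedness transparent and ties the Walsh value to root-counting of an explicit cubic. Your argument is shorter once one is willing to invoke the Sali\'e evaluation $S(A,B)=\eta(A)G(\eta,\psi)\sum_{c^2=AB}\psi(2c)$, and it gives the value of $W_f$ explicitly in each subcase (which could in principle feed directly into the distribution computation of Theorem~\ref{20240630them1} without the third-moment identity of Proposition~\ref{20240623prop2}). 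The Sali\'e identity is not among the paper's stated preliminaries, so you should supply a reference; and the ``short case analysis'' you allude to (the four subcases $a_2=0$/$a_2\neq0$ crossed with $A=0$/$A\neq0$, using $G(\eta,\psi)^2=\eta(-1)q$ and $\eta(B)=-\eta(-1)$) should be written out, since that is where the signs must be checked. I verified that they do cancel as you claim.
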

\begin{proof}
  Let $\epsilon$ be a primitive element in $\mathbb{F}_{q^2}$ and let
  $$\lambda=\begin{cases}
      \epsilon^{\frac{q+1}{2}}, & \text{if }q\equiv 1\ ({\rm{mod}}\ 4), \\
      \epsilon^{\frac{q-1}{2}}, & \text{if }q\equiv 3\ ({\rm{mod}}\ 4).
    \end{cases}$$
  Then $\lambda$ is a non-square element in $\mathbb{F}_{q^2}$ such that
  $$\lambda^q=\begin{cases}-\lambda,           & \text{if }q\equiv 1\ ({\rm{mod}}\ 4), \\
             -\frac{1}{\lambda}, & \text{if }q\equiv 3\ ({\rm{mod}}\ 4).\end{cases}$$
  For any $a\in\mathbb{F}_{q^2}$ and $b\in\mathbb{F}_{q^2}^*$, we have
  \begin{align*}
    W_f(a,b) & =\sum\limits_{x\in\mathbb{F}_{q^2}}\xi_3^{\trace (bx^{q+2}-ax)}                                               \\
             & =1+\sum\limits_{x\in C_0}\xi_3^{\trace (bx^{q+2}-ax)}+\sum\limits_{x\in\lambda C_0}\xi_3^{\trace (bx^{q+2}-ax)}.
  \end{align*}
 Recall that the absolute Frobenius map $\mathbb{F}_{q^2}\rightarrow\mathbb{F}_{q^2}$, $x\mapsto x^3$ is an field automorphism such that $\trace(x^3)=\trace(x)$. We will use $x^{\frac{1}{3}}$ to denote the unique preimage of $x\in\mathbb{F}_{q^2}$ under the Frobenius map. By Lemma \ref{20240617lemma1}, we have
  \begin{align*}
        & \sum\limits_{x\in C_0}\xi_3^{\trace (bx^{q+2}-ax)}                                                                                                                                                                                               \\
    =\  & \frac{1}{2}\sum\limits_{(y,z)\in\mathbb{F}_{q}^*\times\mathbb{U}}\xi_3^{\trace\big(b(yz)^{q+2}-ayz\big)}                                                                                                                                       \\
    =\  & \frac{1}{2}\sum\limits_{(y,z)\in\mathbb{F}_{q}^*\times\mathbb{U}}\xi_3^{\trace(by^3z-ayz)}                                                                                                                                             \\
    =\  & \frac{1}{2}\sum\limits_{(y,z)\in\mathbb{F}_{q}^*\times\mathbb{U}}\xi_3^{\trace(by^3z)-\trace(ayz)}                                                                                                                                     \\
    =\  & \frac{1}{2}\sum\limits_{(y,z)\in\mathbb{F}_{q}^*\times\mathbb{U}}\xi_3^{\trace(b^{\frac{1}{3}}z^{\frac{1}{3}}y)-\trace(ayz)}                                                                                                                   \\
    =\  & \frac{1}{2}\sum\limits_{z\in\mathbb{U}}\sum\limits_{y\in\mathbb{F}_q^*}\xi_3^{\trace\Big((b^{\frac{1}{3}}z^{\frac{1}{3}}-az)y\Big)}                                                                                                            \\
    =\  & -\frac{1}{2}\cdot\#\mathbb{U}+\frac{1}{2}\sum\limits_{z\in\mathbb{U}}\sum\limits_{y\in\mathbb{F}_q}\xi_3^{{\rm{Tr}}_{\mathbb{F}_q/\mathbb{F}_3}\Big(y\cdot{\rm{Tr}}_{\mathbb{F}_{q^2}/\mathbb{F}_{q}}(b^{\frac{1}{3}}z^{\frac{1}{3}}-az)\Big)} \\
    =\  & -\frac{q+1}{2}+\frac{q}{2}\cdot\#\{z\in\mathbb{U}:\ \traced(b^{\frac{1}{3}}z^{\frac{1}{3}}-az)=0\}.
  \end{align*}
  Note that
  \begin{align*}
    \traced(b^{\frac{1}{3}}z^{\frac{1}{3}}-az) & =b^{\frac{1}{3}}z^{\frac{1}{3}}-az+(b^{\frac{1}{3}}z^{\frac{1}{3}}-az)^q    \\
                                                & =b^{\frac{1}{3}}z^{\frac{1}{3}}-az+b^{\frac{q}{3}}z^{-\frac{1}{3}}-a^qz^{-1}.
  \end{align*}
  Hence
  \begin{align*}
    \traced(b^{\frac{1}{3}}z^{\frac{1}{3}}-az)=0 & \iff bz-a^3z^3+b^qz^{-1}-a^{3q}z^{-3}=0 \\
                                                  & \iff a^3z^6-bz^4-b^qz^2+a^{3q}=0,
  \end{align*}
  and thus
  \begin{align*}
    \sum\limits_{x\in C_0}\xi_3^{\trace (bx^{q+2}-ax)} & =q\cdot\#\{s\in\mathbb{U}^2:\ a^3s^3-bs^2-b^qs+a^{3q}=0\}-\frac{q+1}{2},
  \end{align*}
  where $\mathbb{U}^2=\{u^2:\ u\in\mathbb{U}\}$.
  \begin{enumerate}[(1)]
    \item If $q\equiv 1\pmod{4}$, then we have
          \begin{align*}
                & \sum\limits_{x\in\lambda C_0}\xi_3^{\trace (bx^{q+2}-ax)}=\sum\limits_{u\in C_0}\xi_3^{\trace (b\lambda^3u^{q+2}-a\lambda x)} \\
            =\  & q\cdot\#\{s\in\mathbb{U}^2:\ a^3\lambda^3s^3-b\lambda^{3}s^2-b^q\lambda^{3q}s+a^{3q}\lambda^{3q}=0\}-\frac{q+1}{2}          \\
            =\  & q\cdot\#\{s\in\mathbb{U}^2:\ a^3\lambda^3s^3-b\lambda^{3}s^2+b^q\lambda^{3}s-a^{3q}\lambda^{3}=0\}-\frac{q+1}{2}            \\
            =\  & q\cdot\#\{s\in\mathbb{U}^2:\ a^3s^3-bs^2+b^qs-a^{3q}=0\}-\frac{q+1}{2}                                                      \\
            =\  & q\cdot\#\{s\in-\mathbb{U}^2:\ a^3s^3-bs^2-b^qs+a^{3q}=0\}-\frac{q+1}{2}.
          \end{align*}
          Since $(-1)^{\frac{q+1}{2}}=-1$, we have $-1\not\in\mathbb{U}^2$ and thus $\mathbb{U}^2\cap(-\mathbb{U}^2)=\emptyset$.
    \item If $q\equiv 3\pmod{4}$, then we have
          \begin{align*}
                & \sum\limits_{x\in\lambda C_0}\xi_3^{\trace (bx^{q+2}-ax)}=\sum\limits_{u\in C_0}\xi_3^{\trace (b\lambda^{-1}u^{q+2}-a\lambda x)} \\
            =\  & q\cdot\#\{s\in\mathbb{U}^2:\ a^3\lambda^3s^3-b\lambda^{-1}s^2-b^q\lambda^{-q}s+a^{3q}\lambda^{3q}=0\}-\frac{q+1}{2}            \\
            =\  & q\cdot\#\{s\in\mathbb{U}^2:\ a^3\lambda^3s^3-b\lambda^{-1}s^2+b^q\lambda s-a^{3q}\lambda^{-3}=0\}-\frac{q+1}{2}                \\
            =\  & q\cdot\#\{s\in\mathbb{U}^2:\ a^3\lambda^6s^3-b\lambda^2s^2+b^q\lambda^4s-a^{3q}=0\}-\frac{q+1}{2}                              \\
            =\  & q\cdot\#\{s\in\lambda^2\mathbb{U}^2:\ a^3s^3-bs^2+b^qs-a^{3q}=0\}-\frac{q+1}{2}                                                \\
            =\  & q\cdot\#\{s\in-\lambda^2\mathbb{U}^2:\ a^3s^3-bs^2-b^qs+a^{3q}=0\}-\frac{q+1}{2}                                               \\
            =\  & q\cdot\#\{s\in\lambda^2\mathbb{U}^2:\ a^3s^3-bs^2-b^qs+a^{3q}=0\}-\frac{q+1}{2},
          \end{align*}
          noticing that $-1\in\mathbb{U}^2$. Note that $\lambda^{q+1}=\epsilon^{\frac{q^2-1}{2}}=-1$, i.e., $\lambda\not\in\mathbb{U}$, which implies that $\mathbb{U}^2\cap(\lambda^2\mathbb{U}^2)=\emptyset$.
  \end{enumerate}

  Hence
  \begin{align*}
    W_f(a,b) & =-q+q\cdot\#\Lambda(a,b),
  \end{align*}
  where
  $$\Lambda(a,b)=\{s\in\mathbb{U}^2\cup(-\mathbb{U}^2):\ a^3s^3-bs^2-b^qs+a^{3q}=0\}$$
  if $q\equiv 1\pmod{4}$ and
  $$\Lambda(a,b)=\{s\in\mathbb{U}^2\cup(\lambda^2\mathbb{U}^2):\ a^3s^3-bs^2-b^qs+a^{3q}=0\}$$
  if $q\equiv 3\pmod{4}$. Since $a^3s^3-bs^2-b^qs+a^{3q}=0$ is a cubic equation, we have $\#\Lambda(a,b)\in\{0,\ 1,\ 2,\ 3\}$, which implies that $W_f(a,b)\in\{-q,\ 0,\ q,\ 2q\}$.
\end{proof}

\begin{proposition}\label{20240623prop2}
  We have
  $$\sum\limits_{\substack{a\in\mathbb{F}_{q^2},b\in\mathbb{F}_{q^2}^*}}\big(W_f(a,b)-1\big)^3=q^2(q^2-1)(q^3-3q^2+2)$$
\end{proposition}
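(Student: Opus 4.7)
The plan is to expand
\begin{equation*}
(W_f(a,b)-1)^3 = W_f(a,b)^3 - 3W_f(a,b)^2 + 3W_f(a,b) - 1
\end{equation*}
and reduce the claim to evaluating the three moments $M_k := \sum_{a\in\mathbb{F}_{q^2},\, b\in\mathbb{F}_{q^2}^*} W_f(a,b)^k$ for $k=1,2,3$, together with the trivial count $\sum_{a,\,b\ne 0} 1 = q^2(q^2-1)$. The moment $M_1 = q^4 - q^2$ is supplied by Lemma \ref{20240623lemma1}(1). For $M_2$, Proposition \ref{20240623prop4} crucially ensures each Walsh value is a real integer, so $W_f(a,b)^2 = |W_f(a,b)|^2$, and Parseval's relation (Lemma \ref{20240623lemma1}(2)) gives $M_2 = (q^2-1)q^4$. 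Everything then hinges on $M_3$.

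To evaluate $M_3$, I would expand the cube as a triple sum and apply orthogonality of additive characters in both $a$ and $b$:
\begin{equation*}
\sum_{a,b\in\mathbb{F}_{q^2}} W_f(a,b)^3 = q^4 \cdot \#\{(x,y,z) \in \mathbb{F}_{q^2}^3 : x+y+z = 0,\ f(x)+f(y)+f(z) = 0\}.
\end{equation*}
Substituting $z = -(x+y)$ and using $f(-u) = -f(u)$ (which holds since $q+2$ is odd) rewrites this as $q^4 \cdot N$, where $N = \#\{(x,y) \in \mathbb{F}_{q^2}^2 : f(x+y) = f(x) + f(y)\}$. Subtracting the $b=0$ contribution (which equals $q^6$, since $W_f(a,0) = q^2 \mathbf{1}_{\{a=0\}}$) then yields $M_3 = q^4 N - q^6$.

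The technical core is the evaluation of $N$. In characteristic $3$, the Frobenius identity $(x+y)^q = x^q + y^q$ together with $(x+y)^2 = x^2 - xy + y^2$ yields, after expansion and cancellation, the factorization
\begin{equation*}
f(x+y) - f(x) - f(y) = (y-x)\cdot xy \cdot (x^{q-1} - y^{q-1}).
\end{equation*}
Thus $N$ is the size of the union of the three loci $\{x=y\}$, $\{xy=0\}$, and $\{x^{q-1}=y^{q-1}\}$. The third locus consists of $(0,0)$ together with those $(x,y)$ with $x\ne 0$ and $y/x \in \mathbb{F}_q^*$ (since the kernel of $u\mapsto u^{q-1}$ on $\mathbb{F}_{q^2}^*$ is exactly $\mathbb{F}_q^*$). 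A short inclusion--exclusion over these three sets delivers $N = q^3 + q^2 - q$, whence $M_3 = q^5(q^2-1)$.

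Substituting the three moments into the expanded sum and simplifying gives $q^7 - 3q^6 - q^5 + 5q^4 - 2q^2 = q^2(q^2-1)(q^3-3q^2+2)$, which is the claimed identity. The only genuinely non-routine step is the factorization of $f(x+y)-f(x)-f(y)$ and the subsequent counting of $N$; the remaining manipulations are standard bookkeeping with orthogonality and Parseval.
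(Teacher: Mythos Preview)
Your argument is correct. The moment computations are clean, the factorization
\[
f(x+y)-f(x)-f(y)=(y-x)\,xy\,(x^{q-1}-y^{q-1})
\]
is valid in characteristic $3$, and your inclusion--exclusion count $N=q^3+q^2-q$ is right; the final arithmetic checks out.

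The paper takes a slightly different route. Rather than factorising $f(x+y)-f(x)-f(y)$ explicitly, it invokes a formula from \cite{li2022class} which packages the third moment as
\[
\sum_{a,\,b\ne 0}\big(W_f(a,b)-1\big)^3
  = q^4(q^2-1)\big(\delta_f(1,1)-2\big)-q^2(q^2-1)(q^2-2),
\]
and then reads off $\delta_f(1,1)=\delta(0)=q$ from Proposition~\ref{20240620prop4}. The link between the two approaches is that your count $N$ satisfies $N=\sum_{a}\delta_f\big(a,f(a)\big)=q^2+(q^2-1)\,\delta_f(1,1)$ for a power function, so both proofs ultimately evaluate the same third moment via $N$; the paper appeals to the differential spectrum while you count directly. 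Your argument is more self-contained, avoiding the external lemma; the paper's version makes the structural connection between the Walsh third moment and $\delta_f(1,1)$ explicit.
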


\begin{proof}
  From the proof of \cite[Lemma 2.3]{li2022class}, we can see that
  $$\sum\limits_{\substack{a\in\mathbb{F}_{q^2},b\in\mathbb{F}_{q^2}^*}}\big(W_f(a,b)-1\big)^3=q^4(q^2-1)\cdot\big(\delta_f(1,1)-2\big)-q^2(q^2-1)(q^2-2),$$
  By equation (\ref{20240620equation1}) and Proposition \ref{20240620prop4}, we have
  $$\delta_f(1,1)=\delta(\frac{3}{4})=\delta(0)=q.$$
  Then, this proposition follows immediately.
\end{proof}

\begin{theorem}\label{20240630them1}
  Assume that $p=3$. When $(a,b)$ runs through $\mathbb{F}_{q^2}\times\mathbb{F}_{q^2}^*$, the value distribution of the Walsh transform of $f$ is given by
  $$W_f(a,b)=\begin{cases}
      -q, & \text{occurs}\ \frac{q^4-q^3-q^2+q}{3}\ \text{times}, \\
      0,  & \text{occurs}\ \frac{q^4-q^3-q^2+q}{2}\ \text{times}, \\
      q,  & \text{occurs}\ q^3-q\ \text{times},                   \\
      2q, & \text{occurs}\ \frac{q^4-q^3-q^2+q}{6}\ \text{times}.
    \end{cases}$$
\end{theorem}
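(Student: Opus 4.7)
The plan is to apply the standard power-moment method to the Walsh spectrum. Write $n_v$ for the number of pairs $(a,b) \in \mathbb{F}_{q^2} \times \mathbb{F}_{q^2}^*$ with $W_f(a,b) = v$. By Proposition \ref{20240623prop4}, only the four values $v \in \{-q,0,q,2q\}$ occur, so counting all such pairs gives the first linear equation $n_{-q} + n_0 + n_q + n_{2q} = q^2(q^2-1)$. The goal is to manufacture three more independent linear relations so as to pin down the four unknowns.

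Two of these relations come straight from Lemma \ref{20240623lemma1} (applied to the ambient field $\mathbb{F}_{q^2}$). Part (1) gives the first-moment equation $-q\,n_{-q} + q\,n_q + 2q\,n_{2q} = q^4 - q^2$, and summing the Parseval identity over $b \in \mathbb{F}_{q^2}^*$ gives the second-moment equation $q^2 n_{-q} + q^2 n_q + 4 q^2 n_{2q} = q^4(q^2 - 1)$. Dividing each by the appropriate power of $q$ turns these into clean linear relations in $n_{-q}, n_q, n_{2q}$ alone.

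The decisive fourth equation will come from Proposition \ref{20240623prop2}. Expanding
\[
(W_f(a,b) - 1)^3 = W_f(a,b)^3 - 3 W_f(a,b)^2 + 3 W_f(a,b) - 1
\]
and summing over $(a,b) \in \mathbb{F}_{q^2} \times \mathbb{F}_{q^2}^*$, I would substitute the previously established moments $\sum 1 = q^2(q^2-1)$, $\sum W_f = q^4 - q^2$, and $\sum W_f^2 = q^4(q^2 - 1)$ to solve for the third moment. A short simplification collapses to the identity $\sum W_f(a,b)^3 = q^5(q^2-1)$, which in terms of the unknowns reads $-n_{-q} + n_q + 8 n_{2q} = q^4 - q^2$.

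The rest is elementary linear algebra. Subtracting the first-moment equation from the third-moment equation isolates $6 n_{2q} = q^4 - q^3 - q^2 + q$, giving $n_{2q} = (q^4-q^3-q^2+q)/6$; subtracting the first-moment equation from Parseval's relation yields $n_{-q} = 2 n_{2q} = (q^4-q^3-q^2+q)/3$; feeding these into the first-moment equation produces $n_q = q^3 - q$; and the total-count equation then delivers $n_0 = (q^4-q^3-q^2+q)/2$. The only step that requires genuine care is the bookkeeping in the third-moment calculation, because one must track the correct powers of $q$ that appear after combining Proposition \ref{20240623prop2} with the first and second moments. Once that identity is correctly reduced, everything else falls out mechanically.
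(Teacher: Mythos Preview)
Your proposal is correct and follows essentially the same approach as the paper: both set up a $4\times 4$ linear system from the total count, the first moment (Lemma \ref{20240623lemma1}(1)), Parseval (Lemma \ref{20240623lemma1}(2)), and the cubic-moment identity of Proposition \ref{20240623prop2}, invoking Proposition \ref{20240623prop4} to restrict to the four possible values. The only cosmetic difference is that you first expand $(W_f-1)^3$ to isolate the pure third moment $\sum W_f(a,b)^3 = q^5(q^2-1)$ before solving, whereas the paper keeps the shifted form $(-q-1)^3\eta_{-1}-\eta_0+(q-1)^3\eta_1+(2q-1)^3\eta_2=q^2(q^2-1)(q^3-3q^2+2)$ directly; the resulting linear algebra is equivalent.
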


\begin{proof}
  For $i\in\{-q,0,q,2q\}$, let
  $$\eta_i=\#\{(a,b)\in\mathbb{F}_{q^2}\times\mathbb{F}_{q^2}^*:\ W_f(a,b)=iq\}.$$
  By Lemma \ref{20240623lemma1}, Proposition \ref{20240623prop4}, Proposition \ref{20240623prop2} and the definition of the $\eta_i$'s, we have
  \begin{equation*}
    \begin{cases}
      \eta_{-1}+\eta_0+\eta_1+\eta_2=q^2(q^2-1), \\
      -q\eta_{-1}+q\eta_1+2q\eta_2=q^4-q^2,      \\
      q^2\eta_{-1}+q^2\eta_1+4q^2\eta_2=q^4(q^2-1),
    \end{cases}
  \end{equation*}
  and
  $$(-q-1)^3\eta_{-1}-\eta_0+(q-1)^3\eta_1+(2q-1)^3\eta_2=q^2(q^2-1)(q^3-3q^2+2).$$
  The desired result follows by solving the system of the four linear equations.
\end{proof}

\subsection{An application to coding theory} 

Linear codes play a crucial role in communication, consumer electronics, and data storage systems. Codes with appropriate parameters are particularly useful in such systems. Specifically, codes with few (Hamming) weights are advantageous for secret sharing and two-party computations. Moreover, cryptographic functions have significant applications in coding theory, highlighting a close and intriguing interplay between these two fields (see \cite{Mesnager-Handbook}).

In this subsection, we present an application of our results to coding theory. To begin, we recall some background information (for further details, see, for example, the book \cite{HufPless-2003}). 

A linear code over \( \mathbb{F}_q\) with parameters \( [n, k, d] \) is a \( k \)-dimensional $\mathbb{F}_q$-subspace of \( \mathbb{F}_q^n \) with a minimum Hamming distance of \( d \). The Hamming weight of a codeword \( c \) is defined as the number of its non-zero coordinates, while the Hamming distance between two codewords is the number of different coordinates. 

Let \( \tau(x_0, x_1, \cdots, x_{n-1}) \) denote the cyclic shift of the vector \( (x_0, x_1, \ldots, x_{n-1}) \), where each coordinate is mapped as \( i \mapsto i+1 \mod n \), resulting in the vector \( (x_{n-1}, x_0, \ldots, x_{n-2}) \). A linear \( [n, k] \) code \( \mathcal{C} \) over \( \mathbb{F}_q\) is called cyclic if \( c \in \mathcal{C} \) implies \( \tau(c) \in \mathcal{C} \). 

By identifying the vector \( (c_0, c_1, \ldots, c_{n-1}) \in \mathbb{F}_q^n \) with the polynomial \( \sum_{i=0}^{n-1} c_i x^i \in \mathbb{F}_q[x]/(x^n-1) \), a linear code \( \mathcal{C} \) of length \( n \) over \( \mathbb{F}_q\) corresponds to an $\mathbb{F}_q$-subspace of the quotient ring \(\mathbb{F}_q[x]/(x^n-1)\). Then the linear code \( \mathcal{C} \) is cyclic if and only if the corresponding subspace in \(\mathbb{F}_q[x]/(x^n-1) \) is an ideal of the ring \( \mathbb{F}_q[x]/(x^n-1) \). 

It is well known that every ideal of \( \mathbb{F}_q[x]/(x^n-1) \) is principal and can be generated by a unique monic factor of $x^n-1$. For any monic factor $g(x)$ of $x^n-1$, we will use $\mathcal{C}=\left\langle g(x)\right\rangle$ to denote the cyclic code corresponding to the principal ideal $\big(g(x)\big)$ of $\mathbb{F}_q[x]/(x^n-1)$ and call $g(x)$ the generator polynomial of $\mathcal{C}$. Moreover, the polynomial \( h(x):= (x^n-1)/g(x) \) is called parity-check polynomial of $\mathcal{C}$.  

Since cyclic codes support efficient encoding and decoding algorithms, they are widely applied in storage and communication systems. One way of constructing cyclic codes over \( \mathbb{F}_q\) with length \( n \) is to use a generator polynomial of the form
\[
\frac{x^n-1}{\gcd(S(x), x^n-1)},
\]
where \( S(x) = \sum_{i=0}^{n-1} s_i x^i \in \mathbb{F}_q[x] \). This approach has led to impressive progress in constructing cyclic codes in the last decade (see, for instance, \cite{DIT1, DSIAM, DFFA1}).

Let $\alpha$ be a primitive element of \(\mathbb{F}_{q^2}\). We now consider the ternary cyclic code \(\mathcal{C}\) of length \(q^2-1\) whose parity-check polynomial is given by \(p(x) = p_1(x)p_2(x)\), where \(p_1(x)\) and \(p_2(x)\) are the minimal polynomials of \(\alpha^{-1}\) and \(\alpha^{-(q+2)}\) over \(\mathbb{F}_3\), respectively.  According to Delsarte's theorem \cite{delsarte1975subfield}, the cyclic code \(\mathcal{C}\) can be expressed as follows:
$$\mathcal{C}=\Big\{c_{a,b}=\Big(\trace(a\alpha^{i(q+2)}+b\alpha^i)\Big)^{q^2-2}_{i=0}:\ a,b\in\mathbb{F}_{q^2}\Big\}.$$

\begin{corollary}
  The ternary cyclic code $\mathcal{C}$ has parameters $[q^2-1,4m,\frac{2q(q-2)}{3}]$. Moreover, the weight distribution of $\mathcal{C}$ is given in Table \ref{table3}.
  \begin{table}[h!]
    \centering
    \caption{The weight distribution of $\mathcal{C}$\\}
    \label{table3}
    \begin{tabular}{lll}
      \toprule
      Weight              &  & Number of codewords         \\
      \midrule
      $0$                 &  & $1$                         \\
      $\frac{2q(q-2)}{3}$ &  & $\frac{q^4-q^3-q^2+q}{6}$   \\
      $\frac{2q(q-1)}{3}$ &  & $q^3-q$                     \\
      $\frac{2q^2}{3}$    &  & $\frac{q^4-q^3+q^2+q}{2}-1$ \\
      $\frac{2q(q+1)}{3}$ &  & $\frac{q^4-q^3-q^2+q}{3}$   \\
      \bottomrule
    \end{tabular}
  \end{table}
\end{corollary}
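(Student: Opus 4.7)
The idea is to convert the Hamming weight of each codeword $c_{a,b}$ into a Walsh value of $f$ and then invoke Theorem~\ref{20240630them1}. First I would count the zero coordinates via the orthogonality relation $\frac{1}{3}\sum_{t\in\mathbb{F}_3}\xi_3^{tz}=\mathbf{1}_{z=0}$. Since $\alpha$ is primitive, as $i$ ranges over $\{0,\dots,q^2-2\}$ the element $\alpha^i$ sweeps $\mathbb{F}_{q^2}^*$ exactly once, and a direct manipulation yields
\begin{align*}
w_H(c_{a,b}) &= (q^2-1) - \tfrac{q^2-1}{3} - \tfrac{1}{3}\sum_{t\in\mathbb{F}_3^*}\sum_{x\in\mathbb{F}_{q^2}^*}\xi_3^{\trace(ta\,x^{q+2}+tb\,x)}\\
&= \tfrac{2q^2}{3} - \tfrac{1}{3}\bigl(W_f(-b,a)+W_f(b,-a)\bigr).
\end{align*}
Because $q=3^m$ is odd we have $(-x)^{q+2}=-x^{q+2}$, so substituting $x\mapsto -x$ in the defining sum of $W_f$ gives the symmetry $W_f(u,v)=W_f(-u,-v)$; hence $W_f(b,-a)=W_f(-b,a)$ and $w_H(c_{a,b})=\frac{2(q^2-W_f(-b,a))}{3}$. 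Plugging the four Walsh values $-q,0,q,2q$ from Proposition~\ref{20240623prop4} produces exactly the four weights $\frac{2q(q+1)}{3}$, $\frac{2q^2}{3}$, $\frac{2q(q-1)}{3}$, $\frac{2q(q-2)}{3}$ listed in Table~\ref{table3}.

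Next I would tally the multiplicities by splitting on whether $a=0$. When $a\ne 0$, as $b$ ranges over $\mathbb{F}_{q^2}$ the pair $(-b,a)$ sweeps $\mathbb{F}_{q^2}\times\mathbb{F}_{q^2}^*$, so Theorem~\ref{20240630them1} supplies the frequencies $\tfrac{q^4-q^3-q^2+q}{3}$, $\tfrac{q^4-q^3-q^2+q}{2}$, $q^3-q$, $\tfrac{q^4-q^3-q^2+q}{6}$ for Walsh values $-q,0,q,2q$ respectively. When $a=0$, a direct computation gives $W_f(-b,0)=q^2\cdot\mathbf{1}_{b=0}$; this contributes the zero codeword once and $q^2-1$ additional codewords of weight $\frac{2q^2}{3}$. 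Folding this into the Walsh count for the value $0$ gives $\tfrac{q^4-q^3-q^2+q}{2}+q^2-1=\tfrac{q^4-q^3+q^2+q}{2}-1$, exactly the entry in the table, while the three other frequency rows are unaffected.

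Finally I would confirm that $\dim_{\mathbb{F}_3}\mathcal{C}=4m$, so that the above enumeration actually exhausts $\mathcal{C}$. Equivalently, the $\mathbb{F}_3$-linear map $(a,b)\mapsto c_{a,b}$ is injective, which follows from the preceding analysis: $c_{a,b}=0$ forces $W_f(-b,a)=q^2$, but Proposition~\ref{20240623prop4} confines $W_f(-b,a)$ to $\{-q,0,q,2q\}$ whenever $a\ne 0$, and for $a=0$ we have $W_f(-b,0)=q^2$ only when $b=0$. (An alternative justification: $\gcd(q+2,q^2-1)=1$ and $3^m+2$ lies strictly between the consecutive powers $3^m$ and $3^{m+1}$, so $\alpha^{-1}$ and $\alpha^{-(q+2)}$ are primitive and non-conjugate, making $p_1,p_2$ distinct of degree $2m$.) The minimum distance is then the smallest weight that actually appears, $\frac{2q(q-2)}{3}$, occurring $\tfrac{q(q-1)^2(q+1)}{6}>0$ times for $q\ge 3$. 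The only genuine subtlety is ensuring the $a=0$ line folds correctly into the weight-$\frac{2q^2}{3}$ count; the remaining steps are routine once the Walsh-to-weight translation is in place.
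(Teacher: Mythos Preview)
Your proof is correct and follows essentially the same route as the paper: both derive $w_H(c_{a,b})=\tfrac{2q^2}{3}-\tfrac{2}{3}W_f(-b,a)$ via orthogonality and the symmetry $(-x)^{q+2}=-x^{q+2}$ (the paper phrases this as $y^{q+2}=y$ for $y\in\mathbb{F}_3^*$), then split on $a=0$ versus $a\ne 0$ and read off the frequencies from Theorem~\ref{20240630them1}. The only notable difference is the dimension argument: the paper computes $\deg p_1=\deg p_2=2m$ directly from $\gcd(q+2,q^2-1)=1$, whereas your primary route infers injectivity of $(a,b)\mapsto c_{a,b}$ from the bound $|W_f(-b,a)|\le 2q<q^2$ --- a clean alternative that also makes explicit (in your parenthetical) the non-conjugacy of $\alpha^{-1}$ and $\alpha^{-(q+2)}$, a point the paper leaves implicit.
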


\begin{proof}
  Since $p_1(x)$ is the minimal polynomial of the primitive element $\alpha^{-1}$ of $\mathbb{F}_{q^2}$ over $\mathbb{F}_3$, we have $\deg p_1=2m$. Moreover, we have
  \begin{align*}
    \deg p_2 & =\min\{j\in\mathbb{N}_+:\ \alpha^{-(q+2)\cdot 3^j}=\alpha^{-(q+2)}\} \\
             & =\min\{j\in\mathbb{N}_+:\ (q^2-1)\mid(q+2)(3^j-1)\},
  \end{align*}
  where $\mathbb{N}_+$ is the set of positive integers. Note that $\gcd(q+1,q+2)=\gcd(q+1,1)=1$ and $\gcd(q-1,q+2)=(q-1,3)=1$, which implies that $\gcd(q^2-1,q+2)=1$. It follows that
  $$\deg p_2=\min\{j\in\mathbb{N}_+:\ (q^2-1)\mid(3^j-1)\}=2m$$
  and thus $\deg p=4m$. Therefore, the dimension of $\mathcal{C}$ over $\mathbb{F}_3$ is $4m$.\\
  \indent For any $a,b\in\mathbb{F}_{q^2}$, we have
  \begin{align*}
    w_H(c_{a,b}) & =q^2-1-\#\{0\le i\le q^2-2:\ \trace(a\alpha^{i(q+2)}+b\alpha^i)=0\}                                                         \\
                 & =q^2-1-\#\{x\in\mathbb{F}_{q^2}^*:\ \trace(ax^{q+2}+bx)=0\}                                                                 \\
                 & =q^2-\frac{1}{3}\sum\limits_{y\in\mathbb{F}_3}\sum\limits_{x\in\mathbb{F}_{q^2}}\xi_3^{y\trace(ax^{q+2}+bx)}                \\
                 & =\frac{2q^2}{3}-\frac{1}{3}\sum\limits_{y\in\mathbb{F}_3^*}\sum\limits_{x\in\mathbb{F}_{q^2}}\xi_3^{\trace(ayx^{q+2}+byx)}.
  \end{align*}
  For any $y\in\mathbb{F}_3^*$, we have $y^{q+2}=y^q\cdot y^2=y\cdot y^2=y$, which implies that
  \begin{align*}
    w_H(c_{a,b}) & =\frac{2q^2}{3}-\frac{1}{3}\sum\limits_{y\in\mathbb{F}_3^*}\sum\limits_{x\in\mathbb{F}_{q^2}}\xi_3^{\trace\Big(a(yx)^{q+2}+b(yx)\Big)} \\
                 & =\frac{2q^2}{3}-\frac{2}{3}\sum\limits_{x\in\mathbb{F}_{q^2}}\xi_3^{\trace(ax^{q+2}+bx)}                                               \\
                 & =\frac{2q^2}{3}-\frac{2}{3}W_f(-b,a).
  \end{align*}
  It follows that for any $b\in\mathbb{F}_{q^2}$, we have
  $$w_H(c_{0,b})=\frac{2q^2}{3}-\frac{2}{3}\sum\limits_{x\in\mathbb{F}_{q^2}}\xi_3^{\trace(bx)}=\begin{cases}
      0,              & \text{if }b=0,    \\
      \frac{2q^2}{3}, & \text{if }b\ne 0.
    \end{cases}$$
  Moreover, using Theorem \ref{20240630them1}, the value distribution of $w_H(c_{a,b})$ $(a\in\mathbb{F}_{q^2}^*,\ b\in\mathbb{F}_{q^2})$ is given by
  $$w_H(c_{a,b})=\begin{cases}
      \frac{2q(q+1)}{3}, & \text{occurs}\ \frac{q^4-q^3-q^2+q}{3}\ \text{times}, \\
      \frac{2q^2}{3},    & \text{occurs}\ \frac{q^4-q^3-q^2+q}{2}\ \text{times}, \\
      \frac{2q(q-1)}{3}, & \text{occurs}\ q^3-q\ \text{times},                   \\
      \frac{2q(q-2)}{3}, & \text{occurs}\ \frac{q^4-q^3-q^2+q}{6}\ \text{times}.
    \end{cases}$$
  This completes the proof.
\end{proof}

\section{Conclusion and remarks}\label{section5}

In this paper, we investigated the differential spectrum, boomerang spectrum, and Walsh spectrum of the power function \( f(x) = x^{q+2} \) over the finite field \( \mathbb{F}_{q^2} \), where \( q = p^m \), \( p \) is an odd prime, and \( m \) is a positive integer. 

Firstly, we presented an alternative method for determining the differential spectrum of \( f \). Next, we analyzed the boomerang uniformity of \( f \) for the cases where \( q \equiv 1 \) or \( 3\ ({\rm{mod}}\ 6) \), excluding the specific case where \( p = 5 \) and \( m \) is even.

Finally, we determined the value distribution of the Walsh spectrum of \( f \) when \( p = 3 \), demonstrating that it is 4-valued. This result indicates that the power function \( f \) with \( p = 3 \) serves as a cryptographic function with a Walsh spectrum comprising only a few distinct values, which holds significant interest in the field of cryptography. Additionally, utilizing the obtained results, we determined the weight distribution of an associated cyclic code, establishing that it is a 4-weight code.

Throughout this article, we presented refined algebraic techniques for exploring the cryptographic analysis of functions over finite fields. These techniques can help explore new avenues, particularly in investigating the boomerang uniformity of a broader family of functions.

\bibliographystyle{elsarticle-harv}
\bibliography{refs_new}



\end{document}